\newif\iflong\longfalse
\title[Modular coinduction up-to for H.O. languages via
  F.O. transition systems]
{Modular coinduction up-to for higher-order languages via
  first-order transition systems}
\author[J.-M.~Madiot]{Jean-Marie Madiot}
\address{INRIA, France}
\author[D.~Pous]{Damien Pous}
\address{Plume team, LIP, CNRS, ENS Lyon, Universit{\'e} de Lyon, France}
\author[D.~Sangiorgi]{Davide Sangiorgi}
\address{Universit{\`a} di Bologna, Italy; INRIA, France}
\theoremstyle{definition}
\newtheorem{lemma}[thm]{Lemma}
\newtheorem{example}[thm]{Example}
\newtheorem{theorem}[thm]{Theorem}
\newtheorem{corollary}[thm]{Corollary}
\newtheorem{definition}[thm]{Definition}
\theoremstyle{remark}
\newtheorem{remark}[thm]{Remark}
\newcommand{\simE}{\mathrel{\sim^{\rm{e}}}}
\newcommand{\wbE}{\mathrel{\approx^{\rm{e}}}}
\newcommand{\wb}{\approx}
\newcommand{\st}{\, \mbox{s.t.} \,}
\newcommand{\Dwa}{\Downarrow}           
\newcommand{\dwa}{\downarrow}           
\def\res#1{{\boldsymbol \nu} #1\:}   
\newcommand{\holei}[1]{[\cdot]_{#1}}  
\newcommand{\LaI}{\Lambda{R}}
\newcommand{\LaIu}{\LaI^1}
\newcommand{\LaN}{\Lambda{N}}
\newcommand{\LaNu}{\LaN^1}
\newcommand{\simEBv}[1]{\mathrel{{\approx_{#1}}}}
\newcommand{\starred}[1]{\mathrel{#1^\star}}
\newcommand{\simEB}{\approx}
\def\X{{\mathcal X}}
\def\Y{{\mathcal Y}}
\newcommand{\Xv}[1]{\mathrel{{\X_{#1}}}}
\newcommand{\contexthole}{ [ \cdot  ] }      
\newcommand{\holE}{\contexthole}  
\newcommand{\rn}[1]{%
  \ifmmode 
    \mathchoice
      {\mbox{\sc #1}}
      {\mbox{\sc #1}}
      {\mbox{\small\sc #1}}
      {\mbox{\tiny\uppercase{#1}}}%
  \else
    {\sc #1}%
  \fi}
\newcommand{\Lao}{\Lambda^0}  
\def\sub#1#2{\{\raisebox{.5ex}{\small$#1$}\! / \!\mbox{\small$#2$}\}}
\def\defi{\eqdef}
\newcommand{\chain}{\raisebox{.9ex}[0ex][0ex]{\mbox{\footnotesize$\frown$}}}
\def\RR{\mathrel{\mathcal R}} 
\def\S{{\mathcal S}}          
\def\SS{\mathrel{\mathcal S}} 
\def\EE{{\mathcal E}} 
\newcommand{\arrr}[1]{\mathrel{\stackrel{{\;\;#1\;\;}}{\mbox{\rightarrowfill}}}}
\newcommand{\Act}{\mbox{\it Act}} 
\newcommand{\pr}{\mbox{\it Pr}} 
\newcommand{\arrpi}[1]{\:\stackrel{{#1}}{\longrightarrow}_\pi\:}
\renewcommand{\simEBv}[1]{\mathrel{{\approx^\textsf{env}_{#1}}}}
\renewcommand{\simEB}{\approx^\textsf{env}}
\newcommand{\arr}{\longrightarrow} 
\newcommand{\warr}{\Longrightarrow}
\newcommand{\rarr}{\longleftarrow} 
\newcommand{\arrT}{\longmapsto} 
\newcommand{\warrT}{\Longmapsto}
\newcommand{\ts}[1]{\mathrel{\stackrel{#1}{\arr}}}           
\newcommand{\ws}[1]{\mathrel{\stackrel{#1}{\warr}}}            
\newcommand{\rts}[1]{\mathrel{\stackrel{#1}{\rarr}}}           
\newcommand{\tpi}[1]{\mathrel{\stackrel{#1}{\arrT}}_{\pi}}      
\newcommand{\wpi}[1]{\mathrel{\stackrel{#1}{\warrT}}_{\pi}}     
\renewcommand{\arrpi}[1]{\:\stackrel{{#1}}{\arrT}_\pi\:}
\newcommand{\eqdef}{\triangleq}
\newcommand{\refeq}{\equiv} 
\newcommand{\names}[1]{\ensuremath{\mathrm{n}(#1)}}
\newcommand{\fnames}[1]{\ensuremath{\mathrm{fn}(#1)}}
\newcommand{\bnames}[1]{\ensuremath{\mathrm{bn}(#1)}}
\newcommand{\new}{ν}
\newcommand{\out}[1]{\ensuremath{\overline{#1}}}
\newcommand{\subst}[2]{\ensuremath{\{#1/#2\}}}
\newcommand{\cbn}{\arrT_{\mathsf{n}}}
\newcommand{\icbv}{\arrT_{\mathsf{R}}}
\newcommand{\wcbn}{\warrT_{\mathsf{n}}}
\newcommand{\wicbv}{\warrT_{\mathsf{R}}}
\newcommand{\vcontexts}{\mathbb C_v}
\newcommand{\dom}[1]{\mathsf{dom}(#1)}
\newcommand{\newloc}{νℓ\,}
\newcommand{\cod}{\ensuremath{{\tt cod}}}
\newcommand*{\sep}{;}
\newcommand*{\get}[1]{\mathsf{get}_{#1}}%
\newcommand*{\set}[1]{\mathsf{set}_{#1}}%
\newcommand*{\getset}[1]{\mathsf{getset}(#1)}%
\newcommand*{\lget}[1]{\mathsf{get}_{#1}}%
\newcommand*{\lset}[1]{\mathsf{set}_{#1}}%
\newcommand{\inc}{\textsf{incr}_ℓ}
\newcommand{\tes}{\textsf{test}_ℓ}
\newcommand{\tesp}{\textsf{test}_ℓ}
\newcommand{\tl}{\tilde{ℓ}}%
\newcommand{\incz}{\textsf{incr}_0}%
\newcommand{\tesz}{\textsf{test}_0}%
\newcommand{\R}{\mathrel{\mathcal R}}
\renewcommand{\S}{\mathrel{\mathcal S}}
\newcommand*{\relprogress}[1]{\stackbin[#1]{}{\leadsto}}
\newcommand*{\relprogressfootnote}[1]{\stackbin[{}^{#1}]{}{\leadsto}}
\newcommand{\funprogress}[1]{\stackrel{#1}{\leadsto}} 
\newcommand*{\cofix}[1]{\nu(#1)}
\newcommand{\bb}{{\bf{p}}}
\newcommand{\strongb}{\ensuremath{\mathbf{sp}}}
\newcommand{\weakb}{\ensuremath{\mathbf{wp}}}
\newcommand{\expb}{\ensuremath{\mathbf{ep}}}
\newcommand{\uptotrans}{\textsf{star}}
\newcommand{\uptoexp}{\ensuremath{f_{≳}}}
\newcommand{\isubst}{\ensuremath{\mathsf{isub}}} 
\newcommand{\bsubst}{\ensuremath{\mathsf{bsub}}} 
\newcommand{\str}{\ensuremath{\mathsf{str}}}
\newcommand{\icout}{\mathcal{C}_o}
\newcommand{\icnew}{\mathcal{C}_{ν}}
\newcommand{\icbang}{\mathcal{C}_{!}}
\newcommand{\icpar}{\mathcal{C}_{∣}}
\newcommand{\icplus}{\mathcal{C}_{+}}
\newcommand{\icgplus}{\mathcal{C}_{g+}}
\newcommand{\icin}{\mathcal{C}_i}
\newcommand{\env}{\mathsf{str}}
\newcommand{\term}{{𝓒}}
\newcommand{\eval}{{𝓒_{\textsf{e}}}}
\newcommand{\alloc}{\mathsf{store}}
\newcommand{\weak}{\mathsf{w}}
\newcommand{\id}{\mathsf{id}}
\newcommand{\arrR}[1]{\mathrel{\stackrel{{\;\;#1\;\;}}{\mbox{\rightarrowfill}}}}
\newcommand{\HIDEPROOF}[1]{}
\definecolor{blu}{rgb}{.4,.4,1}
\begin{document}

\maketitle

\begin{abstract}
  The bisimulation proof method can be enhanced by employing
  \emph{`bisimulations up-to'} techniques.  A comprehensive theory of
  such enhancements has been developed for first-order (i.e.,
  CCS-like) labelled transition systems (LTSs) and bisimilarity, based
  on abstract fixed-point theory and compatible functions.
  
  We transport this theory onto languages whose bisimilarity and LTS
  go beyond those of first-order models.  The approach consists in
  exhibiting fully abstract translations of the more sophisticated
  LTSs and bisimilarities onto the first-order ones.  This allows us
  to reuse directly the large corpus of up-to techniques that are
  available on first-order LTSs.  The only ingredient that has to be
  manually supplied is the compatibility of basic up-to techniques
  that are specific to the new languages.  We investigate the method
  on the $\pi$-calculus, the $\lambda$-calculus, and a (call-by-value)
  $\lambda$-calculus with references.
\end{abstract}

\section{Introduction}
\label{s:intro}

One of the keys for the success of bisimulation is its associated
proof method, whereby to prove two terms equivalent, one exhibits a
relation containing the pair and one proves it to be a bisimulation.
The bisimulation proof method can be enhanced by employing relations
called \emph{`bisimulations up-to'} \cite{San98MFCS,Len98,SanPous,RotBR13}; see 
\cite{PousS19} for a historical perspective.
These need not be bisimulations; they are simply \emph{contained in} a
bisimulation.  Such techniques have been widely used in languages for
mobility such as $\pi$-calculus or higher-order languages such as the
$\lambda$-calculus, or Ambients (e.g.,
\cite{Lassen98relationalreasoning,MZ05,SW01}).

Several forms of bisimulation enhancements have been introduced:
`bisimulation up to bisimilarity' \cite{Mil89} where the derivatives
obtained when playing bisimulation games can be rewritten using
bisimilarity itself; `bisimulation up to transitivity' where the
derivatives may be rewritten using the up-to relation \cite{San98MFCS}; `bisimulation
up to context' \cite{Sangiorgi94}, where a common context may be removed
from matching derivatives.  Further enhancements may exploit the
peculiarities of the definition of bisimilarity on certain classes of
languages: e.g., the up-to-injective-substitution techniques of the
$\pi$-calculus~\cite{JeffreyR99,SW01}, techniques for shrinking or
enlarging the environment in languages with information hiding
mechanisms (e.g., existential types, encryption and decryption
constructs~\cite{AbadiG98,SumiiP07:seal,SumiiP07:type}), frame
equivalence in the psi-calculi~\cite{Parrow}, or higher-order
languages~\cite{KoutavasW06,Las98}. Lastly, it is important to notice
that one often wishes to use \emph{combinations} of up-to
techniques. For instance, up-to-context alone does not appear to be
very useful; its strength comes out in association with other
techniques, such as up-to-bisimilarity or up-to-transitivity.

The main problem with up-to techniques is proving their soundness
(i.e.\ ensuring that any `bisimulation up-to' is contained in
bisimilarity).  In particular, the proofs of complex combinations of
techniques can be difficult or, at best, long and tedious. Moreover, if one
modifies the language or the up-to technique, the entire proof has to
be redone from scratch.  Indeed the soundness of some up-to techniques
is quite fragile, and may break when such variations are made.  For
instance, up-to-bisimilarity usually fails for weak bisimilarity, and
in certain languages the combination of up-to-bisimilarity and
up-to-context fails while the two techniques are sound when taken
separately.

This problem has been the motivation for the development of a theory
of enhancements, summarised in \cite{SanPous}. Expressed in the
general fixed-point theory on complete lattices, this theory has been
fully developed for both strong and weak bisimilarity, in the case of
first-order labelled transition systems (LTSs) where transitions
represent pure synchronisations among processes.  In this framework,
up-to techniques are represented using {\em compatible} functions,
whose class enjoys nice algebraic properties (an earlier variant, with similar properties,
is that of
\emph{respectful}  functions \cite{San98MFCS}).  This allows one to
derive complex up-to techniques algebraically, by composing simpler
techniques by means of a few operators.

Only a small part of the theory has been transported onto other forms
of transition systems, on a case by case basis.  \iflong , for
instance those needed in the $\pi$-calculus, or in higher-order
formalisms.  These pieces have usually developed in a `by-need'
fashion: one has concrete bisimilarity proofs to carry out, and
ensures the soundness of the up-to techniques employed in such proofs.
\fi Transferring the whole theory \iflong of bisimulation enhancements
outside the realm of first-order transition systems \fi would be a
substantial and non-trivial effort.  Moreover it might have limited
applicability, as this work would probably have to be based on
specific shapes for transitions and bisimilarity (a wide range of
variations exist, e.g., in higher-order languages).

Here we explore a different approach to the transport of the theory of
bisimulation enhancements onto richer languages.  The approach
consists in exhibiting fully abstract translations of the more
sophisticated LTSs and bisimilarities onto first-order LTSs and
bisimilarity. This allows us to import directly the existing theory
for first-order bisimulation enhancements onto the new languages. Most
importantly, the schema allows us to combine up-to techniques for the
richer languages.  The only additional ingredient that has to be
provided manually is the soundness of some up-to techniques that are
specific to the new languages. This typically includes the
up-to-context techniques, since those contexts are not first-order.

Our hope is that the method proposed here will make it possible to
obtain a single formalised library about up-to techniques, that can be
reused for a wide range of calculi: currently, all existing
formalisations of such techniques in a proof assistant are specific to
a given calculus: $\pi$-calculus~\cite{Miller,Hirschkoff97}, the
psi-calculi~\cite{Parrow}, or a miniML language~\cite{HurNDV13}.

We consider three languages in this paper: the $\pi$-calculus, the
call-by-name $\lambda$-calculus, and an imperative call-by-value
$\lambda$-calculus (a call-by-value $\lambda$-calculus with
references).
We focus on weak bisimilarity, whose theory is more involved than that
of strong bisimilarity: some the congruence properties break or
require subtle proofs, and these differences between the strong and
weak case are magnified when it comes to enhancements of the
bisimulation proof method.

When we translate a transition system into a first-order
one, the grammar for the transition labels can be complex (e.g.\
include terms, labels, or contexts). What nevertheless makes these
systems `first-order' is that these labels are taken as syntactic
atomic objects, that may only be checked for syntactic equality. In
other words, the bisimulation games which we play on those first-order
LTS are just the plain and standard ones, without any side conditions
on free names, alpha-conversion, or semantical comparison of
higher-order values.
Leifer and Milner~\cite{LeiferMilner00}, using 
contexts as  first-order
labels, derive, from any
appropriate reactive system, a first-order LTS for which strong
bisimilarity is a congruence; this approach however does not handle
weak bisimilarity or up-to techniques.

Full abstraction of a translation does not imply
that all desirable or expected up-to techniques come for free: sometimes the translations 
 have to be
designed with care to be useful. We shall
see this with the $\pi$-calculus, where \emph{early bisimilarity} can
be handled properly, but where the natural and fully abstract
adaptation of the translation for \emph{late bisimilarity} does not
provide us with satisfactory up-to techniques (see
Remark~\ref{rem:early:late}).
In the same manner, our translation for the $\lambda$-calculus has similarities with
the first-order translation of environmental bisimulation
in~\cite{KoutavasLS11}, but the latter, while fully
abstract, would disallow important up-to techniques (see
Remark~\ref{r:cancellation}).

Forms of up-to-context have already been derived for the languages we
consider in this
paper~\cite{Lassen98relationalreasoning,envbisim,SW01}. The
corresponding soundness proofs are difficult (especially in
$\lambda$-calculi), and require a mix of induction (on contexts) and
coinduction (to define bisimulations).  Recasting up-to-context within
the theory of bisimulation enhancements has several advantages. First,
this allows us to combine this technique with other techniques,
directly. Second, congruence (or substitutivity) of bisimilarity
becomes a corollary of the compatibility of the up-to-context function
(in higher-order languages these two kinds of proofs are usually hard
and very similar).  And third,
this
allows us to decompose the up-to-context function into
smaller pieces, essentially one for each operator of the language,
yielding more modular proofs, also allowing, if needed, to rule out
those contexts that do not preserve bisimilarity (e.g., input prefix
in the $\pi$-calculus).

The translation of the $\pi$-calculus LTS into a first-order LTS
follows the schema of abstract machines for the $\pi$-calculus (e.g.,
\cite{Tur96}) in which the issue of the choice of fresh names is
resolved by ordering the names and indexing the processes with a name
that represents an upper bound to the names occurring in the process.
Various forms of bisimulation enhancements have appeared in papers on
the $\pi$-calculus or dialects of it. A translation of higher-order
$\pi$-calculi into first-order processes has been proposed by Koutavas
et al.~\cite{KoutavasHennessy12}. While the shape of our translations
of $\lambda$-calculi is similar, our LTSs differ since they are
designed to recover the theory of bisimulation enhancements. In
particular, using the fully abstract LTSs
from~\cite{KoutavasHennessy12}, does not make it possible to use up-to
context techniques (essentially by the same argument as the one in
Remark~\ref{r:cancellation}).
In the $\lambda$-calculus, limited forms
of up-to techniques have been developed for applicative bisimilarity,
where the soundness of up-to-context is still an open
problem~\cite{Lassen98relationalreasoning,Las98}. More powerful
versions of up-to-context exist for forms of bisimilarity on open
terms; e.g., open bisimilarity or head-normal-form
bisimilarity~\cite{Lassen99}.  Currently, the form of bisimilarity
for closed higher-order terms that allows the richest range of up-to
techniques is environmental
bisimilarity~\cite{KoutavasLS11,envbisim}. However, even in this
setting, the proofs of combinations of up-to techniques are usually
long and non-trivial.
Our translation of higher-order terms to first-order terms is designed
to recover environmental bisimilarity and to simplify the tasks of
proving up-to techniques and of combining them.

Open bisimilarity or normal-form bisimilarities have also been used to
avoid quantification over function
arguments~\cite{San94,Lassen98relationalreasoning,KoutavasW06}, and
powerful and compositional approaches to up-to techniques have been
developed in such settings~\cite{BiernackiLP19,BiernackiLP20}. Yet the
resulting LTSs still manipulate binders and require fresh
instantiations of them, much in the same way as the $\pi$-calculus does; as
such, as discusssed above, the theory of first-order LTSs is not
directly applicable.

This paper is an extended version of the conference
paper~\cite{MADIOTPS14}, with all proofs. That paper introduced a
notion `compatibility up-to' in order to assemble several up-to
techniques whose soundness depends on each other, in a `mutually
coinductive' way. This led to the development of the notion of
``companion''~\cite{PousCompanion}, reviewed in Section~\ref{s:back},
as a cleaner way of doing the same thing and more. We rewrote the
current paper to make use of the companion in place of `compatibility
up-to'. This also allowed us to established novel results for
additional proof refactoring (Lemmas~\ref{l:env:eval:values}
and~\ref{l:val:nonval}).

In Section~\ref{s:back} we review the theory of first-order
bisimulation and up-to techniques. In Sections~\ref{s:pi} to
\ref{s:icbv} we treat the $\pi$-calculus, the (pure) call-by-name
$\lambda$-calculus, and the imperative call-by-value
$\lambda$-calculus, respectively.  In Section~\ref{ss:exa}, we show an example of
how the wide spectrum of up-to techniques made available via our
translations allows us to simplify relations needed in bisimilarity
proofs, facilitating their description and reducing their size.

\subsection*{Notations}
We let $\R$, $\S$ range over binary relations and we often write
$x\R y$ for $(x,y)\in{\R}$. Given two relations $\R,\S$, we write
$\R\S$ for their relational composition, i.e.,
${\R\S}\eqdef\{(x,z)\mid \exists y, x\R y \land y\S z\}$, $\R^+$ for
the transitive closure of $\R$, and $\R^*$ for its reflexive
transitive closure. We use the standard arrow notation $\mapsto$ to
denote functions when the domain is clear, and $⇒$ for logical
implication; other arrow notations will be introduced as we go along.

In languages defined from a grammar, a \emph{context} $C$ of arity
$n \in \mathbb N$ is a term with numbered holes $\holei 1$, …,
$\holei n$, where each hole $\holei i$ can appear any number of times in
$C$. We write $C[P_1,\dots,P_n]$ for the application of such a context
to $n$ terms $P_1,\dots,P_n$ of the language.

\section{First-order bisimulation and up-to techniques}
\label{s:back}

As explained in the introduction, the results in this section are not
new:  we review general-purpose tools that we exploit to prove
soundness of up-to techniques. These tools were obtained in several
steps:  respectfulness is from~\cite{San98MFCS}; we refined it into the notion of
\emph{compatibility up-to} in the conference version of this
paper~\cite{MADIOTPS14}, and this refinement eventually led to the
notion of the companion~\cite{PousCompanion} and to the associated tools
we exploit here.

A \emph{first-order Labelled Transition System}, briefly LTS, is a
triple {$(\pr, \Act, \longrightarrow ) $} where $\pr$ is a non-empty
set of states (or processes), $\Act$ is the set of {\em actions} (or
\emph{labels}), and $\mathord{\longrightarrow} \subseteq \pr \times\Act\times
\pr $ is the \emph{transition relation}.  We use $P,Q, R$ to range
over the processes of the LTS, and $\mu$ to range over the labels in
$\Act$, and, as usual, write ${P} \arrr\mu Q$ when ${(P, \mu, Q)} \in
{\longrightarrow} $.  We assume that $\Act$ includes a special action
$\tau$ that represents an internal activity of the processes.  We
derive bisimulation from the notion of \emph{progression} between
relations.

\begin{definition}\label{d:strongprogression}
  We define the monotone function $\strongb$ on relations on processes
  of an LTS:  
  \begin{align*}
    \strongb(𝓡) ≜ 
    \{ (P,Q) \mid~~
    &\left(\forall P' ~~\forall μ ~{~}~ P \ts{μ} P' ~~⇒~~ \exists Q' ~~ Q\ts{μ}Q' \wedge P' \R Q'\right)\\
    \wedge
    &\left(\forall Q' ~~\forall μ ~{~}~ Q \ts{μ} Q' ~~⇒~~ \exists P' ~~ P\ts{μ}P' \wedge P' \R Q'\right)
      \}
  \end{align*}
  We say that $𝓡$ \emph{strongly progresses to} $𝓢$, written
  $𝓡\relprogress{\strongb} 𝓢$, if ${\R} \subseteq \strongb(\S)$. A
  relation $𝓡$ is a \emph{strong bisimulation} if
  $𝓡\relprogress{\strongb}𝓡$; and \emph{strong bisimilarity},
  $\sim$, is the union of all strong bisimulations.
\end{definition}


\iflong
Alternatively, we can define the \emph{strong progression operator}
$\strongb$ on relations: ${\strongb}(𝓢)$ is the union of all $𝓡$ such
that $𝓡\relprogress{\strongb} 𝓢$.  Then, strong bisimilarity is the
greatest fixed-point of ${\strongb}$, which we write ${\sim} \eqdef
\cofix{\strongb}$.  From this, the fact that $\sim$ is a strong
bisimulation follows from the Knaster-Tarski theorem.
\fi


To define weak progression we need weak transitions, defined as usual:
first, $\smash{P \ts{\hat{μ}}P'}$ means $P \arrr\mu P'$ or ($\mu =\tau$
and $P=P'$); and $\ws{\hat{μ}} $ is $\Longrightarrow \ts{\hat{μ}}
\Longrightarrow$ where $\Longrightarrow$ is the reflexive transitive
closure of $\ts{τ}$.

\begin{definition}\label{d:weakprogression}
  We define the monotone function $\weakb$ on relations on processes
  of an LTS:
  \begin{align*}
    \weakb(𝓡) ≜ 
    \{ (P,Q) \mid~~
      &\left(\forall P' ~~\forall μ ~{~}~ P \ts{μ} P' ~~⇒~~ \exists Q' ~~ Q\ws{\hat{μ}}Q' \wedge P' \R Q'\right)\\
      \wedge
      &\left(\forall Q' ~~\forall μ ~{~}~ Q \ts{μ} Q' ~~⇒~~ \exists P' ~~ P\ws{\hat{μ}}P' \wedge P' \R Q'\right)
    \}
  \end{align*}
  We say that $𝓡$ \emph{weakly progresses to} $𝓢$, written
  $𝓡\relprogress{\weakb} 𝓢$, if ${\R} \subseteq \weakb(\S)$. A
  relation $𝓡$ is a \emph{weak bisimulation} if
  $𝓡\relprogress{\weakb}𝓡$; and \emph{weak bisimilarity}, $\wb$, is
  the union of all weak bisimulations.
\end{definition}


Below we summarise the ingredients of the theory of bisimulation
enhancements for first-order LTSs from~\cite{SanPous} that will be
needed in the sequel. We use $f$ and $g$ to range over monotone
functions on relations over a fixed set of states. Each such function
represents a potential up-to technique; only the \emph{sound}
functions, however, qualify as up-to techniques:

\begin{definition}
  A function $f$ is \emph{sound for $\sim$} if $𝓡\relprogress{\strongb}
  f(𝓡)$ implies $𝓡⊆ {\sim}$, for all $𝓡$; similarly, $f$ is
  \emph{sound for $\approx$} if $𝓡\relprogress{\weakb} f(𝓡)$ implies $𝓡⊆
  {\approx}$, for all $𝓡$.
\end{definition}

Unfortunately, the class of sound functions does not enjoy good
algebraic properties. In particular, the composition and the pairwise
union of two sound functions are not necessarily sound~\cite[Section
6.3.3]{SanPous}.
\iflong
i.e., it is unclear how to define operators on functions that, when
applied sound functions returns new (and more sophisticated) sound
functions.  For instance, sound functions are not closed under
composition.
\fi
As a remedy to this, the subset of \emph{compatible} functions has
been proposed.
\iflong
This is a strict subset of the sound functions; however, the class is
expressive enough to include all common up-to techniques, and enjoys
pleasant algebraic properties.
\fi
The concepts in the remainder of the section can be instantiated
with both strong and weak bisimilarities; we thus use $\bb$ to range
over ${\strongb}$ or ${\weakb}$.

\begin{definition}
  We write $f\funprogress{\bb} g$ when $f\circ \bb \subseteq \bb \circ g$.
  A monotone function $f$ on relations is \emph{$\bb$-compatible} if
  $f \funprogress{\bb} f$.
\end{definition}

In other terms, $f\funprogress{\bb} g$ when $𝓡\relprogress{\bb} 𝓢$
implies $f(𝓡)\relprogress{\bb} g(𝓢)$ for all $𝓡$ and $𝓢$.


\begin{lemma}
  \label{l:compatible:sound}
  If $f$ is $\strongb$-compatible, then $f$ is sound for $\sim$;
  if $f$ is $\weakb$-compatible, then $f$ is sound for $\approx$.
\end{lemma}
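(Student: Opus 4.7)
The plan is to prove both claims uniformly: with $\bb$ standing for either $\strongb$ or $\weakb$, and writing $\nu(\bb)$ for the corresponding bisimilarity (which equals $\sim$ or $\wb$ as the greatest fixed point of the progression operator), I would show that if $f$ is $\bb$-compatible and $\R \relprogress{\bb} f(\R)$, then $\R \subseteq \nu(\bb)$. Unfolding the notation, the hypotheses become $f \circ \bb \subseteq \bb \circ f$ and $\R \subseteq \bb(f(\R))$. The natural candidate bisimulation containing $\R$ is obtained by iterating the up-to technique: set $\R^{\omega} \eqdef \bigcup_{n \geq 0} f^n(\R)$. Since $\R = f^0(\R) \subseteq \R^{\omega}$, it suffices to prove that $\R^{\omega}$ is a (strong or weak) bisimulation, i.e.\ $\R^{\omega} \subseteq \bb(\R^{\omega})$.

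The core of the argument is an induction on $n$ establishing $f^n(\R) \subseteq \bb(f^{n+1}(\R))$. The base case $n = 0$ is exactly the hypothesis $\R \subseteq \bb(f(\R))$. For the inductive step, applying $f$ (monotone) to the inductive hypothesis yields $f^{n+1}(\R) \subseteq f(\bb(f^{n+1}(\R)))$, and then $\bb$-compatibility, applied to the set $f^{n+1}(\R)$, gives $f(\bb(f^{n+1}(\R))) \subseteq \bb(f(f^{n+1}(\R))) = \bb(f^{n+2}(\R))$. Chaining these inclusions closes the induction.

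To conclude, observe that $f^{n+1}(\R) \subseteq \R^{\omega}$, so by monotonicity of $\bb$ we get $f^n(\R) \subseteq \bb(f^{n+1}(\R)) \subseteq \bb(\R^{\omega})$ for every $n$; taking the union on the left gives $\R^{\omega} \subseteq \bb(\R^{\omega})$. Hence $\R^{\omega}$ is a bisimulation in the sense of Definition~\ref{d:strongprogression} or~\ref{d:weakprogression}, which gives $\R^{\omega} \subseteq \nu(\bb)$, and in particular $\R \subseteq \nu(\bb)$, proving soundness.

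There is no real obstacle: the proof uses only monotonicity of $f$ and $\bb$, together with the single commuting inequality $f \circ \bb \subseteq \bb \circ f$ that defines compatibility. The mildly non-obvious step is recognising that the iterated relation $\R^{\omega}$, rather than $\R$ or $f(\R)$ alone, is the correct witness; this is precisely the reason compatibility is defined as a commutation property rather than merely a one-step progression, since it is exactly what allows the induction on $n$ to go through uniformly for both the strong and the weak case.
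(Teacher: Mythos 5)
Your proof is correct and follows essentially the same route as the paper's: both take the iterated closure $f^\omega(\R)=\bigcup_n f^n(\R)$ as the witness bisimulation, the paper by first showing $f^\omega$ is itself compatible and you by the equivalent direct induction $f^n(\R)\subseteq \bb(f^{n+1}(\R))$.
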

\HIDEPROOF{
\begin{lemma}[Compatible is sound]
  If $f\funprogress{\bb} f$ and $𝓡\relprogress{\bb} f(𝓡)$ then ${\R}⊆\cofix{\bb}$.
\end{lemma}
\begin{proof}
  $f\funprogress{\bb} f⊆f^ω$ implies by induction $f^n \funprogress{\bb} f^ω$ which
  implies $f^ω \funprogress{\bb} f^ω$.  From that and $𝓡\relprogress{\bb} f(𝓡)$ we
  deduce $f^ω(𝓡)\relprogress{\bb} f^ω(f(𝓡))⊆f^ω(𝓡)$ so $f^ω(𝓡)⊆\cofix{s}$ and
  since $f$ is extensive, $𝓡⊆f^ω(𝓡)⊆\cofix{s}$.
\end{proof}
}
Simple examples of compatible functions are the identity function
$\id$ and the function mapping any relation onto bisimilarity (strong
or weak case, depending on the considered case). This means that
$(𝓡↦{\sim})$ is $\strongb$-compatible, and $(𝓡↦{\approx})$ is
$\weakb$-compatible. In addition, $(𝓡↦{\sim})$ is also a
useful $\weakb$-compatible function.
The class of
compatible functions is closed under function composition and union
(where the union $∪F$ of a set of functions $F$ is the point-wise
union mapping $\R$ to $\bigcup_{f∈F}f(𝓡)$), and thus under
$\omega$-iteration (where the $\omega$-iteration $f^ω$ of a function
$f$ maps $\R$ to $\bigcup_{n∈\mathbb N}f^n(𝓡)$).
For example  $(𝓡↦({𝓡}\cup{\sim}))$ is
$\strongb$- and $\weakb$-compatible.

Other examples of compatible functions are typically contextual
closure functions, or \emph{up-to-context}, mapping a relation into its
closure w.r.t.\ a given set of contexts. Not all context
closures are compatible: their compatibility must be established
separately for each LTS. For such functions, the following lemma shows
that the compatibility of up-to-context implies the congruence of
(strong or weak) bisimilarity.

\iflong
As a consequence of these algebraic properties, we can make use of
up-to techniques also in the proof of the compatibility of
functions. For this, the following lemma is particularly useful.
\fi

\begin{lemma}
  \label{l:compatible:closure}
  If $f$ is $\strongb$-compatible, then $f(\sim)⊆ {\sim}$; similarly
  if $f$ is $\weakb$-compatible, then $f(\approx)⊆ {\approx}$.
\end{lemma}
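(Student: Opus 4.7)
The plan is to derive this directly from the definition of compatibility and the fact that bisimilarity is itself a bisimulation, without going through the soundness lemma. I treat the strong case in detail; the weak case is identical after substituting $\weakb$ for $\strongb$ and $\approx$ for $\sim$.

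First I would observe that, by the very definition of strong bisimilarity as the union of all strong bisimulations (or equivalently as the greatest fixed-point of $\strongb$), the relation $\sim$ is itself a strong bisimulation, so $\sim \relprogress{\strongb} \sim$. This is the only concrete fact about $\sim$ that I need.

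Next I would apply compatibility. The definition $f \funprogress{\strongb} f$ unfolds (as noted in the paragraph following the definition) to the statement: for all $\R, \S$, if $\R \relprogress{\strongb} \S$ then $f(\R) \relprogress{\strongb} f(\S)$. Instantiating both $\R$ and $\S$ at $\sim$ yields $f(\sim) \relprogress{\strongb} f(\sim)$. But this is precisely the statement that $f(\sim)$ is a strong bisimulation. Since $\sim$ contains every strong bisimulation, we conclude $f(\sim) \subseteq {\sim}$, as required.

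The weak case follows by the same three-line argument with $\bb = \weakb$: $\approx$ is a weak bisimulation, so $\approx \relprogress{\weakb} \approx$; compatibility of $f$ gives $f(\approx) \relprogress{\weakb} f(\approx)$; hence $f(\approx)$ is a weak bisimulation and thus $f(\approx) \subseteq {\approx}$. There is no real obstacle here: the entire content of the lemma is the observation that compatibility specialised to the progression $\sim \relprogress{\strongb} \sim$ (respectively $\approx \relprogress{\weakb} \approx$) immediately produces a new bisimulation containing $f(\sim)$ (respectively $f(\approx)$).
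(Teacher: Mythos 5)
Your proof is correct and follows essentially the same route as the paper's own argument, which likewise observes that bisimilarity progresses to itself and then applies compatibility instantiated at $\sim$ (resp.\ $\approx$) to conclude that $f(\sim)$ (resp.\ $f(\approx)$) is itself a bisimulation. No gaps; the unfolding of $f \funprogress{\bb} f$ into the implication on progressions is exactly the reading the paper gives after the definition of compatibility.
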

\HIDEPROOF{
\begin{proof}
  Since $\cofix{\bb}\relprogress{\bb}\cofix{\bb}$,
  $f(\cofix{\bb})\relprogress{\bb} f(\cofix{\bb})$ hence $f(\cofix{\bb})$ is a
  simulation.
\end{proof}
}

\HIDEPROOF{
\begin{lemma}\label{l:soupe}
  If for all $j$, $f_j \funprogress{\bb} (\cup_if_i)^ω$ then $(\cup_if_i)^ω
  \funprogress{\bb} (\cup_if_i)^ω$.
\end{lemma}
\begin{proof}
  We write $F=\cup_if_i$. Since $f_j⊆F^ω$ we know $F \funprogress{} F^ω$ as
  $\funprogress{}$ preserves union.  We conclude as $\funprogress{}$ preserves
  iteration and $(F^ω)^ω = F^ω$.
\end{proof}
}

Certain closure properties for compatible functions however only hold
in the strong case. The main example is the \emph{chaining operator}
$\chain$, which implements pointwise relational composition:
\[ f \chain g \: (𝓡) \defi f(𝓡)\; g(𝓡) \] where the juxtaposition
$𝓡\;{\S}$ of two relations $𝓡$ and $\S$ and
denotes their relational composition.  Using chaining we can obtain
the compatibility of the `up-to-transitivity' function, mapping a
relation $\R$ onto its reflexive and transitive closure $\starred{𝓡}$.
Another important example of compatible function in the strong case is
`up-to-strong-bisimilarity' ($𝓡↦{\sim\R\sim}$), which is
also compatible in the weak case.

In contrast, the counterpart of this latter function in the weak case,
$𝓡↦{\approx\R\approx}$, is unsound. This is a major drawback in up-to
techniques for weak bisimilarity, which can be partially overcome by
resorting to the \emph{expansion} relation $≳$ \cite{A-KHe92,SaMi92}
(a refinement of expansion is the contraction relation
\cite{Sangiorgi17}). Expansion is an asymmetric refinement of weak
bisimilarity whereby $P ≳ Q $ holds if $P $ and $Q$ are bisimilar and,
in addition, $Q$ is at least as efficient as $P$, in the sense that
$Q$ is capable of producing the same activity as $P$ without ever
performing more internal activities (the $\tau$-actions). More
precisely, the associated progression function is $\expb$ defined
below, and $≳$ is the union of all $𝓡$ such that $𝓡 ⊆ \expb(𝓡)$.
\begin{align*}
  \expb(𝓡) ≜ 
  \{ (P,Q) \mid~~
  &\left(\forall P' ~~\forall μ ~{~}~ P \ts{μ} P' ~~⇒~~ \exists Q' ~~ Q\ts{\hat{μ}}Q' \wedge P' \R Q'\right)\\
  \wedge
  &\left(\forall Q' ~~\forall μ ~{~}~ Q \ts{μ} Q' ~~⇒~~ \exists P' ~~ P\ws{}\ts{\mu}\ws{}P' \wedge P' \R Q'\right)
    \}
\end{align*}
Up-to-expansion yields a function ($𝓡 ↦ {≳\R≲}$) that is contained in
a $\weakb$-compatible function, and which can be freely combined with
any $\weakb$-compatible function, yielding for instance the
`up-to-expansion-and-contexts' technique. More sophisticated up-to
techniques can be obtained by carefully adjusting the interplay
between visible and internal transitions, and by taking into account
termination hypotheses~\cite{SanPous}.

Some further compatible functions are the functions $\strongb$ and
$\weakb$ themselves (indeed a function $f$ is $\bb$-compatible if
$f \circ \bb \subseteq \bb \circ f $, hence trivially $f$ can be
replaced with $\bb$ itself). Intuitively, the use of $\strongb$ and
$\weakb$ as up-to techniques means that, in a diagram-chasing
argument, the two derivatives need not be related; it is sufficient
that the derivatives of such derivatives be related. Accordingly, we
sometimes call functions $\strongb$ and $\weakb$ \emph{unfolding}
functions. We will use $\strongb$ in the example in
Section~\ref{ss:exa} and $\weakb$ in Sections~\ref{s:cbn}
and~\ref{s:icbv}, when proving the $\weakb$-compatibility of the
up-to-context techniques. Note that up-to-context functions are the
only ones that need to be proved compatible separately for each LTS;
in this section all other functions mentioned as compatible are so for
every first-order LTS.

\subsection{Companion}

We say that $f$ is \emph{below} $g$, and write $f \subseteq g$, when
$f(\RR) \subseteq g(\RR)$ for every relation $\RR$. Any function below
a sound function is sound as well. Similarly, if $f\subseteq g$ and
$g(\sim)\subseteq {\sim}$ then $f(\sim)\subseteq {\sim}$.

In general a function below a compatible function need not be itself
compatible. However, it turns out that there is a largest compatible
function, which is called the \emph{companion}
of~$\bb$~\cite{PousCompanion}, defined as the pointwise union of all
$\bb$-compatible functions:
\[t_{\bb} ~~\eqdef \bigcup_{f\;\funprogress{\bb}\;f} f \]%
In the following, we  generally omit the subscript or superscript
$\bb$ when  clear from the context. Since $t$ is itself compatible we
can deduce from Lemmas~\ref{l:compatible:sound} and
\ref{l:compatible:closure} that
if
$f\subseteq t_\strongb$, then $f$ is sound for $\sim$ and
$f(\sim)\subseteq {\sim}$. 
Similarly in the weak case: if
$f\subseteq t_\weakb$ then $f$ is sound for $\approx$ and
$f(\approx) \subseteq {\approx}$. 

The identity function $\id$ and the function $\bb$ itself are below
$t$. The fact that function composition preserves compatibility is
reflected by the idempotence of $t$, i.e.\ $t \circ t = t$.  Since the
companion is idempotent and contains all compatible functions, every
bisimulation proof up to a certain combination of compatible functions
can be presented as a bisimulation up to the companion.  Although
this observation does not make such proofs fundamentally easier, it
slightly simplifies their presentation: the precise combination of
up-to techniques does not have to be made explicit. This is typically
extremely convenient in a proof assistant.

\subsection{Tools for validating up-to techniques}

The companion makes it possible to perform bisimulation proofs up to
arbitrary combinations of functions that are known to be below it. In
concrete languages, we thus have to prove that the functions
associated to up-to techniques such as up-to-context, are indeed below
the companion. By definition of the companion, given a function $f$,
an obvious way to prove $f\subseteq t$ consists in proving that $f$ is
compatible, i.e.\footnote{%
  The notation $\leadsto$ stands for $\funprogress{\bb}$ (on
  functions) and $\relprogressfootnote{\bb}$ (on relations); this overloading
  is explained below.
}, $f\leadsto f$. This is however quite restrictive in
practice, because many useful functions are not compatible by
themselves, they are only contained in a compatible function, which is
often hard to express explicitly. (Very much like bisimulation up-to,
which can be small and convenient to work with while the concrete
bisimulations lying over them can be large or hard to express.)

Seeing the companion as a coinductive object, one can in fact relax the
requirement $f\leadsto f$ ``$f$ is compatible'' into $f\leadsto F(f)$
``$f$ is compatible up to $F$'', where $F$ is a \emph{second order}
technique~\cite{pous:aplas07:clut}. For instance,
\begin{enumerate}
\item if $f \leadsto (f \cup g)$ for some $g \subseteq t$, then
  $h\eqdef f \cup t$ is compatible so that $f\subseteq t$. This means
  we can freely exploit a function $g$ already known to be below the
  companion when establishing a progression about $f$.
\item if $f \leadsto f^2$, then $f^\omega$ is compatible and
  contains $f$. This means we can use $f$ twice in a row when
  establishing a progression about $f$.  By a similar argument,
  $f \leadsto f^\omega$ also entails $f\subseteq t$, meaning we can
  actually use $f$ as many times as required.
\item for all sets $F$ of functions such that for all $f\in F$,
  $f \leadsto (\bigcup F \cup t)^\omega$, then
  $(\bigcup F \cup t)^{(\omega^\omega)}$ is compatible
  (where $h^{(\omega^\omega)}$ is $h ∪ h^ω ∪ (h^ω)^ω ∪ ((h^ω)^ω)^ω ∪ …$),
  so that all
  functions in $F$ are below the companion.  This intuitively makes it
  possible to reason by `mutual coinduction' in order to prove that a
  family of up-techniques is valid. 
\end{enumerate}
Leaving the companion aside, the last item above was in fact named
``compatibility up-to'' in the previous version of this
work~\cite{MADIOTPS14}. This idea was simplified
in~\cite{PousCompanion}, by defining the second-order function
$B(g) \eqdef \bigcup_{f\leadsto g} f$.  Indeed, the notation
$f \funprogress{\bb} g$, which was an apparent overloading of the
progression operator $\leadsto$, can now be seen as the regular
progression operator associated to the function $B$.

This function $B$ also has a companion written $T$, which is a
monotone function satisfying the following properties, for all
monotone functions $f$:
\begin{itemize}
\item if $f \leadsto T(f)$ then $f\subseteq t$;
\item $f \subseteq T(f)$, $t \subseteq T(f)$, and $T(f) = T(f)^2 = T(f)^\omega$.
\end{itemize}
The first point is just the fact that every function compatible up to
$T$ lies below the companion (like every bisimulation up to $t$ is
contained in bisimilarity). The second point tells us that given a
family $F$ of functions, $T(\cup F)$ actually contains all potential
combinations of functions in $F$ and functions below $t$.

The three examples of compatible functions up-to listed above can thus
be seen as particular instances of compatible functions up to $T$.  In
particular, the last item, which we will use repeatedly to prove that
up-to-context techniques are valid in the first-order LTSs we present,
can be generalised as follows:
\begin{enumerate}
\item[(3')] for all sets $F$ of functions such that for all $f\in F$,
  $f \leadsto T(\bigcup F)$, every function in $F$ is below $t$.
\end{enumerate}

\begin{remark}[On respectfulness]\label{remark:on:respectfulness}
  In the first modular treatment of up-to techniques for
  bisimilarity~\cite{San98MFCS,SW01}, the notion of \emph{respectful}
  function was used: a monotone function $f$ is respectful if for all
  $\R,\S$ such that ${\R}\subseteq{\S}$ and ${\R}\leadsto{\S}$, we
  have $f(\R)\leadsto f(\S)$. Every compatible function is respectful,
  but the converse is not true. The hypothesis ${\R}\subseteq{\S}$ was
  actually added in the definition of respectfulness to ease proofs
  about up-to-context, which typically lead to respectful functions
  that are not compatible. However, this difference between compatible
  and respectful functions  disappears when considering the
  companion: the largest compatible function and the largest
  respectful function coincide~\cite{PousCompanion}, so that focusing
  on the simpler notion of compatibility does not prevent us from
  using certain up-to techniques, in the end.

  In practice, proofs of up-to techniques based on respectfulness can
  be adapted to the compatibility setting as follows. Suppose we try
  to prove $f\leadsto T(f)$ for a specific function $f$, i.e., to
  prove that ${\R}\leadsto{\S}$ entails $f(\R)\leadsto T(f)(\S)$.  The
  missing assumption ${\R}\subseteq{\S}$ is in general useful for
  those cases where we obtain process derivatives related via $\R$
  rather than~$\S$. Respectfulness makes it possible to conclude
  directly in those cases, since
  \begin{align*}
    {\R}\subseteq {\S} \subseteq t(\S) \subseteq T(f)(\S)\enspace.
  \end{align*}
  With compatibility, we can use the up-to-unfolding technique: we have
  \begin{align*}
    {\R}\subseteq \bb (\S) \subseteq t(\S) \subseteq T(f)(\S)\enspace,
  \end{align*}
  where the first inclusion is just the assumption ${\R}\leadsto{\S}$.
\end{remark}

\section{The π-calculus}
\label{s:pi}

We let letters $a,b$ range over a set of \emph{names}.
We recall the syntax for $\pi$-calculus processes ($P$) and transition
labels ($\mu$):
\begin{align*}
  P& ::= 0 \: ∣ \: a(b).P\:  ∣ \: \out a b.P\: ∣\:  P|P\:  ∣ \: \res b
     P\:  ∣ \: {!} P\\
  μ &::= τ ∣ a b ∣ \out a b ∣ \out a (b) 
\end{align*}
The name $b$ is bound is $P$ in constructs $a(b).P$ and $\res b P$.
The early operational
semantics is described by the rules for $\tpi{}$ in
Figure~\ref{f:pi}. We write $\fnames{Q}$ for the free names in $Q$,
defined as usual. The names $\names{\mu}$ of $\mu$ are defined as
$\names{\out a b}≜\names{a b}≜\names{\out a(b)}≜\{a,b\}$ and
$\names{\tau}≜\emptyset$ and the bound names $\bnames{\mu}$ of $\mu$
are defined as
$\bnames{\out a b}≜\bnames{a b}≜\bnames{\tau}≜\emptyset$ and
$\bnames{\out a(b)}≜\{b\}$.
%

\begin{figure}[h]
  \begin{mathpar}
    \newcommand*{\aerate}{\phantom{b\!\!}}
    \inferrule*[lab=out]{ }{ \out a b.P \tpi{\out a b} P } \and
    \inferrule*[lab=inp]{ }{ a(b).P \tpi{a c\aerate} P\{c/b\} } \\
    \inferrule*[lab=open,right={$a≠b,c\not\in\fnames{\res b P}$}]{P\tpi{\out a b}P'}{ \res b P \tpi{\out a(c)} P'\{c/b\} } \and
    \inferrule*[lab=res,right={$a∉\names{μ}$}]{P\tpi{μ\aerate}P'}{ \res a P \tpi{μ\aerate} \res a P' } \and
    \inferrule*[lab=comm-l]{P\tpi{\out a b}P' \and Q \tpi{a b} Q'}{ P∣Q \tpi{τ} P'∣Q' } \and
    \inferrule*[lab=close-l,right={$b∉\fnames{Q}$}]{P\tpi{\out a(b)}P' \and Q \tpi{a b} Q'}{ P∣Q \tpi{τ} \res b (P'∣Q') } \\
    \inferrule*[lab=sum-l]{P\tpi{μ}P'}{ P+Q \tpi{μ} P' } \and
    \inferrule*[lab=par-l,right={$\bnames{μ}∩\fnames{Q}=∅$}]{P\tpi{μ}P'}{ P∣Q \tpi{μ} P'∣Q } \and
    \inferrule*[lab=rep]{P ∣ {!}P\tpi{μ}P'}{ {!}P \tpi{μ} P' } \and
  \end{mathpar}
  \caption[The $\pi$-calculus]{Operational semantics of the
    $\pi$-calculus \\(symmetric \textsc{-r} versions of
    \textsc{-l} rules are omitted)}
  \label{f:pi}
\end{figure}


Note that the conclusion of the \textsc{open} rule is instead
$\res b P \tpi{\out a(b)} P'$ in some presentations of the
$\pi$-calculus. Those presentations look simpler but rely on
$\alpha$-conversion of $b$ in $\res b P$. We choose here to be more
explicit.

We do not want to distinguish processes according to the identity of
the bound names they may extrude. This is why we need a specific
clause for bound outputs in the standard definition of bisimulation:
\begin{definition}
  \label{def:early:pi}
  A relation $\RR$ is a strong early bisimulation if, whenever $P \R Q$:
  \begin{enumerate}
  \item \label{bisimclause1} if $\smash{P\tpi{\out a(b)}P'}$ and
    $b∉\fnames{Q}$ then $\smash{Q\tpi{\out a(b)}Q'}$ for some $Q'$
    such that $P' \R Q'$,
  \item \label{bisimclause2}if $P\tpi{μ}P'$ and $μ$ is not a bound output, then
    $Q\tpi{μ}Q'$ for some $Q'$ such that $P' \R Q'$,
  \item the converse of (\ref{bisimclause1}) and (\ref{bisimclause2}), on $Q$.
  \end{enumerate}
  \emph{Early bisimilarity}, $\simE$, is the union of all early
  bisimulations. The weak version of early bisimilarity, \emph{weak
    early bisimilarity}, written $\wbE$, is obtained in the standard
  way: the transition ${Q\tpi{\out a(b)}Q'}$ in clause
  (\ref{bisimclause1}) is replaced by ${Q\wpi{\out a(b)}Q'}$; and
  similarly the transition $Q\tpi{μ}Q'$ in (\ref{bisimclause2}) is
  replaced by $Q \wpi{\hat\mu} Q'$.  The $\wpi{\cdot}$
  transitions are defined from $\tpi{\cdot}$ the same way the
  $\ws{\cdot}$ transitions were from $\ts{\cdot}$.
\end{definition}
When translating the $\pi$-calculus semantics to a first-order one,
the ad-hoc condition $b∉\fnames{Q}$ has to be removed.  To this end,
one has to force an agreement between two bisimilar processes on the
choice of the bound names appearing in transitions.
We obtain this by considering \emph{named processes} $(c,P)$ in which
$c$ is bigger or equal to all names in $P$.  For this to make sense we
assume an enumeration of the names and use $≤$ as the underlying
order, and $c+1$ for name following $c$ in the enumeration; for a set
of names $N$, we also write $c \geq N$ to mean $c \geq a$ for all
$a\in N$.

The rules below define the translation of the $\pi$-calculus
transition system to a first-order LTS.  In the first-order LTS, the
grammar for labels is the same as that of the original LTS; however,
for a named process $(c,P)$ the only name that may be exported in a
bound output is $c+1$; similarly only names that are below or equal to
$c+1$ may be imported in an input transition.  (Indeed, testing for
all fresh names $b>c$ is unnecessary, doing it only for one ($b=c+1$)
is enough.)  This makes it possible to use the ordinary definition of
bisimilarity for first-order LTSs, and thus recover the early
bisimilarity on the source terms.
\begin{mathpar}
  \inferrule*{P \tpi{τ} P'}{(c,P) \ts{τ} (c,P')} \and
  \inferrule*[right=$b≤c$]{P \tpi{a b} P'}{(c,P) \ts{a b} (c,P')} \and
  \inferrule*[right=$b≤c$]{P \tpi{\out a b} P'}{(c,P) \ts{\out a b} (c,P')} \\
  \inferrule*[right=${b=c+1}$]{P \tpi{a b} P'}{(c,P) \ts{a b} (b,P')} \and
  \inferrule*[right=${b=c+1}$]{P \tpi{\out a(b)} P'}{(c,P) \ts{\out a(b)} (b,P')} 
\end{mathpar}

We write $\pi^1$ for the first-order LTS derived from the above
translation of the $\pi$-calculus. Although the labels of the source
and target transitions have a similar shape, the LTS in $\pi^1$ is
first-order because labels are taken as purely syntactic,
uninterpreted objects.  We can also define $\pi^1$ using the
following two rules:
\begin{mathpar}
  \inferrule*{P \tpi{\mu} P'\and \names{\mu}\leq c\and{\bnames{\mu}=\emptyset}}{(c,P) \ts{\mu} (c,P')} \and
  \inferrule*{P \tpi{\mu} P'\and c+1\in\names{\mu}}{(c,P) \ts{\mu} (c+1,P')} \and
\end{mathpar}
This characterisation is less explicit but sometimes more convenient
in proofs, and it might give an insight on to how to derive
translations for other name-based calculi by keeping track of new
names and of binding labels.

We will show that the standard notions strong and weak early
bisimilarity for the $\pi$-calculus ($\simE$ and $\wbE$ from
Definition~\ref{def:early:pi}) correspond to $\sim$ and $\wb$ in
$\pi^1$. Proving soundness, i.e., bisimilarity in $\pi^1$ entails
bisimilarity in $\pi$, requires us to establish first that
bisimilarity in $\pi^1$ is stable under injective substitutions.
Anticipating that we also want to propose various up-to techniques for
$\pi^1$, we show directly that the corresponding
up-to-injective-substitutions technique is below the companion. It
follows that bisimilarity in $\pi^1$ is stable under injective
substitutions by Lemma~\ref{l:compatible:closure}, and the work is
done only once.

We define the following monotone functions on relations on $\pi^1$
processes:
\[\begin{array}{rll}
  \isubst(\R) &\defi \{ ((d,P σ), (d,Q σ)) \; \; &\st\;\;
                (c,P) \RR (c,Q),~\text{$σ$ injective on }\fnames{P} \cup \fnames{Q},\\
              &&\hspace{4cm}\text{and }\fnames {P σ} \cup \fnames {Q σ}≤d\}\\
  \bsubst(\R) &\defi \{ ((c,P σ), (c,Q σ)) \; \; &\st\;\;
                (c,P) \RR (c,Q),~\text{$σ$ bijective on }\{1\dots c\} \}\\
  \str(\R) &\defi  \{(  (d,P) ,  (d,Q)) ~\,&\st\;\; (c,P) \R
             (c,Q)\text{ and }\fnames{P,Q}≤d   \}\\
  \weak(\R)&\defi  \{((c+k,P), (c+k,Q)) \;\;&\st\;\; (c,P) \R (c,Q) ~,~ k ∈ ℕ \}
  \end{array}\]
The first one, \isubst, makes it possible to use injective
substitutions; the second one, \bsubst, is restricted to bijective
substitutions; the third one, \str, is a form of \emph{strengthening},
making it possible to readjust the bound $c$ on free names;
conversely, the last one, $\weak$, is a form of \emph{weakening}. The last two
functions are often useful as up-to techniques, by themselves.  The
point of the function \bsubst\ is that it makes it possible to obtain
\isubst\ as a derived technique: we have $\isubst = \bsubst ∘ \weak$, and
$\bsubst$ is slightly easier to analyse.

\begin{lemma}
  \label{l:isubstr}
  The functions $\isubst$, $\bsubst$, $\str$, and $\weak$ are all below
  the companion $t_{\strongb}$ and below the companion $t_{\weakb}$.
\end{lemma}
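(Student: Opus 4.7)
The plan is to apply the second-order compatibility framework (3') from Section~\ref{s:back}: a family $F$ of monotone functions lies uniformly below the companion $t$ as soon as each $f \in F$ satisfies $f \leadsto T(\bigcup F)$. Since the same case analysis works for $\bb = \strongb$ and $\bb = \weakb$ (it depends only on the label pattern, not on whether matches use single or weak transitions), both parts of the statement are handled together. I take $F = \{\str, \bsubst, \weak\}$, prove the required progressions for these three, and then recover $\isubst \subseteq t$ by the composition argument below.

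Decomposition of $\isubst$: given a pair $((d, P\sigma), (d, Q\sigma)) \in \isubst(\R)$ witnessed by $(c, P) \R (c, Q)$ and an injection $\sigma$ with $\fnames{P\sigma, Q\sigma} \leq d$, set $N = \max(c, d)$ and extend $\sigma\!\restriction_{\fnames{P,Q}}$ to a bijection $\tau$ on $\{1, \dots, N\}$ (a finite partial injection always extends to a bijection). Then $(N, P) \weak(\R) (N, Q)$, so $(N, P\sigma) = (N, P\tau) \bsubst(\weak(\R)) (N, Q\tau) = (N, Q\sigma)$, and finally $(d, P\sigma) \str(\bsubst(\weak(\R))) (d, Q\sigma)$. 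Thus $\isubst \subseteq \str \circ \bsubst \circ \weak$. Since $\str, \bsubst, \weak \subseteq \bigcup F \subseteq T(\bigcup F)$ and $T(\bigcup F)$ is idempotent under composition, we also have $\isubst \subseteq T(\bigcup F)$, which I will freely invoke on the right-hand side of the progressions for the three functions.

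The compatibility of $\bsubst$ (namely $\bsubst \leadsto \bsubst$) is obtained directly from the substitution-transition lemma for the early $\pi$-calculus: a transition $P\sigma \tpi{\mu} R$ with $\sigma$ bijective on $\{1, \dots, c\}$ is the $\hat\sigma$-image of some $P \tpi{\mu\hat\sigma^{-1}} R'$ with $R = R'\hat\sigma$, where $\hat\sigma$ extends $\sigma$ by identity on any freshly exported name. Bijectivity of $\sigma$ preserves the label-and-bound regime of the first-order rules, so matching via $\R$ at the un-substituted level and re-applying $\hat\sigma$ produces derivatives in $\bsubst(\S)$. The analysis for $\weak$ (and for $\str$) splits on the label $\mu$ of a transition $(n, P) \ts{\mu} (n', P')$ according to the naming regime relative to $c$ and $n$. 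Transitions with $\names{\mu} \leq c$ are mirrored verbatim at level $c$ and lifted back into $\weak(\R)$ (resp.\ $\str(\R)$). If $\mu$ mentions a free name $b$ strictly between $c$ and $n$, which can only happen for an early input label $ab$ (output labels satisfy $\{a,b\} \subseteq \fnames{P} \leq c$), we exploit $b \notin \fnames{P, Q}$ together with the $\alpha$-flexibility of the early input rule to derive the same source-level transition at bound $c$ using $c{+}1$ in place of $b$; matching via $\R$ at bound $c{+}1$ and then applying the injective substitution $c{+}1 \mapsto b$ produces derivatives in $\isubst(\R)$. The bound case $n{+}1 \in \names{\mu}$ is analogous, using the $\alpha$-flexibility of the early input and \textsc{open} rules to export $c{+}1$ instead of $n{+}1$ at bound $c$, then $\isubst$-renaming $c{+}1$ back to $n{+}1$. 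In each sub-case the derivatives lie in $\weak(\R) \cup \isubst(\R)$ (resp.\ $\str(\R) \cup \isubst(\R)$), hence in $T(\bigcup F)(\R)$.

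The main technical hurdle is this name-regime case split for $\weak$ and $\str$ — notably the strengthening sub-case $d < c$ for $\str$, where a bound-output at level $d$ must be matched at level $c$ via $\alpha$-conversion of the bound name to $c{+}1$, with the derivatives then recovered by $\isubst$, and where one must verify that the target free-name bound $d{+}1$ is not exceeded (which hinges on the $\str$ hypothesis $\fnames{P,Q} \leq d$). Once the standard source-level $\alpha$-conversion and fresh-input renaming lemmas are in place, and these bounds are tracked carefully, the statement follows mechanically from the companion toolbox.
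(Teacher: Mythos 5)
Your proof is correct and follows essentially the same route as the paper: establish the compatibility of $\bsubst$ outright, handle the fresh-name mismatches in the progressions for $\weak$ and $\str$ by renaming through substitution functions already known to be below the companion, and recover $\isubst$ as a composition of the others. The remaining differences are cosmetic --- the paper writes $\isubst = \bsubst \circ \weak$ and treats $\weak$ and $\str$ sequentially after $\bsubst$ rather than by mutual coinduction over the whole family (your extra trailing $\str$ in the decomposition is if anything slightly more careful about bounds $d$ not of the form $c+k$), and where your derivatives land in relations built on $\R$ rather than $\S$ you should finish with the standard step ${\R}\subseteq \bb(\S)\subseteq T(\bigcup F)(\S)$ of Remark~\ref{remark:on:respectfulness}.
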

\begin{proof}
  We first show that $\bsubst$ is compatible, i.e.,
  $\bsubst \funprogress{} \bsubst$. Let $\RR,{\SS}$ be two relations
  such that ${\RR}\relprogress{}{\SS}$, and let us prove
  $\bsubst(\RR)\relprogress{}\bsubst({\SS})$. For this, let
  $(Pσ,Qσ) \in \bsubst(\RR)$ for some $P,Q$ such that $(c,P)\R (c,Q)$
  with $σ$ some bijective substitution on $\{1\dots c\}$. From a transition
  $(c,Pσ) \ts{\mu} (c',P')$, because $Pσσ^{-1}=P$, we can transform it
  to $(c,P) \ts{\mu σ^{-1}} (c', P'σ^{-1})$. Then, thanks to
  $\RR\relprogress{}{\SS}$ with the $\mu σ^{-1}$ transition, there
  exists $Q'$ such that $(c,Q) \ts{\mu σ^{-1}} (c',Q')$ and
  $(c',P'σ^{-1}) \S (c',Q')$, which imply respectively
  $(c,Qσ) \ts{\mu} (c',Q'σ)$ and
  $(c',P'σ^{-1}σ) = (c',P') \mathrel{\bsubst({\SS})} (c',Q'σ)$. The
  argument used for $\strongb$ can also be used for $\weakb$.
  Thus $\bsubst$ is below $t$.
  
  Then we show $\weak \funprogress{} \bsubst ∘ \weak$ and
  $\str \funprogress{} \str ∘ \bsubst$, which is done using a similar
  diagram-chasing argument. Each newly created name is handled with a
  transposition using $\bsubst$, using the facts that $Q\tpi{\mu}Q'$
  implies $\fnames{Q'} \subseteq \fnames{Q} \cup \names{\mu}$, and
  that $\fnames{Q'\sigma} = \sigma(\fnames{Q'})$.

  Since $\bsubst\subseteq t$, we deduce $\weak \funprogress{} T(\weak)$ and
  $\str \funprogress{} T(\str)$, so that both $\weak$ and $\str$ are also
  below $t$. It follows that
  $\isubst = \bsubst ∘ \weak \subseteq t \circ t = t$.
\end{proof}
It follows by Lemma~\ref{l:compatible:closure} that bisimilarities in
$\pi^1$ are closed under injective substitution:
$\isubst(\sim)\subseteq{\sim}$ and
$\isubst(\approx)\subseteq{\approx}$.
We can now establish full abstraction between $\pi^1$ and early
bisimilarities:
\begin{theorem}
  \label{t:pi:corr} Assume $c≥\fnames{P} ∪ \fnames{Q}$. Then we have:
  $P \simE Q$ iff $(c,P)\sim (c,Q)$, and $P \wbE Q$ iff $(c,P)≈
  (c,Q)$.
\end{theorem}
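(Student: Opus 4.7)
The plan is to establish both directions of both equivalences by exhibiting natural candidate relations, then exploit the up-to-injective-substitution result of Lemma~\ref{l:isubstr} to handle the renaming of input values and extruded bound names.

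For the soundness direction $P \simE Q \Rightarrow (c,P) \sim (c,Q)$, I would show that $\mathcal R = \{((c,P),(c,Q)) \mid P \simE Q \text{ and } c \geq \fnames P \cup \fnames Q\}$ is a strong bisimulation in $\pi^1$. Given a transition $(c,P) \ts{\mu} (d,P')$, inspect which $\pi^1$ rule applied: when $\mu$ carries no bound name and its names lie below $c$, invoke the non-bound-output clause of $\simE$ directly; when $\mu$ is an input $ab$ or a bound output $\out a(b)$ with $b = c+1$, observe that $c+1 \not\in \fnames Q$ since $c \geq \fnames Q$, so the freshness side-condition of early bisimulation is met and a matching $\pi$-transition of $Q$ exists. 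In every case the residuals have free names bounded by $d \in \{c, c{+}1\}$, so the resulting pair remains in $\mathcal R$.

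For completeness $(c,P) \sim (c,Q) \Rightarrow P \simE Q$, I would consider $\mathcal R = \{(P,Q) \mid \exists c, c \geq \fnames P \cup \fnames Q \text{ and } (c,P) \sim (c,Q)\}$ and verify it is an early bisimulation. The $\tau$ case and free-name cases (label names $\leq c$) translate directly into $\pi^1$. The two interesting cases are input with a fresh value and bound output. For $P \tpi{ab} P'$ with $b > c$, use the input rule to derive $P \tpi{a(c+1)} P''$ where $P'' = P'\{c{+}1/b\}$, lift to $\pi^1$, match against $Q$ via the fresh-input rule to obtain $(c{+}1,Q'')$ with $(c{+}1,P'') \sim (c{+}1,Q'')$, and then apply $\isubst(\sim) \subseteq {\sim}$ (from Lemmas~\ref{l:isubstr} and~\ref{l:compatible:closure}) with the substitution sending $c{+}1$ back to $b$; this is injective on the relevant free names because $b \not\in \fnames P \cup \fnames Q$. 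For $P \tpi{\out a(b)} P'$ with $b \not\in \fnames Q$, use $\alpha$-conversion of the bound name to reduce to $b = c+1$, lift and match in $\pi^1$, then substitute back via the same technique.

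For the weak counterparts, exactly the same two relations work with $\wb$ replacing $\sim$ throughout, using weak transitions $\ws{\hat\mu}$ in $\pi^1$ and $\wpi{\hat\mu}$ in $\pi$, and relying on $\isubst(\wb) \subseteq {\wb}$ which also follows from Lemma~\ref{l:isubstr}. The main obstacle is the completeness direction: the canonical fresh name $c+1$ baked into the $\pi^1$ transitions must be converted on the fly into the arbitrary fresh name demanded by early bisimulation. Without closure of bisimilarity under injective substitutions this would require an intricate separate argument, but Lemma~\ref{l:isubstr} turns it into routine diagram chasing.
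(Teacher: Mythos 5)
Your proposal is correct and follows essentially the same route as the paper: the same two candidate relations, the same case analysis on whether the label's names lie below $c$, equal $c+1$, or exceed it, and the same crucial appeal to closure under injective substitutions ($\isubst(\sim)\subseteq{\sim}$ and $\isubst(\approx)\subseteq{\approx}$, via Lemmas~\ref{l:isubstr} and~\ref{l:compatible:closure}) to renormalise fresh names in the completeness direction. The only cosmetic difference is that you detail the strong case and defer the weak one, whereas the paper does the opposite.
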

\begin{proof}
  We prove the case of weak bisimilarity, the strong case being
  easier. For the direct implication, we show that the relation
  $\RR_1$ defined below is a weak bisimulation:
  \[{\RR}_1\eqdef \{((c,P), (c,Q)) \mid P \wbE Q ~∧~ c≥\fnames{P} ∪
    \fnames{Q}\} \]%
  The only interesting transition is when
  $(c,P) \ts{\out a(b)} (d,P') $ with $d=b=c+1$.
  Since $c ≥\fnames{P} ∪ \fnames{Q}$, we know that $b ∉\fnames{Q}$ so
  $P \wbE Q$ tells us that $Q\wpi{\out a(b)}Q'$ with $P' \wbE Q'$.
  Repeatedly applying the rules defining $\ts{\cdot}$, since $b=c+1$, yields
  \[
  \inferrule{
    Q\wpi{}Q_1\tpi{\out a(b)}Q_2\wpi{} Q'}{
    (c,Q)\ws{}(c,Q_1)\ts{\out a(b)}(b,Q_2)\ws{} (b,Q')
  }
  \]
  and, indeed, $b≥\fnames{P'} ∪ \fnames{Q'}$.
 \medskip
  
  For the converse, proving that
  \[ {\RR}_2 \eqdef \{(P,Q) \mid ∃ c≥\fnames{P} ∪ \fnames{Q} ~~ (c,P) \wb
    (c,Q)\} \] is a weak early bisimulation needs a little more care,
  since fresh names in labels can be other than $c+1$ (they can be less
  than or greater than $c+1$). Suppose $P \R_2 Q$, which means there
  is  $c≥\fnames{P} ∪ \fnames{Q}$ such that $(c,P) \wb (c,Q)$. We
  analyse the transitions of the form $P\tpi{μ}P'$:
  \begin{enumerate}
  \item if $μ=\out a(b)$ or $μ = a b$ and $b\geq c+2$ then we have
    $P\tpi{μ}P'$ if and only if $P\tpi{μ\subst{b'}{b}}P'\subst{b'}{b}$
    with $b'=c+1$. Exploiting $\approx$ to get
    $(c,Q)\ws{\out a(b')}(b',Q')$ with
    $(b',P'\subst{b'}{b})\approx (b',Q')$ and hence, since
    $\isubst(\approx)\subseteq {\approx}$ by
    Lemmas~\ref{l:compatible:closure} and~\ref{l:isubstr}, we obtain
    $(b,P')\approx (b,Q'\subst{b}{b'})$ and hence
    $P'\R_2 Q'\subst{b}{b'}$. We conclude since
    $Q\wpi{μ\subst{b'}{b}}Q'$ implies $Q\wpi{μ}Q'\subst{b}{b'}$.
  \item If $μ = a b$ or $μ = \out a(b)$ with $b=c+1$ then $P\tpi{μ}P'$
    implies $(c,P)\ts{μ}(b,P')$, which implies $(c,Q)\ws{μ}(b,Q')$, and
    then $Q\wpi{μ}Q'$ with $(b,P') \wb (b,Q')$.
  \item If $\names{μ}\leq c$, there are two cases:
    \begin{enumerate}
    \item $μ$ is not a bound output. Then $P\tpi{μ}P'$ if and only if
      $(c,P)\ts{μ}(c,P')$ when $c\geq \fnames{P}$;  thus we can derive
      $Q\wpi{μ}Q'$ and $P' \R_2 Q'$.
    \item $μ =\out a(b)$ (hence $b∉\fnames{P}$) with the additional
      information that $b∉\fnames{Q}$. Then
      $P \tpi{\out a(b')} P'\subst{b'}{b}$ with $b'=c+1$. We get
      $(c,P) \ts{\out a(b')} (b',P'\subst{b'}{b})$ and then
      $(c,Q) \ws{\out a(b')} (b',Q')$ and since $b∉\fnames{Q}$, we
      also have $Q \wpi{\out a(b)} Q'\subst{b}{b'}$. The progression
      also gives us $(b',P'\subst{b'}{b}) \wb (b',Q')$. By closure of
      $\approx$ under injective substitutions (\isubst), we deduce
      $(b,(P'\subst{b'}{b})\subst{b}{b'}) = (b,P') \wb
      (b,Q'\subst{b}{b'})$ and finally $P' \R_2 Q'\subst{b}{b'}$.
      \qedhere
    \end{enumerate}
  \end{enumerate}
\end{proof}

The above full abstraction result allows us to import the theory of
up-to techniques for first-order LTSs and bisimilarity, in both the
strong and the weak cases.

We have already proved the validity of preliminary up-to techniques
that are specific to $\pi^1$ (Lemma~\ref{l:isubstr}); we proceed 
below with up-to-context techniques.

The up-to-context function is decomposed into a set of smaller context
functions, called \emph{initial}~\cite{SanPous}, one for each operator
of the $\pi$-calculus.  The only exception to this is the input
prefix, since early bisimilarity in the $\pi$-calculus is not
preserved by this operator.  We write $\icout$, $\icnew$, $\icbang$,
$\icpar$, and $\icplus$ for these initial context functions,
respectively applying the operators of output prefix, restriction,
replication, parallel composition, and sum, to all pairs in the given
relation.

\begin{definition}\label{d:pi:uptocontext}
  We define the functions $\icout$,
  $\icnew$, $\icbang$, $\icpar$ and $\icplus$ on relations on $\pi^1$ by the following rules:
  \begin{mathpar}
  \inferrule{(c,P)\R(c,Q) \and c\geq a,b}{(c,\out a b.P) ~~\icout(\RR)~~ (c,\out a b.Q)} \and
  \inferrule{(c,P)\R(c,Q)}{(c,(\new a)P) ~~\icnew(\RR)~~ (c,(\new a)Q)} \and
  \inferrule{(c,P)\R(c,Q)}{(c,{!}P) ~~\icbang(\RR)~~ (c,{!}Q)} \and
  \inferrule{(c,P_1)\R(c,Q_1) \and (c,P_2)\R(c,Q_2)}{(c,P_1∣P_2) ~~\icpar(\RR)~~ (c,Q_1∣Q_2)} \quad
  \inferrule{(c,P_1)\R(c,Q_1) \and (c,P_2)\R(c,Q_2)}{(c,P_1+P_2) ~~\icplus(\RR)~~ (c,Q_1+Q_2)} \and
  \end{mathpar}
\end{definition}

While bisimilarity in the $\pi$-calculus is not preserved by input
prefix, a weaker rule holds:
\begin{equation}
  \label{e:eIP}
  \displaystyle{\frac{
      \forall c, \;\; P \sub c b \asymp Q \sub c b
    }
    {a(b).P  \asymp a(b).Q}
  }
\end{equation}
where $\asymp$ can be $\simE$ or $\wbE$. We define accordingly $\icin$, the
function for input prefix:
\begin{definition}$\icin$ is the function on $\pi^1$ relations defined
  by the rule:
  \[\inferrule
    {a\leq d \and \forall c\leq d+1 ~~~(d+1,P\{c/b\})\R(d+1,Q\{c/b\})}
    {(d,a(b).P) ~~\icin(\RR)~~ (d, a(b).Q))}\]
\end{definition}

\begin{theorem}
  \label{t:pi:uptoc}
  The functions $\icout, \icin, \icnew, \icbang, \icpar, \icplus$ are
  all below $t_{\strongb}$.
\end{theorem}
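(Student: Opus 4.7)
The plan is to invoke criterion~(3') from Section~\ref{s:back} with the family $F \defi \{\icout, \icin, \icnew, \icbang, \icpar, \icplus\}$: it then suffices to prove $f \leadsto T(\bigcup F)$ for every $f\in F$, whence $F \subseteq t_{\strongb}$. Throughout, we fix $\R,\S$ with $\R \relprogress{\strongb} \S$ and, for each $f$, we consider a pair $(c,P) \mathrel{f(\R)} (c,Q)$ together with a transition $(c,P) \ts{\mu} (c',P')$, aiming to exhibit a matching transition of $(c,Q)$ whose derivative lies in $T(\bigcup F)(\S)$. Note that every function listed in Lemma~\ref{l:isubstr}, as well as $\strongb$ itself, is below $t_{\strongb}$ and therefore below $T(\bigcup F)$; we shall exploit this without further comment.

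The easy cases dispatch quickly. For $\icout$, the sole transition returns $(c,P),(c,Q) \in \R \subseteq \strongb(\S) \subseteq T(\bigcup F)(\S)$. For $\icplus$, a transition of one summand is matched directly by progression of $\R$, giving a derivative in $\S$. For $\icin$, the only available transition is an input $\ts{a b'}$ with $b' \leq c+1$, and the bisimulation-up-to hypothesis packaged in $\icin$ supplies a pair at level $c+1$; when $b' \leq c$, a single application of $\str$ lowers this pair to level $c$. For $\icnew$ on $(c, \nu a\, P)$, transitions inherited from $P$ without touching $a$ are closed by a further application of $\icnew$ to the derivatives; for the \textsc{open} rule, the corresponding free-output transition of $P$ at level $c$ is matched by $Q$, and we rename the extruded name to $c+1$ via the bijective substitution $\{(c+1)/a\}$, relying on $\isubst \subseteq t$.

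The substantive cases are $\icpar$ and $\icbang$. For $\icpar$ on $(c, P_1 \mid P_2)$, transitions from a single component are absorbed by $\icpar$ applied to the derivatives (combined with $\weak$ when a fresh name is extruded), and a free-output communication is likewise closed by $\icpar$. The delicate case is a bound-output communication producing $(c, \nu b\, (P_1' \mid P_2'))$ with $b = c+1$: progression yields pairs at level $c+1$ in $\S$, which we combine via $\icpar$, wrap in $\icnew$, and finally bring back to level~$c$ with $\str$. For $\icbang$, we unfold ${!}P \tpi{\mu} R$ through the \textsc{rep} rule into a transition of $P \mid {!}P$ and enumerate the resulting shapes (simple action of a copy of $P$, communication between $P$ and a spawned copy, bound-output extrusion), in each case combining $\icpar$, $\icbang$ applied to the original pair $(c,P)\R(c,Q)$, and $\icnew$ or $\weak$ as appropriate.

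The main obstacle is the bound-output communication case occurring in both $\icpar$ and $\icbang$: matching requires composing several context-closure functions with the renaming operators $\isubst$, $\bsubst$, $\str$, $\weak$ of Lemma~\ref{l:isubstr} while carefully tracking the evolving level. What makes this go through is precisely the power of criterion~(3'): since $T(\bigcup F)$ contains both $t$ and every $f \in F$, is closed under finite composition and union of such functions, and is idempotent, any combination constructed above lands back in $T(\bigcup F)(\S)$. The mutual-coinductive flavour of (3') is further essential for $\icbang$, whose soundness analysis invokes $\icbang$ itself on the residual replications $!P$ and $!Q$.
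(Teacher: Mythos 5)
Your proposal is correct and follows essentially the same route as the paper's proof: the same appeal to criterion~(3') with the family $F$, the same case analysis of transitions for each initial context function, and the same combinations of $\str$, $\icnew$, $\icpar$, and the renaming functions of Lemma~\ref{l:isubstr} in the bound-output and replication cases (your use of $\weak$ where the paper uses $\isubst$ with an identity substitution is equivalent). The only blemish is calling $\{(c{+}1)/a\}$ a \emph{bijective} substitution in the $\icnew$ case—it is injective from $\{1,\dots,c\}$ into $\{1,\dots,c{+}1\}$, which is exactly why $\isubst$ rather than $\bsubst$ is the function to invoke there, as you in fact do.
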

\begin{proof}
  Let $F \eqdef \{\icout, \icin, \icnew, \icbang, \icpar, \icplus\}$,
  we prove $f \funprogress{} T(\cup F)$ for each function $f\in F$. In
  each case, we assume ${\RR}\relprogress{} {\SS}$ and we prove
  $f({\RR}) \relprogress{} T(\cup F)({\SS})$. Remark that
  $𝓡⊆\strongb(𝓢)$, and so
  $T(\cup F)(𝓡∪𝓢) ⊆ T(\cup F)(\strongb(𝓢)∪𝓢)⊆ T(\cup F)(T(\cup F)(𝓢))
  =T(\cup F)(𝓢)$, and so it is enough, and more convenient, to
  establish $f({\RR}) \relprogress{} T(\cup F)(𝓡∪𝓢)$.
  
  For this, it suffices to
  analyse the transitions emerging from the left-hand side of
  $f(\RR)$, as every $f$ is symmetric:
  \begin{itemize}
  \item $\icout(\RR) \relprogress{} {\RR}$ is immediate
  \item $\icplus(\RR) \relprogress{} {\SS}$ is also
    straightforward;
  \item $\icin(\RR) \relprogress{} \str(\RR)$:
    assume $(d,a(b).P) ~~\icin(\RR)~~ (d,a(b).Q)$; each transition is of
    the form $\ts{a c}$, yielding a pair $(p,q)$ where
    $p=(d',P\{c/b\})$ and $q=(d',Q\{c/b\})$;
    \begin{itemize}
    \item if $d'=c+1$ then $(p,q)∈{\RR}$ by definition of $\icin$.
    \item if $d'=c$ then $(d+1,P\{c/b\}) \R (d+1,Q\{c/b\})$ by
      definition of $\icin$, and hence
      $(p,q) ∈ \str(\RR)$.
    \end{itemize}
  \item $\icnew(\RR) \relprogress{} \icnew(\SS) ∪ \isubst(\SS)$\\
    The interesting case arises for transitions for which the last
    rule applied is the extrusion rule: $(c,(\new d)P)$
    $\ts{\out a(b)}$ $(b,P'\subst{b}{d})$ with $b=c+1$ and
    $P \tpi{\out a d}P'$. The problem is to relate
    $(b,P'\subst{b}{d})$ to $(b,Q'\subst{b}{d})$ knowing that
    $(c,P') \S (c,Q')$. This is done using the $\isubst$ function
    with the injective substitution $\subst{b}{d}:\{1\dots c\}→\{1\dots b\}$.
  \item $\icpar(\RR) \relprogress{} N(\icpar(\isubst({\RR}\cup{\SS})))$, where
    $N \eqdef (\str ∘ \icnew) ∪ \id$. For this, we analyse the transition
    $(c, P_1 \mid P_2) \ts{\mu}(c', P')$.
        First, let's assume that $c'=c$. The transition must come from one
    of the four rules \textsc{par-l}, \textsc{par-r}, \textsc{comm-l},
    or \textsc{comm-r}:
    \begin{itemize}
    \item Rule \textsc{par-l} results in $(c, P_1' \mid P_2)$ with
      $P_1 \tpi{\mu_1} P_1'$, so we obtain, from
      $(c, P_1) \RR (c, Q_1)$ and ${\RR}\relprogress{} {\SS}$, some $Q_1'$
      such that $Q_1 \tpi{\mu_1} Q_1'$. Finally we obtain the pair
      $((c,P_1'\mid P_2),(c,Q_1'\mid Q_2))$ which belongs to
      $\icpar({\RR}∪{\SS})$.
    \item Symmetrically, rule \textsc{par-r} takes us to the pair
      $((c,P_1\mid P_2'),(c,Q_1\mid Q_2'))\in\icpar({\RR}\cup{\SS})$.
    \item Working both sides, rules \textsc{comm-l} and \textsc{comm-r} both
      lead us to a pair
      $$((c,P_1'\mid P_2'),(c,Q_1'\mid Q_2'))\in\icpar(\SS).$$
    \end{itemize}
    ~
    
    The second case is when $c'=c+1$. This means that the transition is
    derived using \textsc{par-l}, \textsc{par-r}, \textsc{close-l}, or
    \textsc{close-r}. We consider two cases:
    \begin{itemize}
    \item The last rule is a \textsc{par-l} rule
      (\textsc{par-r} being symmetric), with a label of the form
      $\out a(b)$. We know $(c,P_1)\R(c,Q_1)$ and $(c,P_2)\R(c,Q_2)$
      and
      \[
        \inferrule*[right={$b∉\fnames{P_2}$}]{P_1\tpi{\out a(b)}P_1'}{
          P_1∣P_2 \tpi{\out a(b)} P_1'∣P_2 } \] We have $b=c+1$,
      following the rule for bound output. We also have the following
      reductions in $\pi^1$, from $(c,P_1)$, and then from $(c,Q_1)$ using
      the progression ${\RR}\relprogress{} {\SS}$:
      \begin{mathpar}
        (c, P_1)\ts{\out a(b)}(b, P_1') \and
        (c, Q_1)\ts{\out a(b)}(b, Q_1') \and
      \end{mathpar}
      With $(b, P_1') \S (b, Q_1')$. We now need to relate the
      resulting processes:
      \[
      \inferrule{
        \inferrule{
          \inferrule
          {(b, P_1')\S (b, Q_1')}
          {(b, P_1')\R∪\S (b, Q_1')}
        }{(b, P_1') ~\isubst(\R∪\S)~ (b, Q_1')}
      \and
      \inferrule{
        \inferrule{(c, P_2 )\R (c, Q_2 )}{(c, P_2) \R∪\S (c, Q_2)}}
        {(b, P_2)~\isubst(\R∪\S)~(b, Q_2)}}
      {(b, P_1' ∣ P_2)~(\icpar∘\isubst)(\R∪\S)~(b, Q_1' ∣ Q_2)}
      \enspace.\]
      The same happens for the input transition $\ts{a b}$ when $b=c+1$.
    \item The last rule is a \textsc{close} rule: we know $(c,P_1)\R(c,Q_1)$
      and $(c,P_2)\R(c,Q_2)$ and
      \[
        \inferrule*[right={$b∉\fnames{P_2}$}]{P_1\tpi{\out a(b)}P_1'
          \and P_2 \tpi{a b} P_2'}{ P_1∣P_2 \tpi{τ} (\res
          b)(P_1'∣P_2') } \] 
      We can assume $b=c+1$ as $b$ is
      fresh on both sides. The two hypotheses can then be transformed
      into transitions in $\pi_1$:
      \begin{mathpar}
        (c, P_1)\ts{\out a(b)}(b, P_1') \and
        (c, P_2) \ts{a b} (b, P_2')\enspace.
      \end{mathpar}
      We have the same transitions for $Q_1$ and $Q_2$, respectively.
      Using the hypothesis ${\RR}\relprogress{} {\SS}$, we obtain named
      processes $(b,Q_1')$ and $(b,Q_2')$, related through $\SS$, which
      we can combine using $\icpar$ and then strengthen $b$ to $c$
      since $b\not\in\fnames{P,Q}$:
      \begin{mathpar}
        \inferrule{
          \inferrule{
            \inferrule{(b, P_1') \S (b, Q_1') \qquad\qquad (b, P_2') \S (b, Q_2')}
            {(b, P_1' ∣ P_2')\qquad \icpar(\S)\qquad (b, Q_2' ∣ Q_1')}
          }{
            (b, (\new b)(P_1' ∣ P_2'))\quad \icnew(\icpar(\S))\quad 
            (b, (\new b)(Q_2' ∣ Q_1'))
          }}{
          (c, (\new b)(P_1' ∣ P_2'))~~\str(\icnew(\icpar(\S)))~~
          (c, (\new b)(Q_2' ∣ Q_1'))
        } \enspace.
      \end{mathpar}
    \end{itemize}
  \item
    $\icbang(\RR) \relprogress{} \icpar^\omega(N(\icpar^\omega((\icbang ∪ \id)(\isubst(\R\cup\S)))))$\\
    We analyse the transition $(c,{!}P)\ts{μ}(c',P')$. 
    If
    ${!}P\tpi{μ}P'$ then one of the following holds:
    \newcommand{\B}{\!∣\!}
    \begin{enumerate}
    \item $P' = {!}P \B P_0 \B P \B … \B P$ with $P\tpi{μ}P_0$, or 
    \item $μ=τ$ and
      $P' = {!}P \B P_0 \B P \B … \B P \B P_1 \B P \B … \B P$ with
      $P\tpi{\out a b}P_i$ and $P\tpi{a b}P_{1-i}$, or
    \item $μ=τ$ and
      $P' = (\new b)({!}P \B P_0 \B P \B … \B P \B P_1) \B P \B … \B
      P$ with $P\tpi{\out a(b)}P_i$ and $P\tpi{a b}P_{1-i}$.
    \end{enumerate}
    In case 1 we have $(c,P)\ts{μ}(b,P_0)$ with $b=c'∈\{c,c+1\}$.
    In case 2 we have $(c,P)\ts{μ_i}(b,P_i)$ for each $i$, with $b=c$.
    In case 3 we have $(c,P)\ts{μ_i}(b,P_i)$ for each $i$, with $b=c+1$.
    In each case, we obtain, since ${\RR}\relprogress{} {\SS}$ with
    $(c,P)\R(c,Q)$, a transition $(c,{!}Q)\ts{μ}(c',Q')$ with $Q'$ of
    the same shape, so we only need to relate $(c',P')$ to $(c',Q')$
    knowing $(b,P_i)\S(b,Q_i)$. First,  we note that
    $(b,P)~\isubst(\R)~(b,Q)$. We have now the following pairs in
    ${\SS_0} ≜ (\icbang ∪ \id)(\isubst(\R\cup\S))$:
    \begin{mathpar}
      (b,{!}P)\SS_0(b,{!}Q) \and
      (b,P_0)\SS_0(b,Q_0) \and
      (b,P_1)\SS_0(b,Q_1) \and
      (b,P)\SS_0(b,Q) \and
    \end{mathpar}
    We can then  apply $\icpar$ several times to obtain the three pairs
    (with $\SS_1≜\icpar^\omega(\SS_0)$):
    \begin{align*}
      (b, {!}P \B P_0 \B P \B … \B P) ~~~~ &\SS_1 ~~~~
      (b, {!}Q \B Q_0 \B Q \B … \B Q)\\
      (b, {!}P \B P_0 \B P \B … \B P \B P_1 \B P \B … \B P) ~~~~ &\SS_1 ~~~~
      (b, {!}Q \B Q_0 \B Q \B … \B Q \B Q_1 \B P \B … \B P)\\
      (b, {!}P \B P_0 \B P \B … \B P \B P_1) ~~~~ &\SS_1 ~~~~
      (b, {!}Q \B Q_0 \B Q \B … \B Q \B Q_1)
    \end{align*}
    The first two pairs handle cases 1 and 2. For case 3 we need to
    apply $\icnew$ to add $(\new b)-$ and then $\str$ so to go  from
    $(b,(\new b)-)$ to $(c,(\new b)-)$. We apply $\icpar^\omega$ again
    to add the missing $- \B P \B … \B P$ and we obtain $(c,P')$ and
    $(c,Q')$ in the relation $\icpar^\omega(\str(\icnew(\SS_1)))$.
    Concluding, we have  obtained the following progression:
    \[
      \icbang({\RR}) \relprogress{} {\SS}_1 ∪ \icpar^\omega(\str(\icnew(\SS_1))) ⊆
      \icpar^\omega(N(\SS_1))
    \]
    which was our original goal. Note that the iterated
    $\icpar^\omega$ was used twice; both times it can be absorbed by
    $T$,  so to give us at the end
    $\icbang(\RR) \relprogress{} T(\icpar \cup \icnew \cup
    \icbang)(\R\cup\S)$.
  \end{itemize}
  For each $f\in F$ we have established a progression from $f(\RR)$ to
  $T(∪F)(\R\cup\S)$, and so to $T(∪F)(\S)$, as needed: this gives us
  $f\funprogress{} T(∪F)$, and in turn $∪F \funprogress{} T(∪F)$ and
  $∪F \subseteq t$.
\end{proof}

Weak bisimilarity is not preserved by sums, only by guarded sums,
whose function is $\icgplus \defi \icplus^ω ∘ (\icout ∪ \icin)$.

\begin{theorem}
  \label{t:pi:uptoc:weak}
  The functions $\icout, \icin, \icnew, \icbang, \icpar, \icgplus$ are
  below the companion $t_{\weakb}$.
\end{theorem}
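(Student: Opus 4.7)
My plan is to follow the structure of the proof of Theorem~\ref{t:pi:uptoc}, adapted for weak progression. Setting $F \eqdef \{\icout, \icin, \icnew, \icbang, \icpar, \icgplus\}$, I would prove $f \funprogress{\weakb} T(\cup F)$ for each $f \in F$, from which $\cup F \subseteq t_{\weakb}$ follows by the mutual-coinduction principle~(3') recalled in Section~\ref{s:back}. As in the strong case, it is convenient to aim for $T(\cup F)(\RR \cup \SS)$ rather than $T(\cup F)(\SS)$, since ${\RR} \subseteq \weakb(\SS) \subseteq t_\weakb(\SS)$ can always be absorbed by $T$. Note also that Lemma~\ref{l:isubstr} already gives us $\isubst$, $\str$ and $\weak$ below $t_\weakb$.

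For $\icout$, $\icnew$, and $\icin$ the reasoning from the strong case transfers essentially verbatim: each source transition has a unique matching derivative on the other side, and a weak match is still required, so a strong match trivially works; the algebraic recombinations into $\RR$, $\SS$, $\str(\RR)$, or $\isubst(\SS)$ are unchanged. For $\icpar$, I would analyse the transition $(c, P_1 \mid P_2) \ts{\mu} (c', P')$ by the rule used to derive it, relying on the fact that $Q \wpi{\mu} Q'$ implies $(c,Q \mid R) \ws{\mu} (c',Q' \mid R)$ in $\pi^1$, and that a weak \textsc{close} can be assembled inside $Q_1 \mid Q_2$ by interleaving, under parallel composition, the silent prefixes and suffixes of the weak bound output on $Q_1$ and the weak input on $Q_2$. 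The reassembly via $\icpar$, $\icnew$, $\str$, and $\isubst$ then proceeds exactly as before. For $\icbang$, the same enumeration of the three transition shapes of ${!}P$ applies, with weak transitions on the right, and the final progression lies in $T(\cup F)(\RR\cup\SS)$ as in the strong case.

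The only structurally new case is $\icgplus$, which I unfold as $\icplus^\omega \circ (\icout \cup \icin)$. A pair in $\icgplus(\RR)$ is of the form $(c,\sum_i \alpha_i.P_i)$ versus $(c,\sum_i \alpha_i.Q_i)$ where each prefixed branch $(c,\alpha_i.P_i)$ is related to $(c,\alpha_i.Q_i)$ by $\icout(\RR)$ or $\icin(\RR)$. Any transition of the sum commits to a single branch and reduces to a single-action transition of that branch; the matching strong (hence weak) transition from the corresponding branch on the other side yields a residual pair lying in $(\icout \cup \icin)(\RR) \subseteq T(\cup F)(\RR \cup \SS)$ by the clauses already established for $\icout$ and $\icin$. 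The replacement of $\icplus$ by $\icgplus$ is essential here: an unguarded branch could silently perform a $\tau$ that commits away the alternatives on one side without a counterpart on the other, as witnessed by the classical failure of $\tau.0 + a.0 \wbE a.0$.

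The main obstacle, inherited from Theorem~\ref{t:pi:uptoc}, is the $\icbang$ case, whose proliferation of transition shapes for ${!}P$ requires careful bookkeeping of how the residual decomposes into copies of $P$ and single-step derivatives; in the weak setting one must additionally verify that the corresponding weak transitions of ${!}Q$ produce residuals of the same syntactic shape, so that the componentwise recombination via $\icpar^\omega$, $\icnew$, and $\str$ still closes the diagram inside $T(\cup F)$.
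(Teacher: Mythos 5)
Your overall strategy is the paper's: prove $f \funprogress{\weakb} T(\cup F)$ for each $f$ in the family, reuse the strong-case diagram chases, treat $\icgplus$ via its guarded branches, and absorb $\RR\subseteq\weakb(\SS)$ into $T$. The cases $\icout$, $\icin$, $\icnew$, $\icpar$, and $\icgplus$ are handled as in the paper. But there is a genuine gap in the $\icbang$ case, precisely at the point you dismiss as ``careful bookkeeping'': the verification that the weak transitions of ${!}Q$ produce residuals of the same syntactic shape \emph{fails}, and the paper needs an extra up-to technique to repair it.

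Concretely, consider case 1 of the strong proof with $\mu=\tau$: $(c,{!}P)\ts{\tau}(c,{!}P\mid P_0\mid P\mid\dots\mid P)$ with $(c,P)\ts{\tau}(c,P_0)$. The weak progression $\RR\relprogress{\weakb}\SS$ only yields $(c,Q)\ts{\tau}^{\,n}(c,Q_0)$ for some $n\geq 0$. If $n>0$ you indeed get $(c,{!}Q)\ws{}(c,{!}Q\mid Q_0\mid Q\mid\dots\mid Q)$ and the strong-case recombination goes through. But if $n=0$ then $Q_0=Q$ and the only available weak answer from $(c,{!}Q)$ is the empty one, whose residual is $(c,{!}Q)$ --- not a term of the shape $(c,{!}Q\mid Q\mid\dots\mid Q)$, so the componentwise pairing via $\icpar^\omega$ cannot be applied. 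The paper resolves this by observing that $(c,{!}Q)\sim(c,{!}Q\mid Q\mid\dots\mid Q)$ and composing the target of the progression with the function $\RR\mapsto{\sim}\RR{\sim}$ (up-to-strong-bisimilarity), which is $\weakb$-compatible and hence absorbed by $T(\cup F)$; the residuals are then related in $\SS_2\sim$ with $\SS_2=\icpar^\omega(\icbang(\RR)\cup\RR\cup\SS)$. Without this (or some equivalent structural rearrangement up to $\sim$), the $\icbang$ diagram does not close, so your proof as stated is incomplete at this step.
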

\begin{proof}
  The progressions are as in the proof of Theorem~\ref{t:pi:uptoc},
  except for $\icgplus$, which is treated as $\icout$ and $\icin$; we
  need one more up-to technique for the case of the replication.
  Assuming ${\RR} \relprogress{\weakb} {\SS}$, the following
  progressions hold:
  \begin{mathpar}
    \icout(\RR)  \relprogress{\weakb} {\RR} \and
    \icin(\RR)   \relprogress{\weakb} \str(\RR) \and
    \icnew(\RR)  \relprogress{\weakb} (\icnew ∪ \isubst)(\SS) \and
    \icgplus(\RR)\relprogress{\weakb} \str(\RR) \and
    \icpar(\RR)  \relprogress{\weakb} N(\icpar(\isubst({\RR} \cup {\SS}))) \and
    \icbang(\RR) \relprogress{\weakb} \icpar^\omega(N(\icpar^\omega((\icbang ∪ \id)(\isubst({\RR}\cup{\SS})))))
    \cup (\icpar^\omega(\icbang({\RR})∪{\RR}∪{\SS}){\sim})
    \enspace.
  \end{mathpar}
  Again progressing to $T(∪F)(𝓡∪𝓢)$ is conveniently sufficient since
  $T(∪F)(𝓡∪𝓢)⊆ T(∪F)(𝓢)$.
  \newcommand{\B}{\!∣\!}%
  For the replication operator, only case 1 (of the corresponding proof
  of Theorem~\ref{t:pi:uptoc}) cannot be transported to the weak case. We have:
  \begin{mathpar}
  (c,{!}P) \ts{τ} (c,{!}P\B P_0\B P \B … \B P) \and\mbox{with}\and
  (c,P) \ts{τ} (c,P_0)\enspace.
  \end{mathpar}
  We use the property that ${\RR} \relprogress{} {\SS}$ so that from
  $(c,P)\R (c,Q)$ we obtain $(c,Q) \ts{τ}^n (c,Q_0)$. Then
  \begin{itemize}
  \item if $n>0$ we have $(c,{!}Q) \ws{} (c,{!}Q \B Q_0 \B …\B Q)$ and
    we conclude as before.
  \item if $n=0$ then there is no transition from $Q$ or $!Q$, we know
    $P_0 \S Q$ but we cannot reach the desired form
    $(c,{!}Q \B Q \B …\B Q)$ with a transition. Instead, we remark
    that $(c,{!}Q) \sim (c,{!}Q \B Q \B …\B Q)$ and so we simply
    progress to the relation $\S_2\sim$ where
    ${\SS}_2 = \icpar^\omega(\icbang({\RR})∪{\RR}∪{\SS})$.
  \end{itemize}
  Compared to the strong case, we only need to compose (on the left)
  the right-hand side of the progression with
  the function ${\RR}↦{\sim\R\sim}$  (`up-to-strong-bisimilarity')  which is
  indeed $\weakb$-compatible.
\end{proof}

As a byproduct of the compatibility of these initial context
functions, and using Lemma~\ref{l:compatible:closure}, we derive the
standard congruence properties of strong and weak early
bisimilarity, including the rule (\ref{e:eIP}) for input prefix.

\begin{corollary}
  In the $\pi$-calculus, relations $\simE$ and $\wbE$ are preserved by
  the operators of output prefix, replication, parallel composition,
  restriction; $\simE$ is also preserved by sum, whereas $\wbE$ is
  only preserved by guarded sums.  Moreover, rule (\ref{e:eIP}), for input prefix,  is
  valid both for $\simE$ and~$\wbE$.
\end{corollary}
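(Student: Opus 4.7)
The plan is to derive each congruence statement by combining the full abstraction result (Theorem~\ref{t:pi:corr}) with the soundness consequence of compatibility, namely Lemma~\ref{l:compatible:closure} applied to the initial context functions of Definition~\ref{d:pi:uptocontext}, which are shown to lie below the companion in Theorems~\ref{t:pi:uptoc} and~\ref{t:pi:uptoc:weak}. Concretely, for each operator I would lift the source processes to named processes at a common index $c$ (or $d$) bigger than all free names involved, apply the relevant initial context function to the hypothesised bisimilarity on named processes, and then translate back via Theorem~\ref{t:pi:corr}.

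For a generic binary operator case, say parallel composition: given $P_1 \simE Q_1$ and $P_2 \simE Q_2$, choose $c \geq \fnames{P_1} \cup \fnames{Q_1} \cup \fnames{P_2} \cup \fnames{Q_2}$, so that by Theorem~\ref{t:pi:corr} we have $(c,P_i) \sim (c,Q_i)$ for $i=1,2$. Since $\icpar$ is below $t_{\strongb}$ (Theorem~\ref{t:pi:uptoc}), Lemma~\ref{l:compatible:closure} gives $\icpar({\sim}) \subseteq {\sim}$, hence $(c, P_1 \mid P_2) \sim (c, Q_1 \mid Q_2)$, and Theorem~\ref{t:pi:corr} closes the argument with $P_1 \mid P_2 \simE Q_1 \mid Q_2$. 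The cases of output prefix, restriction, replication, and sum for the strong case are entirely analogous, using $\icout$, $\icnew$, $\icbang$, and $\icplus$ respectively. For the weak case the same pattern applies using $t_{\weakb}$; the only difference is that for sum we use $\icgplus$ (Theorem~\ref{t:pi:uptoc:weak}), which exactly captures the restriction to guarded sums.

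For rule~(\ref{e:eIP}) I would work in $\pi^1$ as follows. Fix $d \geq \fnames{a(b).P} \cup \fnames{a(b).Q}$, so $a \leq d$. The hypothesis $P\sub c b \asymp Q\sub c b$ for every $c$ applies in particular to every $c \leq d+1$; since $\fnames{P\sub c b} \cup \fnames{Q\sub c b} \leq d+1$ in those instances, Theorem~\ref{t:pi:corr} yields $(d+1, P\sub c b) \sim (d+1, Q\sub c b)$ (respectively for $\approx$). This is precisely the premise required by the inference rule defining $\icin$, so we obtain $(d, a(b).P) \,\icin({\sim})\, (d, a(b).Q)$. Since $\icin$ is below the companion in both the strong case (Theorem~\ref{t:pi:uptoc}) and the weak case (Theorem~\ref{t:pi:uptoc:weak}), Lemma~\ref{l:compatible:closure} gives $(d, a(b).P) \sim (d, a(b).Q)$, and one more application of Theorem~\ref{t:pi:corr} delivers $a(b).P \simE a(b).Q$; the weak version is identical.

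There is no serious obstacle: all the difficult work (in particular the delicate analysis of freshness, extrusion, and replication) has already been absorbed into the compatibility proofs of Theorems~\ref{t:pi:uptoc} and~\ref{t:pi:uptoc:weak}. The only points requiring a little care are (i) to pick the indices $c$ and $d$ large enough for the lifting to $\pi^1$ to be valid on both sides and, for input prefix, large enough that the premise of $\icin$ is supplied by the quantified hypothesis of rule~(\ref{e:eIP}); and (ii) to recall that the failure of the full $\icplus$ in the weak case is precisely why only guarded sums are claimed to preserve $\wbE$, matching what Theorem~\ref{t:pi:uptoc:weak} provides via $\icgplus$.
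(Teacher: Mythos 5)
Your proposal is correct and follows exactly the route the paper intends: the paper derives this corollary in one sentence as a byproduct of Theorems~\ref{t:pi:uptoc} and~\ref{t:pi:uptoc:weak} together with Lemma~\ref{l:compatible:closure}, and your write-up simply makes explicit the lifting to $\pi^1$ via Theorem~\ref{t:pi:corr} and the translation back, including the correct handling of the index bound for $\icin$.
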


\begin{remark}
  \label{rem:early:late}
  Late bisimilarity~\cite[Section 4.5]{SW01} makes use of transitions $P \arrpi{a(b)} P'$ where
  $b$ is bound, the definition of bisimulation containing a
  quantification over names. To translate this bisimilarity in a
  first-order LTS we would need two transitions for the input $a(b)$:
  one to fire the input~$a$, leaving $b$ uninstantiated (for example,
  in a new kind of process $(b)(c,P)$ akin to an abstraction), and
  another to instantiate $b$ with any name, for transitions starting
  from processes of the new kind:
  \begin{mathpar}
    \inferrule{P\arrpi{a(b)}P'}{(c,P)\ts{a(-)}(b)(c,P')} \and
    \inferrule{b'\leq c+1 }{(b)(c,P)\ts{a(b')}(c+1,P'\subst{b'}{b})}
  \end{mathpar}
  While such a translation does yield full abstraction for both strong
  and weak late bisimilarities, the decomposition of an input
  transition into two steps prevents us from obtaining the
  compatibility of up-to-context. Indeed, compatibility of
  up-to-context intuitively requires that the immediate transitions of
  $C[P]$ should depend only on the immediate transitions of $P$.
  However, if inputs are decomposed into two steps, a contexts such as
  $[\cdot]_1\mid\out a b$ may combine two successive steps of the
  (input) argument to perform a single $\tau$ transition.
\end{remark}

To conclude, the main take-away message on the $\pi$-calculus is that
it suffices to count names to make the LTS first-order. Then, once the
corresponding up-to techniques for names are set-up, we recover the usual
progression proofs, in a modular way. While this level of modularity was
already present in \cite{phd:pous}, it  now becomes simpler
thanks to the companion.


\section{Call-by-name λ-calculus}
\label{s:cbn}

To study the applicability of our approach to higher-order languages,
we investigate the pure call-by-name $\lambda$-calculus, referred to
as $\LaN$ in the sequel.

We use $M,N$ to range over the set $\Lambda$ of $\lambda$-terms, and
$x,y,z$ to range over variables. The set $\Lambda$ of pure
$\lambda$-terms is defined by:
\[ M, N ::= x \mid \lambda x.M \mid M N\] We assume the familiar
concepts of free and bound variables and substitutions, and identify
$\alpha$-convertible terms. The only values are the
$\lambda$-abstractions $\lambda x. M$. In this section and in the
following one, results and definitions are presented on closed terms
and we write $\Lao$ for the subset of closed terms. Extension to open
terms is made using closing abstractions (i.e., abstracting on all
free variables). The reduction relation of $\LaN$ is the
\emph{call-by-name reduction relation} $\cbn$, defined as the least
relation over $\Lao$ that is closed under the following rules.
\[ 
\inferrule{ }{(\lambda x. M) N \cbn M\sub N x} \qquad\qquad
\inferrule{M \cbn M' } { M N \cbn M' N}
\] 
We write $\wcbn$ for its reflexive and transitive closure. In
call-by-name, \emph{evaluation contexts} are described by the following
grammar:
\[C_e :=  C_e\; M   \mid \holE   \]

As reference equivalence for the $\lambda$-calculus we consider
\emph{environmental bisimilarity} \cite{envbisim,KoutavasLS11}, which
coincides with contextual equivalence and Abramsky's applicative
bisimilarity~\cite{Abr88} on pure $\lambda$-terms while enabling a richer
set of up-to techniques.  Environmental bisimilarity makes a clear
distinction between the tested terms and the environment.  An element
of an environmental bisimulation has, in addition to the tested terms
$M$ and $N$, a further component $\EE$, the environment, which
expresses the observer's current knowledge.  When an input from the
observer is required, the arguments supplied are terms that the
observer can build using the current knowledge; that is, terms
obtained by composing the values in $\EE$ using the operators of the
calculus. An \emph{environmental relation} is a set of elements, each
of which can be of two forms: either a relation $\EE$ on closed
values, or a triple $(\EE, M,N )$ where $M,N$ are closed terms and
$\EE $ is a relation on closed values. We use $\X,\Y$ to range over
environmental relations. In a triple
$(\EE,M,N)$ the relation component $\EE$ is the \emph{environment},
and $M,N$ are the \emph{tested terms}.  We write $M \Xv\EE N$ for
$(\EE, M, N) \in \X$.  We write $\starred\EE$ for the closure of $\EE$
under contexts.  We only define the weak version of the bisimilarity;
its strong version is obtained in the expected way.

\begin{definition}
  \label{d:bisim-for-lambdaN} 
  An environmental relation $\X$\! is an \emph{environmental
    bisimulation} if
  \begin{enumerate}
  \item $M \Xv\EE N$ implies:
    \begin{enumerate}
    \item if $M \cbn M'$ then $N \wcbn N'$ and $M' \Xv{ \EE} N'$;
    \item if $M=V$ then $N \wcbn W$ and $\EE \cup \{ (V,W) \} \in \X$
      ($V$ and $W$ are values);
    \item the converse of the above two conditions, on $N$; 
    \end{enumerate} 
  \item if $\EE \in \X$ then for all $(\lambda x. P, \lambda x . Q )
    \in \EE$ and for all $(M,N) \in {\starred\EE}$ it holds that
    $P\sub{M}x \Xv{\EE} Q\sub{N}x$.
  \end{enumerate}
  \emph{Environmental bisimilarity}, $\simEB$, is the largest
  environmental bisimulation.
\end{definition} 

For environmental bisimilarity to be expressed via a first-order
transition system, a
few issues have to be resolved. For instance, an environmental
bisimilarity contains both triples $(\EE, M,N)$, and pure environments
$\EE$, which shows up in the difference between clauses (1) and (2) of
Definition~\ref{d:bisim-for-lambdaN}.  Moreover, the input supplied to
tested terms may be constructed using arbitrary contexts.

We write $\LaNu$ for the first-order LTS resulting from the
translation of $\LaN$.  The states of $\LaNu$ are sequences of
$\lambda$-terms in which only the last one need not be a value.  We
use $Γ$ and $Δ$ to range over sequences of values only; thus $(Γ,M)$
indicates a sequence of $\lambda$-values followed by $M$.
We write $|Γ|$ for the length of a sequence $Γ$, and
$\Gamma_i$ for the $i$-th element in $\Gamma$, when $i\leq |Γ|$.

For a finite environment $\EE$, we write $\EE_1$ for an ordered
projection of the pairs in $\EE$ on the first component, and $\EE_2$
is the corresponding projection on the second component.  In the
translation, intuitively, a triple $(\EE, M,N)$ of an environmental
bisimulation is split into the two components $(\EE_1, M)$ and
$(\EE_2,N)$.
When $C$ is a context of arity $|Γ|$,
we write $C[Γ]$ for the term obtained by replacing each
hole $\holei i$ in $C$ with the value $\Gamma_i$.  The rules for
transitions in $\LaNu$ are as follows; they are reminiscent of~\cite{LeiferMilner00}.
\begin{equation}\label{e:cbn:lts}
  \inferrule{M\cbn M'}{(Γ,M)\ts{τ}(Γ,M')} \qquad\qquad
  \inferrule{Γ_i(C[Γ]) \cbn M'}{Γ\ts{i,C} (Γ,M')}
\end{equation}

The first rule says that if $M$ reduces to $M'$ in $\LaN$ then $M$ can
also reduce in $\LaNu$, in any environment. The second rule implements
the observations in clause (2) of
Definition~\ref{d:bisim-for-lambdaN}: in an environment $\Gamma $
(only containing values), any component $\Gamma_i$ can be tested by
supplying, as input, a term obtained by filling a context $C$ with
values from $\Gamma $ itself. The label of
the transition records the position $i $ and the context chosen.
As the rules show, the labels of $\LaNu$ include the special label
$τ$, and can also be of the form $i,C$ where $i$ is a integer and $C$
a context.

We establish full abstraction from environmental bisimilarity to
bisimilarity on $\LaNu$ for finite environments. Full abstraction for
the empty environment alone is enough for our interests since
contextual equivalence corresponds to environmental bisimilarity with
the empty environment. One could accommodate $\LaNu$ and the
corresponding full abstraction result for possibly-infinite
environments, however we felt that it was not worth the notational
complications, since infinite environments are not reachable from
finite ones in environmental bisimulations, and since we do not think
that infinite environments increase discriminative power.  In
the statement below, $\approx$ denotes standard weak bisimilarity
(Definition~\ref{d:weakprogression}) on $\LaNu$.

The following proof shows a precise correspondence between
environmental bisimulations and bisimulations in $\LaNu$. The reader
familiar with environmental bisimilarities should find the statement
illustrative and maybe applicable to other variants of environmental
bisimilarities. It is also possible to show a direct, although less
precise, correspondence between contextual equivalence and
bisimilarity. This second approach is shown for the imperative
$\lambda$-calculus in Section~\ref{s:icbv} and exploits the compatibility
of up-to-context functions. Since compatibility of up-to-context is
proved independently of the correspondence result for $\LaNu$, this
approach would also work for $\LaNu$; however,  we found it more
interesting here  to  show the more precise result.

\begin{theorem}\label{t:cbn:fullabstraction}
  When $\EE$ is a finite environment,
  \[
    ~~M \simEBv{𝓔} N~~
    ⇔
    ~~(𝓔_1,M) \approx (𝓔_2,N)~~
    {\quad}\mbox{ and }{\quad}
    ~~𝓔 ∈ {\simEBv{}}~~
    ⇔
    ~~𝓔_1 \approx 𝓔_2~~ \enspace.
  \]
\end{theorem}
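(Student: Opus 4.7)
The plan is to prove both biconditionals simultaneously by converting environmental bisimulations into weak bisimulations on $\LaNu$, and vice versa.

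\textbf{From $\simEB$ to $\approx$.} Given an environmental bisimulation $\X$, I would fix, for each environment $\EE$ appearing in $\X$, an ordering on its pairs (shared by the two projections) and define
\begin{align*}
  \R \defi \{((\EE_1, M), (\EE_2, N)) \mid M \Xv{\EE} N\} \cup \{(\EE_1, \EE_2) \mid \EE \in \X\} \enspace .
\end{align*}
To verify that $\R$ is a weak bisimulation I would analyse the two kinds of transitions. A $\tau$-transition $(\EE_1, M) \ts{\tau} (\EE_1, M')$ forces $M$ to be a non-value with $M \cbn M'$, so clause 1(a) of $\X$ supplies $N \wcbn N'$ with $M' \Xv{\EE} N'$. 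An $(i,C)$-transition fires only from an all-values state; either the source is already $\EE_1$, or $M = V$ is a value, in which case I would first apply clause 1(b) to get $N \wcbn W$ and $\EE' \defi \EE \cup \{(V,W)\} \in \X$. In either case, clause 2 applied to $(\EE'_1)_i = \lambda x.P$, $(\EE'_2)_i = \lambda x.Q$ and the pair $(C[\EE'_1], C[\EE'_2]) \in \starred{\EE'}$ yields $P\sub{C[\EE'_1]}{x} \Xv{\EE'} Q\sub{C[\EE'_2]}{x}$, which matches the target state of the LTS transition after the single $\cbn$-step built into the $(i,C)$-rule.

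\textbf{From $\approx$ to $\simEB$.} Conversely, from $(\EE_1, M) \approx (\EE_2, N)$ I would build $\X$ by collecting all triples $(\EE_{\Gamma,\Delta}, M', N')$ and environments $\EE_{\Gamma,\Delta}$ arising from pairs $((\Gamma, M'), (\Delta, N')) \in {\approx}$ (resp.\ $(\Gamma, \Delta) \in {\approx}$) that are reachable from the initial pair via the bisimulation game, with $\EE_{\Gamma,\Delta}$ pairing $\Gamma$ and $\Delta$ positionwise. Matching lengths are invariant under both $\tau$- and $(i,C)$-transitions, so this is well-defined. Clause 1(a) of environmental bisimulation then follows directly from the $\tau$-bisimulation game. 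Clause 2 follows from the $(i,C)$-rule: taking a context $C$ witnessing $(M'', N'') \in \starred{\EE}$ (so $M'' = C[\EE_1]$, $N'' = C[\EE_2]$) and the index $i$ of $(\lambda x.P, \lambda x.Q)$ in $\EE$, the transition from $\EE_1$ carries out precisely the $\beta$-step producing $P\sub{C[\EE_1]}{x}$ and likewise for the right-hand side.

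\textbf{Main obstacle.} The delicate case is clause 1(b): producing $N \wcbn W$ with $W$ a value from $(\EE_1, V) \approx (\EE_2, N)$. The trick is that $(\EE_1, V)$ consists only of values, so it admits some $(i,C)$-transition; since any matching weak transition needs an all-values state to fire $(i,C)$, the $\tau^*$-prefix on the right must reduce $N$ to some value $W$. Transitivity of $\approx$ combined with closure under $\tau$-reductions then gives $(\EE_1, V) \approx (\EE_2, W)$, placing $\EE \cup \{(V,W)\}$ in the constructed $\X$. A secondary subtlety is that environmental bisimulations treat $\EE$ as a set of pairs whereas $\LaNu$-states use ordered sequences; fixing a canonical ordering upfront resolves this, provided bisimilarity in $\LaNu$ is stable under permutations of the environment, a property that itself follows from a straightforward compatible up-to technique.
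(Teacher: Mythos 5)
Your proposal is correct and takes essentially the same route as the paper's proof: the same back-and-forth translations between environmental bisimulations and weak bisimulations on $\LaNu$, and the same key trick for clause 1(b) (a value state always admits an $(i,C)$-transition, which forces the right-hand side to converge to some $(\EE_2,W)$; determinism of $\ts{\tau}$ plus transitivity of $\approx$ then give $(\EE_1,V)\approx(\EE_2,W)$). Two marginal remarks: in clause 2 of the converse direction the same determinism-plus-transitivity step is needed again, because the weak answer $\ws{i,C}$ may overshoot $Q\sub{C[\EE_2]}{x}$ --- you elide this with ``likewise for the right-hand side'' whereas the paper spells it out; conversely, in the forward direction you explicitly treat $(i,C)$-transitions from states $(\EE_1,V)$ arising from triples whose tested terms are values, a case the paper's case analysis passes over silently.
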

\begin{proof}
  \textbf{(⇒)} We show that if $𝓧$ is an environmental bisimulation
  then $𝓧^2$ is a (first-order) weak bisimulation, where $𝓧^2$ relates
  $(𝓔_1,M)$ to $(𝓔_2,N)$ when $(𝓔,M,N)∈𝓧$, and $𝓔_1$ to $𝓔_2$ when
  $𝓔∈𝓧$. By symmetry we consider only one direction: we suppose
  $x\ 𝓧^2\ y$ and a transition $x\ts{μ}x'$, and we obtain $y'$ such
  that $y\ws{\hat{μ}}y'$ and $x'\ 𝓧^2\ y'$.
  \begin{enumerate}
  \item $μ=τ$: then $x=(𝓔_1,M)\ts{τ}(𝓔_1,M')=x'$ with $M\cbn M'$, and
    $y=(𝓔_2,N)$ with $M\ 𝓧_{𝓔}\ N$. By definition of environmental
    bisimulation, $N\wcbn N'$ with $M'\ 𝓧_{𝓔}\ N'$ and hence
    $y \ws{} y'$ with $y'=(𝓔_2,N')$ and $x'\ 𝓧^2\ y'$.
  \item $μ=i,C$: then $(x,y)=(𝓔_1,𝓔_2)$ with $𝓔∈𝓧$, and
    $x'=(𝓔_1,P\subst{C[𝓔_1]}{x})$ with $λ x.P=(𝓔_1)_i$ and we choose
    $y'=(𝓔_2,Q\subst{C[𝓔_2]}{x})$ with $λ x.Q=(𝓔_2)_i$. Then by
    construction, $y\ts{i,C}y'$ and $x'\ 𝓧^2\ y'$ because
    $(λ x.P,λ x.Q)∈𝓔$ and $(C[𝓔_1],C[𝓔_2])∈𝓔^\star$.
  \end{enumerate}
  
  \textbf{(⇐)} The correspondence is less direct, so instead of
  establishing a correspondence between weak bisimulations, we define
  the candidate relation on top of weak bisimilarity. We first
  write $Γ⋅Δ$ for the \emph{pairing} of $Γ$ and
  $Δ$, i.e.\ the relation $\{(Γ_i,Δ_i) \mid i \leq
  |Γ|,|Δ|\}$. The environmental relation $𝓧$ is defined as follows:
  \begin{align*}
    𝓧 ~\eqdef~ \{(Γ⋅Δ,M,N) ∣ (Γ,M) ≈ (Δ,N)\}
    ~∪~ \{Γ⋅Δ ∣ Γ ≈ Δ\}
  \end{align*}
  (where $Γ$ and $Δ$ only contain values). We prove that $𝓧$ is an
  environmental bisimulation.
  \begin{enumerate}
  \item Suppose $M\ 𝓧_{Γ\cdot Δ}\ N$ (i.e.\ $(Γ,M)≈(Δ,N)$).
    \begin{enumerate}
    \item if $M \cbn M'$ then $(Γ,M) \ts{τ} (Γ,M')$, which implies
      $(Δ,N) \ws{} (Δ,N')$ with $(Γ,M') ≈ (Δ,N')$ and hence
      $N \wcbn N'$ with $M'\ 𝓧_{Γ\cdot Δ}\ N'$;
    \item if $M=V$, we  need a $W$ such that $N \wcbn W$ and
      $Γ\cdot Δ ∪ \{ (V,W) \} ∈ 𝓧$. Since $(Γ,V) = x \ts{i,C} x'$ for
      some $x'$, we have $(Δ,N) \ws{i,C} y_3$, for some
      $y_3$, i.e.\ $(Δ,N) =y_0 \ws{}y_1\ts{i,C}y_2\ws{} y_3$. Since
      $y_1$ has an $i,C$ transition, $y_1$ is of the form $(Δ,W)$ for
      some $W$. Since $y_0\ws{}y_1$ and $\ts{τ}$ is deterministic, we
      derive $y_0≈y_1$. By transitivity of $≈$, we infer $x≈y_1$, hence
      $(Γ,V)≈(Δ,W)$. We can then conclude $(Γ,V)\cdot (Δ,W)∈𝓧$.
    \item the converse of the above two conditions, on $N$, holds, as
      $\approx$ is symmetric.
    \end{enumerate}
  \item If $(\lambda x. P, \lambda x . Q )∈Γ\cdot Δ \in \X$ and
    $(M,N) \in {\starred{(Γ\cdot Δ)}}$, we prove that
    $P\subst{M}x \Xv{Γ\cdot Δ} Q\subst{N}x$.
    
    We have $(\lambda x. P, \lambda x . Q) = (Γ_i,Δ_i)$
    for some $i$, and $(M,N)=(C[Γ],C[Δ])$ for some $C$. Then
    $Γ_i(C[Γ]) \cbn P\subst{M}{x}$ so $Γ \ts{i,C} (Γ,P\subst{M}{x})$
    must be answered with $Δ \ws{i,C} (Δ,N')$ and
    $(Γ,P\subst{M}{x})≈(Δ,N')$. There are no silent transitions coming
    from $Δ$ so we necessarily have the $\ts{i,C}$ transition first.
    Since there is only one such transition, we have in fact
    $Δ \ts{i,C} (Δ,Q\subst{N}{x}) \ws{} (Δ,N')$. Again, as
    ${\ws{}} \subseteq {\approx}$,  by transitivity of $\approx$
    we derive $(Γ,P\subst{M}{x})≈(Δ,N')≈(Δ,Q\subst{N}{x})$, and hence
    $(P\subst{M}{x})\ 𝓧_{Γ\cdot Δ}\ Q\subst{N}{x}$.
  \end{enumerate}
\end{proof}

The theorem also holds for the strong versions of the
bisimilarities. Again, having established full abstraction with
respect to a first-order transition system and ordinary bisimilarity,
we can inherit the theory of bisimulation enhancements. We have
however to check up-to techniques that are specific to environmental
bisimilarity.

\subsection*{Structure and reusability of proofs}
The first technique 
is proved compatible in \linebreak[4]Lemma~\ref{l:cbn:weakening}, which is an
example of the standard way of proving compatibility. The other three
techniques 
are interdependent in that they each progress to a function containing
all three (Lemmas~\ref{l:progression:term}, \ref{l:progression:env},
and \ref{l:progression:eval}). These progressions could be established
separately, which would be an improvement of modularity over a
monolithic proof of compatibility (itself an improvement of size over
two redundant proofs of up-to-context and congruence). Moreover, we
achieve here a substantial amount of additional proof refactoring
thanks to two general ingredients. The first
(Definition~\ref{d:values:nonvalues}, Lemmas~\ref{l:env:eval:values}
and~\ref{l:val:nonval}) may be of general interest to handle calculi
whose grammars separate  `values' from `non-value'. The second
(Lemmas~\ref{l:red:expansion}, \ref{l:trick:more:general},
and~\ref{l:trick}) may be of general interest for calculi that are
quasi-deterministic,  in the sense of
Definition~\ref{d:quasideterministic}. (These results are
used again in Section~\ref{s:icbv}.)  The three progressions are finally
combined into Theorem~\ref{t:cbn:context}.

\bigskip

A useful technique specific to environmental bisimilarity is
`up-to-environment', which allows us to replace an environment with a
larger one. We define $\weak (\R)$ as the smallest relation that
includes $\R$ and such that, whenever
$(V,\Gamma,M) \mathrel{\weak(\R)} (W, \Delta,N)$ holds, also
$(\Gamma,M) \mathrel{\weak(\R)} (Δ,N)$ holds, where $V$ and $W$ are
any values. Here $\weak$ stands for `weakening' as, from
Lemmas~\ref{l:compatible:closure} and~\ref{l:cbn:weakening}, if
$(V,Γ,M)≈(W,Δ,N)$ then $ (Γ,M)≈(Δ,N)$.

\begin{lemma}\label{l:cbn:weakening}
  Function $\weak$ is compatible.
\end{lemma}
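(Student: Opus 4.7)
The task is to establish $\weak \funprogress{\weakb} \weak$: assuming $\R \relprogress{\weakb} \S$, we must derive $\weak(\R) \relprogress{\weakb} \weak(\S)$. A direct induction on the inductive definition characterises $\weak(\R)$ as the set of pairs $(x,y)$ for which there exist value sequences $\tilde V, \tilde W$ of equal length $k$ with $(\tilde V\cdot x)\R(\tilde W\cdot y)$, where $\tilde V\cdot x$ denotes $x$ with $\tilde V$ prepended. With this explicit witness in hand, the weak bisimulation game at $(x,y)$ reduces to an appeal to the progression hypothesis at the extended pair.

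The core observation is a tight correspondence between the transitions of $x$ and the transitions of $\tilde V\cdot x$ that avoid the prefix. A step $x\ts{\tau}x'$ lifts verbatim to $\tilde V\cdot x\ts{\tau}\tilde V\cdot x'$. A step $x\ts{i,C}(x,M')$, available only when $x$ is a value-only sequence, lifts to $\tilde V\cdot x\ts{i+k,\,C'}(\tilde V\cdot x,M')$, where $C'$ shifts every hole $\holei j$ of $C$ to $\holei{j+k}$: this is legitimate because $(\tilde V\cdot x)_{i+k}=x_i$ and $C'[\tilde V\cdot x]=C[x]$, so the same underlying $\cbn$-reduction fires on both sides. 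Conversely, any transition out of $\tilde V\cdot x$ acts solely on the tail beyond $\tilde V$, so it projects back to a transition of $x$ without ever touching the prefix.

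The diagram chase is then routine. From $x\ts{\mu}x'$, lift to $\tilde V\cdot x\ts{\mu'}\tilde V\cdot x'$; the progression hypothesis at the extended pair yields $\tilde W\cdot y\ws{\widehat{\mu'}} z$ with $(\tilde V\cdot x',z)\in\S$. Because $\tilde W$ is a sequence of values, the same shape analysis as above forces $z=\tilde W\cdot y'$ for a unique $y'$, and the weak transition projects back to $y\ws{\hat\mu}y'$ in the original LTS. The resulting pair $(x',y')$ lies in $\weak(\S)$, witnessed by the same $(\tilde V,\tilde W)$. The symmetric half of the game is handled identically, completing the proof.

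The only genuine subtlety, and the step where I would be most careful, is the bookkeeping around the index shift $i\mapsto i+k$ together with the hole re-indexing $C\mapsto C'$; these must be verified to set up a true bijection between the transitions of $x$ and the relevant transitions of $\tilde V\cdot x$ in both directions. Everything else is forced by the shape of the rules in~(\ref{e:cbn:lts}) and the fact that value-only states admit no $\tau$-transitions.
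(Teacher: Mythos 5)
Your proof is correct and follows essentially the same route as the paper's: lift each challenge from the stripped pair $(x,y)$ to the witnessing extended pair by shifting the index and the context holes by the length $k$ of the prefix, invoke the progression hypothesis there, and project the answer back down. The only loose statement is the claim that \emph{every} transition of $\tilde V\cdot x$ acts solely on the tail (labels $(j,C)$ with $j\le k$, or with $C$ mentioning holes $\le k$, do test the prefix values); but since the responder must match the specific lifted label $(i+k,C_{+k})$, whose index and holes all exceed $k$, the projection you actually need is sound.
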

\begin{proof}
  Since silent transitions do not alter the environment, we only consider
  ($i,C$)-transitions; writing $Γ'=V_1,…,V_n,Γ$ and $Δ'=W_1,…,W_n,Δ$, we
  have:
  \begin{mathpar}
    Γ_i(C[Γ]) = Γ'_{i+n}(C_{+n}[Γ])  \and
    Δ_i(C[Δ]) = Δ'_{i+n}(C_{+n}[Δ]) \enspace,
  \end{mathpar}
  where $C_{+n}$ is $C$ where each hole $\holei{j}$ has been replaced
  with $\holei{j+n}$. Then $Γ\ts{i,C}(Γ,M')$ implies
  $Γ'\xrightarrow{i+n,C_{+n}}(Γ',M')$ and $Δ'\ws{i+n,C_{+n}}(Δ',N')$
  implies $Δ\ws{i,C}(Δ,N')$, and so from $𝓡 \relprogress{} 𝓢$ we obtain
  $\weak(𝓡) \relprogress{} \weak(𝓢)$.
\end{proof}

Somewhat dual to weakening is the strengthening of the environment, in
which a component of an environment can be removed. However this is
only possible if the component removed is `redundant', that is, it can
be obtained by gluing other pieces of the environment within a
context; strengthening is captured by the following $\env$ function:
\[ \env(𝓡) ~\defi~ \big\{((Γ,C_v[Γ],M) ,(Δ,C_v[Δ],N) ) ~\st~ (Γ,M) \R
  (Δ,N) \} \]%
where $C_v$ ranges over value contexts (i.e., the outermost operator
of $C_v$ is an abstraction or $C_v$ is a hole). We show that $\env$ is
below the companion in Theorem~\ref{t:cbn:context}.

For up-to-context, we need to distinguish between arbitrary contexts
and evaluation contexts.  There are indeed congruence properties,
and corresponding up-to techniques, that only hold for the latter
contexts.  A hole $\holei i$ of a context $C$ is in a \emph{redex
  position} if the context obtained by filling all the holes but
$\holei i$ with values is an {evaluation context}.  Below, $C$ ranges
over arbitrary contexts, whereas $E$ ranges over contexts in which
the first hole $\holei 1$ appears exactly once \emph{and} in redex position.
\[\begin{array}{rcll}
  \term(𝓡) &\defi& \big\{((Γ,C[Γ])  ,(Δ,C[Δ])  ) &\st  Γ \R 
  Δ
  \,\big \}
    \\
  \eval(𝓡) &\defi& \big\{((Γ,E[M,Γ]), (Δ,E[N,Δ])) &\st
  (Γ,M) \R (Δ,N)
  \big\}
\end{array}\]
We will prove that functions $\term$, $\env$, and $\eval$ are below
both companions with a separate progression result for each function.
We start by establishing a progression for $\term$.


\begin{lemma}
  \label{l:progression:term}
  $\term \funprogress{} T(\env ∪ \term ∪ \eval)$.
\end{lemma}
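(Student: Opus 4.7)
The plan is to fix a progression $\R \relprogress{} \S$ and show that every pair in $\term(\R)$ progresses to $T(\env \cup \term \cup \eval)(\S)$. Writing $f \eqdef \env \cup \term \cup \eval$ and noting, just as in the proof of Theorem~\ref{t:pi:uptoc}, that $T(f)(\R \cup \S) \subseteq T(f)(\S)$ since $\R \subseteq \weakb(\S) \subseteq t(\S) \subseteq T(f)(\S)$, it suffices to aim at $T(f)(\R \cup \S)$. I would fix a pair $((\Gamma, C[\Gamma]), (\Delta, C[\Delta])) \in \term(\R)$ with $\Gamma \R \Delta$ and analyze transitions issued by the left-hand side; the other direction is symmetric.

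The first case split is on whether $C[\Gamma]$ (equivalently $C[\Delta]$) is a value. I expect to dispatch the value case uniformly through the general machinery of Lemmas~\ref{l:env:eval:values} and~\ref{l:val:nonval}, which are advertised in the prose preceding the lemma as designed precisely to factor value/non-value reasoning in calculi with that dichotomy. The substantive work then concerns the non-value case, in which the only enabled transitions are $\tau$-transitions $(\Gamma, C[\Gamma]) \ts{\tau} (\Gamma, M')$ coming from a reduction $C[\Gamma] \cbn M'$.

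Next I would locate the CBN head redex of $C[\Gamma]$ and subdivide accordingly. If the head redex sits entirely inside $C$ (i.e.\ the head of $C$ is $(\lambda y.C_0)\, C_0' \cdots$), then $M' = C'[\Gamma]$ for a context $C'$ obtained by performing the analogous reduction on $C$, and symmetrically $C[\Delta] \cbn C'[\Delta]$; the derivative pair lies in $\term(\R) \subseteq T(f)(\R)$, as required. If instead the head redex involves a hole, then $C$ must have shape $[\cdot]_i\, C_1\, C_2 \cdots C_k$; writing $\Gamma_i = \lambda x.P$ and $\Delta_i = \lambda x.Q$ gives $M' = P\sub{C_1[\Gamma]}{x}\, C_2[\Gamma] \cdots C_k[\Gamma]$. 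The key move is to feed the hypothesis $\R \relprogress{} \S$ the $\Gamma$-transition $\Gamma \ts{i, C_1} (\Gamma, P\sub{C_1[\Gamma]}{x})$, extracting $\Delta \ts{i, C_1} (\Delta, Q\sub{C_1[\Delta]}{x}) \Rightarrow (\Delta, N_0)$ with $(\Gamma, P\sub{C_1[\Gamma]}{x}) \S (\Delta, N_0)$. Since CBN reductions propagate through outer applications, this lifts to the witness weak transition $(\Delta, C[\Delta]) \ws{\tau} (\Delta, N_0\, C_2[\Delta] \cdots C_k[\Delta])$. The derivatives are then in $\eval(\S)$ through the evaluation context $E \eqdef [\cdot]_1\, C'_2 \cdots C'_k$, in which $[\cdot]_1$ is in redex position and the remaining holes, re-indexed from those of the $C_j$'s, are filled with $\Gamma$ or $\Delta$.

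I expect the main obstacle to be the bookkeeping around contexts and hole indices, particularly when constructing the witness $E$ and when making precise the informal notion of ``head redex lying entirely within $C$'' so as to justify the clean syntactic correspondence between $C$, $C'$, and their instances on $\Gamma$ and $\Delta$. The value case, while at first sight substantial (one must deal with $(j, C')$-transitions where $j \leq |\Gamma|$, to be replayed on $\Gamma$ alone by inlining $C$ and concluding via $\env$, as well as $j = |\Gamma|+1$, where the induced $\beta$-step must be matched syntactically and the derivative pair built using $\term$ at the extended sequence level), should reduce cleanly to the non-value analysis through Lemmas~\ref{l:env:eval:values} and~\ref{l:val:nonval}, which is precisely the role for which they are introduced.
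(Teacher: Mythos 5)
Your treatment of the non-value case is essentially the paper's: when the head redex lies inside $C$ the derivatives stay in $\term(\R)$, and when the head is $\holei{i}\,C_1$ you replay the corresponding $(i,C_1)$-transition of $\Gamma$ against the hypothesis $\R\relprogress{}\S$ and close the diagram in $\eval(\S)$ via the evaluation context $E$. The target $T(\env\cup\term\cup\eval)(\R\cup\S)$ and its absorption into $T(\env\cup\term\cup\eval)(\S)$ via $\R\subseteq\bb(\S)\subseteq t(\S)$ are likewise as in the paper.

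The gap is in your plan for the value case. Lemmas~\ref{l:env:eval:values} and~\ref{l:val:nonval} cannot discharge it. Lemma~\ref{l:env:eval:values} reduces the value cases of $\env$ and $\eval$ \emph{to} $\term$ (it states $\env\circ\valuespairs\subseteq t\circ\term$ and $\eval\circ\valuespairs\subseteq t\circ\term$), so it presupposes the present lemma rather than helping to prove it; this is why the paper proves Lemma~\ref{l:progression:term} first and from scratch. Moreover, the split in Lemma~\ref{l:val:nonval} is over whether the \emph{source} pairs of $\R$ are value configurations, not over whether the configuration $(\Gamma,C[\Gamma])$ produced by the up-to function is a value; since $\term$ only consumes pairs $\Gamma\R\Delta$ of value configurations, $\term\circ\valuespairs=\term$ and the first premise of Lemma~\ref{l:val:nonval} is literally the statement you are trying to prove, so the reduction is circular. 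The value case must therefore be done by hand, and your parenthetical in fact contains the right moves, which are the paper's case 1: for $i\le|\Gamma|$ one inlines $C_v$ into the label, forming $C'=C[-,C_v]$, replays $\Gamma\ts{i,C'}$ against $\R\relprogress{}\S$, and lands in $\env(\S)$; for $i=|\Gamma|+1$ with $C_v=\holei{j}$ one renames the index; and for $i=|\Gamma|+1$ with $C_v$ a genuine abstraction the $\beta$-step is purely syntactic on both sides and the derivatives lie in $\env(\term(\R))$ — note the outer $\env$, needed to carry the extra components $C_v[\Gamma]$ and $C_v[\Delta]$ in the extended sequences. So the mathematics you sketch is recoverable, but the cited machinery does not deliver it, and the dependency between the lemmas runs in the opposite direction from the one you assume.
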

\begin{proof}
  Suppose that $𝓡 \relprogress{} 𝓢$, we show that
  $\term(𝓡) \relprogress{} T(\env ∪ \term ∪ \eval)(𝓢)$. More
  explicitly, we show that
  $\term(𝓡) \relprogress{} \env(\S) ∪ \env(\term(\R)) ∪ \term(\R) ∪
  \eval(\S)$.

  Let $Γ\R Δ$ and $n=|Γ|$. We analyse transitions from $(Γ,C[Γ])$.
    \begin{enumerate}
    \item Suppose $C[Γ]$ is a value, so $C$ is a value context $C_v$;
      the transition to consider is of the form $(Γ,C_v[Γ]) \ts{i,C} (Γ,C_v[Γ],M')$. 
      \begin{enumerate}
      \item If $i \leq n$ then $Γ_i(C[Γ,C_v[Γ]]) \cbn M'$ and since
        $C[Γ,C_v[Γ]] = C'[Γ]$ with $C' = C[-,C_v]$, we obtain
        $Γ \ts{i,C'} (Γ,M')$, and similarly for $Δ$.
        \begin{center}
          \begin{tikzpicture}[baseline,descr/.style={fill=white,inner sep=2pt}]
            \matrix (m) [matrix of math nodes, row sep=2em, column sep=3em]
            { Γ & Δ \\ (Γ,M') & (Δ,N') \\ };
            \path[-](m-1-1) edge node[descr] {$ 𝓡 $} (m-1-2);
            \path[-](m-2-1) edge node[descr] {$ 𝓢 $} (m-2-2);
            \path[->](m-1-1) edge node[auto,swap] {$ i,C' $} (m-2-1);
            \draw(m-1-2) edge[-implies,double equal sign distance] node[auto]{$i,C'$} (m-2-2);
          \end{tikzpicture}
          $~~\leadsto~~$
          \begin{tikzpicture}[baseline,descr/.style={fill=white,inner sep=2pt}]
            \matrix (m) [matrix of math nodes, row sep=2em, column sep=5em]
            { (Γ,C_v[Γ]) & (Δ,C_v[Δ]) \\ (Γ,C_v[Γ],M') & (Δ,C_v[Δ],N') \\ };
            \path[-](m-1-1) edge node[descr] {$ \term(𝓡) $} (m-1-2);
            \path[-](m-2-1) edge node[descr] {$ \env(𝓢) $} (m-2-2);
            \path[->](m-1-1) edge node[auto,swap] {$ i,C $} (m-2-1);
            \draw(m-1-2) edge[-implies,double equal sign distance] node[auto]{$i,C$} (m-2-2);
          \end{tikzpicture}
        \end{center}
      \item If $i = n+1$ and $C_v = \holei{j}$, the same argument as
        above applies, replacing $Γ_i$ with $Γ_j$ and $i,C'$ with
        $j,C'$.
      \item If $i = n+1$ and $C_v$ is not a hole, then $M'$ is of the
        form $C'[Γ]$; then $Δ$ makes the same transition to
        $(Δ,C_v[Δ],C'[Δ])$ and
        $((Γ,C_v[Γ],C'[Γ]),(Δ,C_v[Δ],C'[Δ])) \in \env(\term(𝓡))$.
     \end{enumerate}
   \item If $C[Γ]$ is not a value then $C$ is necessarily of the form
     $C=E[C_{v 1} C_2, -]$, for some evaluation context $E$, value
     context $C_{v 1}$, and context $C_2$. The transition is of the
     form $(Γ,C[Γ]) \ts{τ} (Γ,M')$ with $C[Γ] \cbn M'$. We distinguish
     two cases:
      \begin{enumerate}
      \item $C_{v 1}$ is not a hole (i.e.\ $C_{v 1} = λx.C_1[x,-]$ for
        some $C_1$). Then $M'=C'[Γ]$ for $C' = E[C_1[C_2,-],-]$ and
        $C[Δ] \cbn C'[Δ]$. The resulting pair $((Γ,C'[Γ]),(Δ,C'[Δ]))$
        is in $\term(𝓡)$.
      \item $C_{v 1} = \holei{i}$ and so $C[Γ]=E[Γ_i(C_2[Γ]),Γ]$. Then
        $M' = E[M_1,Γ]$ for some $M_1$ such that
        $Γ_i(C_2[Γ]) \cbn M_1$. Using the label $i,C_2$, the
        progression $𝓡 \relprogress{} 𝓢$ provides us with an answer
        $(Δ,N_1)$ such that $Δ_i(C_2[Δ]) \wcbn N_1$, which allows us
        to conclude up to $\eval$:
      \end{enumerate}
    \end{enumerate}
    \begin{center}
      \hfill
      \begin{tikzpicture}[baseline,descr/.style={fill=white,inner sep=2pt}]
        \matrix (m) [matrix of math nodes, row sep=2em, column sep=3em]
        { Γ & Δ \\ (Γ,M_1) & (Δ,N_1) \\ };
        \path[-](m-1-1) edge node[descr] {$ 𝓡 $} (m-1-2);
        \path[-](m-2-1) edge node[descr] {$ 𝓢 $} (m-2-2);
        \path[->](m-1-1) edge node[auto,swap] {$ i,C_2 $} (m-2-1);
        \draw(m-1-2) edge[-implies,double equal sign distance] node[auto]{$i,C_2$} (m-2-2);
      \end{tikzpicture}
      $~~\leadsto$
      \begin{tikzpicture}[baseline,descr/.style={fill=white,inner sep=2pt},remember picture]
        \matrix (m) [matrix of math nodes, row sep=2em, column sep=5em]
        { (Γ,E[Γ_i(C_2[Γ]),Γ]) & (Δ,E[Δ_i(C_2[Δ]),Δ]) \\ (Γ,E[M_1,Γ]) & |[alias=baselinenode]| (Δ,E[N_1,Δ]) \\ };
        \path[-](m-1-1) edge node[descr] {$ \term(𝓡) $} (m-1-2);
        \path[-](m-2-1) edge node[descr] {$ \eval(𝓢) $} (m-2-2);
        \path[->](m-1-1) edge node[auto,swap] {$ τ $} (m-2-1);
        \draw(m-1-2) edge[-implies,double equal sign distance] node[auto]{$τ$} (m-2-2);
      \end{tikzpicture}
      \hspace*{-12mm} 
      \hfill
      \begin{tikzpicture}[remember picture]
        \coordinate (right paper edge);
        \node[overlay,anchor=base east,inner sep=0pt]
        at (baselinenode.base -| right paper edge)
        {\qedhere};
      \end{tikzpicture}%
    \end{center}
\end{proof}

Before moving on to the techniques $\env$ and $\eval$, it is useful to
remark that when they are applied to values, they look like special
cases of $\term$. This can be used to shorten the proofs
substantially, but this needs to be made formal first by defining a
restriction function and using it to relate $\env$ and $\eval$ to
$\term$.

\newcommand{\valuespairs}{\mathsf{v}}
\newcommand{\nonvaluespairs}{\mathsf{n}}
\begin{definition}
  \label{d:values:nonvalues}
  Let $\mathcal{V}$ be the set of value configurations (of form
  $\Gamma$) and $\overline{\mathcal{V}}$ the set of non-value
  configurations, i.e.~sequences for which the last term is not a
  value (of form $(\Gamma,M)$ where $M$ is not a value). We define now
  two restriction functions on relations:
  \begin{align*}
    \valuespairs(\RR) &\eqdef \mathcal R \cap (\mathcal{V}\times \mathcal{V})
    \\\nonvaluespairs(\RR) &\eqdef \mathcal R \cap (\overline{\mathcal{V}}\times \overline{\mathcal{V}})
  \end{align*}
\end{definition}


The first step is to show that indeed, techniques $\eval$ and $\env$
are, on value configuration pairs, special cases of $\term$:

\begin{lemma}
  \label{l:env:eval:values}
  $\eval ∘ \valuespairs \subseteq t ∘ \term$ and
  $\env ∘ \valuespairs \subseteq t ∘ \term$.
\end{lemma}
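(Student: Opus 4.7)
My plan hinges on the fact that $(\Gamma,M)\valuespairs(\R)(\Delta,N)$ forces $M$ and $N$ to be values, so $\Gamma^+ \eqdef (\Gamma,M)$ and $\Delta^+ \eqdef (\Delta,N)$ are themselves value configurations of length $|\Gamma|+1$, with $\Gamma^+\R\Delta^+$. I would then apply $\term$ to this extended pair with a carefully chosen context in order to produce a pair lying close to the target, and finally close the remaining gap using compatible functions already below $t$.

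For $\eval$, I would construct a context $C$ of arity $|\Gamma|+1$ by renumbering the holes of the evaluation context $E$: the unique redex-position hole $\holei 1$ of $E$ is relocated to $\holei{|\Gamma|+1}$ in $C$, while the other holes shift down by one. Then $C[\Gamma^+]=E[M,\Gamma]$ and $C[\Delta^+]=E[N,\Delta]$, so that $((\Gamma,M,E[M,\Gamma]),(\Delta,N,E[N,\Delta]))\in\term(\R)$. This pair differs from the target $((\Gamma,E[M,\Gamma]),(\Delta,E[N,\Delta]))$ only by the extra value $M$ (resp.\ $N$) in the penultimate position.

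For $\env$, viewing $C_v$ as a context of arity $|\Gamma|+1$ whose last hole is unused, $\term(\R)$ contains $((\Gamma,M,C_v[\Gamma]),(\Delta,N,C_v[\Delta]))$, which differs from the target $((\Gamma,C_v[\Gamma],M),(\Delta,C_v[\Delta],N))$ only by a transposition of the final two values.

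The critical last step is to pass from the $\term(\R)$ pair to the target pair inside $t(\term(\R))$. My plan combines the weakening function $\weak$ of Lemma~\ref{l:cbn:weakening} (which is compatible and strips leading values) with cyclic rotations of the sequence: a one-step rotation is obtained by choosing a projection context of the form $\holei i$ at the (single) outer application of $\term$ and then applying $\weak$ inside $t$, which together move the head value to the tail of the sequence. Such rotations bring the unwanted value to the head position, from where $\weak$ removes it (in the $\eval$ case) or where the end-swap can be realised (in the $\env$ case). The principal obstacle I anticipate is precisely this constraint that only a single outer application of $\term$ is permitted: the argument must therefore bundle all required rearrangements into that one application, relying heavily on the closure of $t$ under composition of compatible functions and on the freedom to choose the context.
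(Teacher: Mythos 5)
Your first half is exactly the paper's argument: since $\valuespairs$ forces $M$ and $N$ to be values, you pass to the extended value configurations $\Gamma^+=(\Gamma,M)$ and $\Delta^+=(\Delta,N)$, and a single application of $\term$ with the renumbered context $C=E[\holei{n},\holei{1},\dots,\holei{n-1}]$ (where $n=|\Gamma|+1$), resp.\ with $C_v$ padded with an unused last hole, produces a pair that differs from the target only by an extra environment entry in the penultimate position, resp.\ by a transposition of the last two entries. Up to that point the proposal is correct and identical in substance to the paper's proof.

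The gap is in the closing step. You propose to realise the residual rearrangement by ``rotations'', each obtained from an application of $\term$ with a projection context $\holei{i}$ followed by $\weak$. This cannot be made to fit the shape $t\circ\term$: either you apply $\term$ more than once, which the statement does not allow, or you absorb the extra applications of $\term$ into $t$, which presupposes $\term\subseteq t$ --- but that inclusion is only established in Theorem~\ref{t:cbn:context}, whose proof depends (via Lemmas~\ref{l:progression:env} and~\ref{l:progression:eval}) on the present lemma, so the argument would be circular. You correctly flag this as the principal obstacle, but ``bundling'' the rearrangements into the one permitted $\term$ is not possible, since that application is already spent producing the term $E[M,\Gamma]$ (resp.\ $C_v[\Gamma]$). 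The resolution is not to route the rearrangement through $\term$ at all: the functions that permute the entries of a configuration, or delete an entry at an arbitrary position, are compatible in their own right, by exactly the label-reindexing diagram chase of Lemma~\ref{l:cbn:weakening} (a challenge $i,D$ on the rearranged configuration is translated into a challenge $i',D'$ on the original one by renumbering indices and holes, and the answer is translated back). Hence they lie below $t$ independently of any property of $\term$, and composing them with your single $\term(\R)$ pair lands in $t(\term(\R))$ as required. This is what the paper does, if tersely, when it invokes ``$\weak\subseteq t$'' to drop the $n$-th entry in the $\eval$ case and ``$t$'' to swap the last two entries in the $\env$ case.
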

\begin{proof}
  Any pair in $\eval(\valuespairs(\RR))$ is of the form
  $((\Gamma', E[\Gamma_{n},\Gamma']),(\Delta', E[\Delta_n,\Delta']))$
  where: $n$ is the arity of $E$, $(\Gamma,\Delta)\in {\R}$, and
  $\Gamma'$ (respectively $\Delta'$) is the sequence $\Gamma$
  (respectively $\Delta$) without its last element. The context
  $C=E[\holei{n},\holei{1},\dots,\holei{n-1}]$ applied to
  $(\Gamma,\Delta)\in {\R}$ shows that the original pair is of the
  form $((\Gamma', C[\Gamma]),(\Delta', C[\Delta]))$ and hence is in
  $t(\term(\RR))$: we use $\weak \subseteq t$ to remove the $n$th
  values from the environments $Γ$ and $Δ$. The same argument applies
  for $\env$ as well, except that we use $t$ in $t(\term(\R))$ only to
  swap the last two elements the sequences.
\end{proof}

We handled pairs of value configurations, so now we need to handle the
other kinds of pairs. We first handle the case where the left member
of the pair is a value configuration. 
We need however to first define a notion of determinism of an LTS:

\begin{definition}
  \label{d:quasideterministic}
  We say that a LTS $(\pr, \Act, \longrightarrow )$ is
  \emph{quasi-deterministic} if there exists an equivalence relation
  $≃$ on $\Act$ such that for all labels $μ,μ'\in\Act$ and processes
  $x,x_1,x_2\in \pr$ (where $x\ts{μ}$ is short for $(∃x'~x\ts{μ}x')$):
  \begin{enumerate}
  \item \label{i:det:tau} $μ ≃ τ$ implies $μ = τ$,
  \item \label{i:det:sim} $x\ts{μ} x_1$ and $x\ts{μ}x_2$ imply $x_1 \sim x_2$,
  \item \label{i:det:lab} $x\ts{μ}$ and $x\ts{μ'}$ imply $μ ≃ μ'$,
  \item \label{i:det:rec} $x\ts{μ}$ and $μ ≃ μ'$ implies $x\ts{μ'}$.
  \end{enumerate}
\end{definition}
This version of determinism is looser than strict determinism, since
it allows derivatives to be strongly bisimilar and not necessarily
equal, and labels to be related through some equivalence relation,
rather than equal. This equivalence relation must in turn be reflected
by the set of labels that can be performed from a given process.

A similar notion can be found in the formalisation of a compiler with
some non-determinism~\cite{compcertTSO}, where such a relation on
labels is defined. This relation satisfies~\eqref{i:det:tau}, a LTS that
is said to be `determinate' satisfies~\eqref{i:det:sim}
(although~\eqref{i:det:sim} is more relaxed as it allows for bisimilar
processes) and~\eqref{i:det:lab} and a `receptive' LTS
satisfies~\eqref{i:det:rec}.

\begin{lemma}
  \label{l:red:expansion}
  In a quasi-deterministic LTS, ${\ts{τ}} \subseteq {≳}$.
\end{lemma}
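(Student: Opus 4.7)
The plan is to exhibit a concrete relation $\RR$ containing every pair $(P,Q)$ with $P\ts{τ}Q$ and to verify that $\RR$ satisfies the defining clause of expansion, i.e.\ $\RR\subseteq\expb(\RR)$. By the coinductive definition of $≳$ this yields $\RR\subseteq{≳}$, hence in particular $\ts{τ}\subseteq{≳}$. The candidate we take is simply
\[
  \RR \eqdef {\ts{τ}} \cup {\sim}\enspace.
\]
The $\sim$ component is included so that the two sides of the expansion game have something to answer with when one party does not move; it is also useful to recall that $\sim\subseteq{≳}$ already holds by definition, so the only genuinely new pairs in $\RR$ are those of the form $P\ts{τ}Q$.

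Verifying $\RR\subseteq\expb(\RR)$ splits into two cases. For a pair $P\sim Q$ the standard argument for $\sim\subseteq{≳}$ applies unchanged, since any $\ts{μ}$-answer matches both the strong and the expansion clause. The interesting case is a pair $P\ts{τ}Q$, and this is exactly where quasi-determinism of the LTS is used. For the forward move, suppose $P\ts{μ}P'$: condition (\ref{i:det:lab}) gives $τ≃μ$, condition (\ref{i:det:tau}) then forces $μ=τ$, and condition (\ref{i:det:sim}) applied to $P\ts{τ}Q$ and $P\ts{τ}P'$ gives $P'\sim Q$. We therefore answer by the empty move $Q\ts{\hat{τ}}Q$, and the pair $(P',Q)$ lies in $\sim\subseteq\RR$. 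For the backward move, given $Q\ts{μ}Q'$ we simply prepend the existing $\ts{τ}$ step to obtain $P\ts{τ}Q\ts{μ}Q'$, which matches the required shape $P\ws{}\ts{μ}\ws{}P'$ with $P'=Q'$; the pair $(Q',Q')$ is in $\sim\subseteq\RR$.

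The main subtlety is the forward clause: one must ensure that the silent $P\ts{τ}Q$ really exhausts all observable behaviour on the left, so that every other move of $P$ can be mirrored by $Q$ staying put. This is precisely the content of clauses (\ref{i:det:tau})--(\ref{i:det:sim}) of quasi-determinism, which ensure respectively that no visible label can accompany a $τ$-capability, and that all $τ$-successors of $P$ are strongly bisimilar. Clauses (\ref{i:det:lab}) and (\ref{i:det:rec}) are not needed here (they serve the subsequent lemmas in the development). Everything else is a routine diagram chase.
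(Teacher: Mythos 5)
Your proof is correct and follows essentially the same route as the paper's: the paper establishes ${\ts{\tau}}\subseteq\expb(\gtrsim)$ directly and concludes from $\expb(\gtrsim)\subseteq{\gtrsim}$, whereas you package the very same two-case diagram chase (forward: force $\mu=\tau$ and answer with the empty move; backward: prepend the $\tau$ step) into the explicit expansion relation ${\ts{\tau}}\cup{\sim}$. One small slip in your closing remark: clause (\ref{i:det:lab}) \emph{is} needed --- it is exactly what gives you $\mu\simeq\tau$ in your own forward case --- so only clause (\ref{i:det:rec}) is genuinely unused here.
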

\begin{proof}
  We show that $(\ts{τ}) ⊆ \expb(≳)$. Let $x,y$ such
  that $x\ts{τ}y$.
  \begin{itemize}
  \item If $x\ts{μ}x'$, then
     $μ≃τ$      by~\eqref{i:det:lab},
     $μ=τ$      by~\eqref{i:det:tau},
     $x'\sim y$ by~\eqref{i:det:sim},
     so in particular $x' ≳ y$.
     Hence, the challenge can be answered with
     $y \ws{\hat{μ}} y$.
   \item If $y\ts{μ}y'$, then $x\ws{μ}y'$, so we conclude by
     reflexivity of $≳$.
  \end{itemize}
  We conclude by remarking that $\expb(≳) ⊆ (≳)$.
\end{proof}

\begin{lemma}
  \label{l:trick:more:general}
  In a quasi-deterministic LTS,
  \newcommand{\notau}{\not\!\!\!{\ts{τ}}}
  if $(x,y) ∈ \weakb(𝓢)$ and $x\ts{μ}$ with $μ≠τ$, then for some
  $y_1$, $y \ws{} y_1 \ts{\mu}$ with $(x,y_1) ∈ \weakb (≳\S≲)$.
\end{lemma}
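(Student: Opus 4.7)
The plan is to verify both clauses of weak progression $\weakb({≳}\S{≲})$ for $(x,y_1)$, where $y_1$ is extracted as follows: since $(x,y)\in\weakb(\S)$ and $\mu\neq\tau$, the challenge $x\ts{\mu}x'$ is matched by $y\ws{\mu}y''$, which decomposes into $y\ws{}y_1\ts{\mu}y_1'\ws{}y''$, supplying the required $y_1$ together with the fact $y_1\ts{\mu}$.

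Two preliminary facts underpin the argument. First, both $x$ and $y_1$ are \emph{$\tau$-stable} (have no outgoing $\tau$-transition): any transition $\ts{\nu}$ from them satisfies $\nu\simeq\mu$ by~(\ref{i:det:lab}), hence $\nu\neq\tau$ by~(\ref{i:det:tau}). Second, I would prove as a sublemma that, in a quasi-deterministic LTS, any two $\tau$-stable states reachable from a common state via $\ws{}$ are strongly bisimilar. This goes by induction on the length of the two $\tau$-chains: their first steps can be collapsed via~(\ref{i:det:sim}), and the invariant is preserved because $\sim$ respects $\tau$-stability (being a bisimulation).

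For the first progression clause, given $x\ts{\mu'}x''$,~(\ref{i:det:lab}) and~(\ref{i:det:tau}) give $\mu'\simeq\mu$ and $\mu'\neq\tau$, so $\hat{\mu'}=\mu'$. The hypothesis $(x,y)\in\weakb(\S)$ yields a decomposition $y\ws{}z\ts{\mu'}z'\ws{}y''$ with $(x'',y'')\in\S$, and $z$ is $\tau$-stable by the same reasoning as for $y_1$. The sublemma then gives $y_1\sim z$, so transporting the transitions across $\sim$ produces $y_1\ws{\mu'}w$ with $w\sim y''$. Since $\sim\subseteq{≳}$ and $\sim\subseteq{≲}$, the pair $(x'',w)$ lies in ${≳}\S{≲}$.

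For the second clause, given $y_1\ts{\mu''}y_1^\#$, again $\mu''\simeq\mu\neq\tau$. By~(\ref{i:det:rec}), $x$ admits a matching transition $x\ts{\mu''}x^\bullet$, and $(x,y)\in\weakb(\S)$ yields $y\ws{}v\ts{\mu''}v'\ws{}y^\bullet$ with $(x^\bullet,y^\bullet)\in\S$. As before, $v$ is $\tau$-stable and the sublemma gives $y_1\sim v$; transporting $v\ts{\mu''}v'$ across $\sim$ produces $y_1\ts{\mu''}z$ with $z\sim v'$, and~(\ref{i:det:sim}) applied to the two $\mu''$-derivatives of $y_1$ gives $y_1^\#\sim z$, hence $y_1^\#\sim v'$. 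Lemma~\ref{l:red:expansion} gives $v'\mathrel{≳}y^\bullet$, so $y^\bullet\mathrel{≲}v'\mathrel{≲}y_1^\#$ by transitivity (using $v'\sim y_1^\#$), and combining with $(x^\bullet,y^\bullet)\in\S$ yields $(x^\bullet,y_1^\#)\in{≳}\S{≲}$; the required $x\ws{\mu''}x^\bullet$ is immediate.

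The main obstacle is the sublemma on the essential uniqueness of $\tau$-normal forms reachable from a given state, which is where all four properties of quasi-determinism come together. Once it is in place, both clauses follow by routine diagram-chasing, with~(\ref{i:det:rec}) needed only in the second clause to exhibit the matching transition of~$x$.
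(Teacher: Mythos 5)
Your proof is correct and takes essentially the same route as the paper's: your sublemma on the strong bisimilarity of $\tau$-stable states reached by $\ws{}$ from bisimilar states is exactly the paper's auxiliary claim~\eqref{e:sim:red:stop} (proved there by the same induction on the silent chain), and both progression clauses are then discharged by the same transport of the $\weakb(\S)$-answers across $\sim$, with clause~\eqref{i:det:rec} of quasi-determinism invoked only for the right-to-left direction. The only cosmetic difference is that the paper factors the argument through the intermediate relation $\S\sim\Leftarrow$ and a reusable claim~\eqref{e:LtoR} before appealing to Lemma~\ref{l:red:expansion}, whereas you land in ${≳}\S{≲}$ directly.
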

\begin{proof}
  We first prove that whenever $μ_1,μ_2 ≠ τ$, for all $x_2,x_2'$,
  \begin{align}
    \label{e:sim:red:stop}
    (x_1 \ws{} x_1'\ts{μ_1}{}) ∧
    (x_2 \ws{} x_2'\ts{μ_2}{}) ∧
    x_1 \sim x_2 ⇒
    x_1' \sim x_2'
  \end{align}
  by induction on $x_1\ws{}x_1'$.
  \begin{itemize}
  \item If $x_1 = x_1'$, it is enough to show that $x_2 = x_2'$.
    Suppose otherwise that $x_2 \ws{} x_2'$ takes at least one step,
    and so $x_2 \ts{τ}{}$. Since $x_1 \sim x_2$, $x_1 \ts{τ}{}$, and so
    by \eqref{i:det:lab}, $μ_1 ≃ τ$, and by \eqref{i:det:tau},
    $μ_1 = τ$ (contradiction).
  \item Suppose now $x_1 \ts{τ} x_1' \ws{} x_1''$, and that the
    induction hypothesis holds for $x_1'$. Since $x_1 \sim x_2$, we
    can derive $x_2'$ such that $x_2 \ts{τ} x_2'$ and
    $x_1' \sim x_2'$. The transition $x_2 \ws{} x_2''$ must take at
    least one step to some $x'$, otherwise by \eqref{i:det:lab},
    $μ_2 ≃ τ$ and then by \eqref{i:det:tau}, $μ_2 = τ$
    (contradiction). By \eqref{i:det:sim}, $x' \sim x_2'$, and by
    transitivity and symmetry of bisimilarity, $x_1' \sim x'$, so we
    conclude by induction.
  \end{itemize}
  We have established~\eqref{e:sim:red:stop}.
  
  Since $x\ts{μ}$,
  $(x,y)∈\weakb(S)$ provides us with $y_1,y_1',y_1''$ such that
  $y \ws{} y_1 \ts{\hat{μ}} y_1' \ws{} y_1''$. Because $μ≠τ$,
  $y_1 \ts{μ} y_1'$, so there only remains to prove that
  $(x,y_1) \in \weakb(≳\S≲)$. More precisely we will prove that
  $(x,y_1) \in \weakb(\S\sim\Leftarrow)$, which entails
  $(x,y_1) \in \weakb(≳\S≲)$ by Lemma~\ref{l:red:expansion}.
  Note that by \eqref{e:sim:red:stop}, 
  for all $α≠τ$ and $y_0$,
  $y \ws{} y_0 \ts{α}$ implies $y_1 \sim y_0$ $(*)$.
  We now
  prove~\eqref{e:LtoR}, which we will use twice.
  \begin{align}
    \label{e:LtoR}
    x \ts{α} x_2 ~~⇒~~ ∃ y_2'' ~y_1' ~y_1'' ~~~ y_1\ts{α}y_1'\ws{}y_1''\sim y_2'' ~∧~ x_2 \S y_2''
  \end{align}
  By \eqref{i:det:lab} and~\eqref{i:det:tau}, $α≠τ$. $\weakb(\S)$
  provides us again with $y_2,y_2',y_2''$ such that
  $y \ws{} y_2 \ts{α} y_2' \ws{} y_2''$ with $x_2 \S y_2''$. By $(*)$,
  $y_1 \sim y_2$, from which we can play the transitions
  $y_2\ts{α}\ws{} y_2''$ to obtain $y_1''$ such that
  $y_1 \ts{α}\ws{} y_1''$ with $y_1''\sim y_2''$, which ends the proof
  of~\eqref{e:LtoR}.
  
  We finally show $(x,y_1) \in \weakb(\S\sim\Leftarrow)$:
  \begin{itemize}
  \item Suppose that $x \ts{α} x_2$. Using \eqref{e:LtoR}, we can
    answer the challenge with $y_1''$. We indeed have
    $y_1 \ws{α} y_1''$ and $x_2 \S\sim y_1''$, hence
    $x_2 \S\sim\Leftarrow y_1''$.
  \item Suppose now that $y_1 \ts{α} y_3'$. By \eqref{i:det:lab},
    $μ ≃ α$, and by \eqref{i:det:rec}, $x \ts{α} x_2$ for some $x_2$.
    We use $x_2\ts{α}x_2'$ as our weak transition. We can now use
    \eqref{e:LtoR} again. By \eqref{i:det:sim}, $y_1'\sim y_3'$. Since
    $y_1'\ws{} y_1''$, there is $y_3''$ such that $y_3'\ws{} y_3''$
    and $y_1''\sim y_3''$. We conclude by transitivity of $\sim$ since
    $y_2' \sim y_1'' \sim y_3''$.
    \qedhere
  \end{itemize}
\end{proof}

\begin{lemma}
  \label{l:trick}
  In $\LaNu$, if $(Γ,y) ∈ \weakb(𝓢)$ then $y \ws{} Δ$ with
  $(Γ,Δ) ∈ \weakb (≳\S≲)$.
\end{lemma}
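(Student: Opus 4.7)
The plan is to derive this from Lemma~\ref{l:trick:more:general} by observing that $\LaNu$ is quasi-deterministic in the sense of Definition~\ref{d:quasideterministic}. I take $\simeq$ on $\Act$ to identify $\tau$ only with itself, and to equate two non-$\tau$ labels $(i,C)$ and $(j,D)$ precisely when $C$ and $D$ have the same arity; here $\Act$ is tacitly restricted to well-formed labels, i.e., those $(i,C)$ with $i$ no greater than the arity of $C$. Properties~(1) and~(2) of Definition~\ref{d:quasideterministic} are immediate from the determinism of $\cbn$ and of the $(i,C)$-rule. Property~(3) holds because each state of $\LaNu$ is either a non-value configuration (performing only $\tau$-transitions) or a value configuration of some length $n$ (performing only labels $(i,C)$ with $C$ of arity $n$). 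Property~(4) follows: any value configuration of length $n$ admits every label $(j,D)$ with $D$ of arity $n$ and $j\leq n$.

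Given quasi-determinism, the proof is a short case analysis. If $\Gamma$ is empty it has no transitions, so taking $\Delta \eqdef y$ works trivially: $y \ws{} y$ in zero steps, and $(\Gamma,y) \in \weakb(\S) \subseteq \weakb(\gtrsim \S \lesssim)$ by monotonicity of $\weakb$. Otherwise $\Gamma$ admits at least one non-$\tau$ transition; for example, take $\mu = (1,C)$ where $C$ is a context of arity $|\Gamma|$ in which only the hole $\holei{1}$ appears, so that $\Gamma_1(C[\Gamma]) = \Gamma_1(\Gamma_1)$ is a $\beta$-redex. Lemma~\ref{l:trick:more:general} applied with this $\mu$ yields $y_1$ such that $y \ws{} y_1 \ts{\mu}$ and $(\Gamma,y_1) \in \weakb(\gtrsim \S \lesssim)$. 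Since $y_1$ performs a non-$\tau$ transition in $\LaNu$, it must itself be a value configuration, and I set $\Delta \eqdef y_1$.

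The main obstacle is the verification of quasi-determinism, specifically property~(4), which requires the tacit restriction of $\Act$ to well-formed labels described above; without this restriction, a value configuration of length $n$ would be expected to perform labels $(j,D)$ with $j$ beyond its length, which it of course cannot do. A direct proof bypassing Lemma~\ref{l:trick:more:general} is also available: set $\Delta$ to the unique value configuration reachable from $y$ via $\ws{}$ (unique by determinism of $\cbn$), and verify both directions of $(\Gamma,\Delta) \in \weakb(\gtrsim \S \lesssim)$ by replaying bisimulation challenges through $(\Gamma,y) \in \weakb(\S)$, invoking Lemma~\ref{l:red:expansion} to turn $\ws{}$-paths between value forms into instances of $\gtrsim$.
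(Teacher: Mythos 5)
Your proof follows the paper's route exactly: the paper likewise establishes that $\LaNu$ is quasi-deterministic with $(i,C)\simeq(i',C')$ whenever $C$ and $C'$ have the same arity, and then invokes Lemma~\ref{l:trick:more:general} with a label $(1,C_0)$ for $C_0$ a context of arity $|\Gamma|$ (the paper takes the hole-free context $\lambda x.x$, you take $\holei{1}$; either way $\Gamma_1(C_0[\Gamma])$ is a $\beta$-redex, so the transition exists), concluding that the reached state must be a value configuration $\Delta$. Your explicit verification of conditions (1)--(4), and in particular the observation that condition~(4) needs $\Act$ restricted to well-formed labels, is a legitimate point the paper leaves implicit. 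The one place you deviate is the branch for empty $\Gamma$: there, setting $\Delta\eqdef y$ does not work, because $\Delta$ must range over sequences of values and $y$ need not be one (for $\Gamma=()$ and a divergent non-value $y$, the conclusion $y\ws{}\Delta$ is simply unobtainable, so this degenerate case must be excluded rather than discharged). The paper's own proof silently assumes $|\Gamma|\geq 1$ as well, so this is an edge case of the lemma's statement rather than a defect of your main argument; just do not claim the empty case is handled ``trivially''.
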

\begin{proof}
  $\LaNu$ is quasi-deterministic, using $i,C≃i',C'$ whenever $C$ and
  $C'$ are of the same arity, so we use
  Lemma~\ref{l:trick:more:general} with $x=Δ$, $μ=1,C_0$ (where $C_0$
  is a context of arity $|Γ|$, for example the context $λx.x$ with no
  hole), which gives us $y\ws{}y'\ts{μ}{}$, hence $y'$ is of the form
  $Δ$, with $(x,y')∈ \weakb (≳\S≲)$.
\end{proof}

\begin{remark}
  \label{r:aboutthetrick}
  Lemma~\ref{l:trick} does not apply to non-deterministic calculi. In
  fact, those would require a special label signalling that the
  configuration is only composed of values in order for the proofs of
  progressions to go through. This would make Lemma~\ref{l:trick}
  unnecessary in the proofs of progressions for $\env$ and $\eval$
  (those proofs would however need to include long parts that are redundant
  with the proof of progression for $\term$
  since Lemma~\ref{l:val:nonval} uses
  Lemma~\ref{l:trick}).
\end{remark}

\newcommand{\reductions}{\mathsf{r}}%
Lemma~\ref{l:val:nonval} helps separating a proof of progression
$f \leadsto T(f∪g)$ for a function $f$ into a few simpler proofs, namely:
\eqref{e:fv:compat}, that is,  the progression for pairs of values;
\eqref{e:fn:compat},  that is, the progression for pairs of non-values;
and \eqref{e:fr}, that is,  the fact that $f$ absorbs the reduction
function~$\reductions\eqdef (𝓡↦{\Rightarrow\R\Leftarrow})$, up to $t$.
In order to carry out the splitting, $f$ is also required to
distribute over union~\eqref{e:fdistr}~--- which holds for
$\env$, $\term$, and $\eval$.


\begin{lemma}
  \label{l:val:nonval}
  If $f$ and $g$ are monotone functions such that:
  \begin{align}
    \label{e:fv:compat} f ∘ \valuespairs & \leadsto T(f∪g)\\
    \label{e:fn:compat} f ∘ \nonvaluespairs & \leadsto T(f∪g)\\
    \label{e:fr} f ∘ \reductions & \subseteq t ∘ f\\
    \label{e:fdistr} f(𝓡∪𝓢) & \subseteq f(𝓡)∪f(𝓢)
  \end{align}
  then $f \leadsto T(f∪g)$.
\end{lemma}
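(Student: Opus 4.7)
The plan is to prove the progression $f(\R)\relprogress{} T(f\cup g)(\S)$ from $\R\relprogress{}\S$ by splitting $\R$ according to the shapes of its related pairs. Every pair $(x,y)\in\R$ falls in exactly one of four classes: value/value, non-value/non-value, or one of two mixed classes where exactly one side is a value configuration. Call the two mixed subrelations $\R_{vn}$ and $\R_{nv}$. Iterating~\eqref{e:fdistr} yields $f(\R)\subseteq f(\valuespairs(\R))\cup f(\nonvaluespairs(\R))\cup f(\R_{vn})\cup f(\R_{nv})$, so since progression is closed under union on the left it suffices to handle each summand separately.

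The two pure summands fall out immediately. Each sub-relation of $\R$ inherits the progression to $\S$, and by definition $\valuespairs(\R)$ is a fixed point of $\valuespairs$ (and similarly for $\nonvaluespairs$), so hypotheses~\eqref{e:fv:compat} and~\eqref{e:fn:compat} apply directly to give $f(\valuespairs(\R))\relprogress{} T(f\cup g)(\S)$ and $f(\nonvaluespairs(\R))\relprogress{} T(f\cup g)(\S)$.

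The mixed summands are the real work; this is where Lemma~\ref{l:trick} combines with~\eqref{e:fr}. Consider a pair $(\Gamma,y)\in\R_{vn}$ with $\Gamma$ a value configuration and $y$ a non-value one. Applying Lemma~\ref{l:trick} to $\R\subseteq\weakb(\S)$ supplies a value configuration $\Delta$ such that $y\Longrightarrow\Delta$ and $(\Gamma,\Delta)\in\weakb(\gtrsim\S\lesssim)$. Collecting those $\Delta$'s across all pairs in $\R_{vn}$ defines a value/value relation $\R'$ that progresses to $\gtrsim\S\lesssim$ and satisfies $\R_{vn}\subseteq\reductions(\R')$ (zero reduction on the left, $y\Longrightarrow\Delta$ on the right). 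Hypothesis~\eqref{e:fr} then gives $f(\R_{vn})\subseteq t(f(\R'))$, while~\eqref{e:fv:compat} applied to $\R'$ yields $f(\R')\relprogress{} T(f\cup g)(\gtrsim\S\lesssim)$. Up-to-expansion lies below $t$, and both $t\subseteq T(f\cup g)$ and the idempotence of $T(f\cup g)$ collapse the target back to $T(f\cup g)(\S)$; compatibility of $t$ then lifts the progression from $f(\R')$ to $t(f(\R'))\supseteq f(\R_{vn})$. The class $\R_{nv}$ is symmetric, using the mirror of Lemma~\ref{l:trick}.

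The main obstacle is precisely this mixed case: a pair $(\Gamma,y)$ whose two sides are out of sync cannot be attacked by~\eqref{e:fv:compat} as-is, because $y$ must first silently reduce to a value before the value-pair progression can fire. Hypothesis~\eqref{e:fr} is tailored to pay exactly this cost by letting $f$ absorb the silent reduction into the companion $t$, and Lemma~\ref{l:trick} (which relies on the quasi-determinism of $\LaNu$) guarantees both that such a reduction exists and that the resulting value partner is related to the original $\S$-target up to expansion.
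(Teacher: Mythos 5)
Your proof is correct and takes essentially the same route as the paper's: split the relation by value/non-value status, dispatch the pure classes to hypotheses \eqref{e:fn:compat} and \eqref{e:fv:compat}, and use Lemma~\ref{l:trick} together with \eqref{e:fr} to turn pairs involving a value into genuine value pairs up to $t$, absorbing the residual $≳\S≲$ and $t$ into $T(f\cup g)$ via $t\subseteq T(f\cup g)$ and idempotence. The only (cosmetic) difference is that the paper merges the value/value pairs with the two mixed classes into a single class ``at least one side is a value'' and runs them all through the $\reductions$-based inclusion \eqref{e:red}, whereas you handle the value/value class directly by \eqref{e:fv:compat}; the underlying calculation is identical.
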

\begin{proof}
  We first establish the following inclusion:
  \begin{align}
    \label{e:red}
    (\mathsf{id}\setminus \nonvaluespairs) ∘ \bb \subseteq \reductions ∘ \valuespairs ∘ \bb ∘ t
  \end{align}
  Let $\S$ be a relation and
  $(x,y)\in (\id \setminus \nonvaluespairs)(\bb(\S))$,
  i.e.~$(x,y)\in \bb(\S)$ and at least one of $x$ or $y$ is a value.
  The case $\bb=\strongb$ is trivial: since $x$ is a value if and only if
  $y$ is a value, they are both values, and
  $(x,y)\in \valuespairs(\bb(\S)) \subseteq
  \reductions(\valuespairs(\bb(t(\S))))$. The case $\bb=\weakb$ is a
  consequence of Lemma~\ref{l:trick}:
  \begin{itemize}
  \item First suppose that $x$ is a value $\Gamma$. By
    Lemma~\ref{l:trick}, there is a value $\Delta$ such that
    $(Γ,Δ) ∈ \weakb (≳\S≲)\subseteq \weakb(t(\S))$. This is a value
    pair, so $(Γ,Δ) ∈ \valuespairs(\weakb(t(\S)))$. Finally,
    $(x,y) \in \reductions(\valuespairs(\weakb(t(\S))))$.
  \item Otherwise, suppose $y$ is a value $\Delta$. We know that
    $(\Delta,y) \in \weakb(\S)^{-1} = \weakb(\S^{-1})$ by symmetry of
    $\weakb$, so we can apply Lemma~\ref{l:trick}. This shows that
    there exists $\Gamma$ such that $(Δ,Γ) ∈ \weakb (≳\S^{-1}≲)$.
    Following the reasoning  for $x$, we derive
    $(y,x) \in \reductions(\valuespairs(\weakb(t(\S^{-1}))))$, and so
    $(x,y) \in \reductions(\valuespairs(\weakb(t(\S))))$ since
    $\reductions$, $\valuespairs$, $\weakb$, and $t$ are symmetric.
  \end{itemize}
  We can now conclude:
  \[\begin{array}{r@{\,}ll}
    f ∘ \bb &= f ∘ (\nonvaluespairs \cup (\id \setminus \nonvaluespairs)) ∘ \bb \\
    &\subseteq f ∘ \nonvaluespairs ∘ \bb \cup f∘ (\id \setminus \nonvaluespairs) ∘ \bb &\mbox{ by~\eqref{e:fdistr}} \\
    &\subseteq \bb ∘ T(f∪g) \cup f∘ (\id \setminus \nonvaluespairs) ∘ \bb &\mbox{ by \eqref{e:fn:compat}}\\
    &\subseteq \bb ∘ T(f∪g) \cup f∘ \reductions ∘ \valuespairs ∘ \bb ∘ t &\mbox{ by \eqref{e:red} and monotonicity of $f$}\\
    &\subseteq \bb ∘ T(f∪g) \cup t∘ f ∘ \valuespairs ∘ \bb ∘ t &\mbox{ by \eqref{e:fr}}\\
    &\subseteq \bb ∘ T(f∪g) \cup t∘ \bb ∘ T(f∪g) ∘ t &\mbox{ by \eqref{e:fv:compat} and monotonicity of $t$}\\
    &\subseteq \bb ∘ T(f∪g) \cup \bb∘ t ∘ T(f∪g) ∘ t &\mbox{ by compatibility of $t$}\\
    &= \bb ∘ T(f∪g) &\mbox{ since $t \subseteq T(h)$ and $T(h)^3 = T(h)$ for all $h$}
    \end{array}
    \tag*{\qedhere}
  \]
\end{proof}

The distinction between values and non-values simplifies  the proof of the
progression for $\env$.

\begin{lemma}
  \label{l:progression:env}
  $\env \funprogress{} T(\env ∪ \term ∪ \eval)$.
\end{lemma}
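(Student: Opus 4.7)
The plan is to invoke Lemma~\ref{l:val:nonval} with $f = \env$ and $g = \term \cup \eval$, which reduces the goal to the four conditions \eqref{e:fv:compat}--\eqref{e:fdistr}. Distribution over union \eqref{e:fdistr} is immediate from the definition of $\env$: a pair in $\env(\RR \cup \SS)$ is witnessed by some $(\Gamma,M) \mathrel{\RR \cup \SS} (\Delta,N)$, so the witness lies in $\RR$ or in $\SS$, placing the pair in $\env(\RR)$ or $\env(\SS)$.

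For the value case \eqref{e:fv:compat}, I would rely on Lemma~\ref{l:env:eval:values}, which gives $\env \circ \valuespairs \subseteq t \circ \term$. From Lemma~\ref{l:progression:term} we have $\term \leadsto T(\env \cup \term \cup \eval)$, and since $t$ is compatible (i.e.\ $t \leadsto t$) and $t \subseteq T(h)$ for any $h$, the composition chain
\[
  \env \circ \valuespairs \circ \bb
  \;\subseteq\; t \circ \term \circ \bb
  \;\subseteq\; t \circ \bb \circ T(\env \cup \term \cup \eval)
  \;\subseteq\; \bb \circ T(\env \cup \term \cup \eval)
\]
delivers the required progression.

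For the non-value case \eqref{e:fn:compat}, consider a pair $((\Gamma,C_v[\Gamma],M), (\Delta,C_v[\Delta],N))$ in $\env(\nonvaluespairs(\RR))$: since $M,N$ are non-values the outermost configurations emit only $\tau$-transitions (no $i,C$ transition is enabled, as the last term is not a value). A transition $(\Gamma,C_v[\Gamma],M) \ts{\tau} (\Gamma,C_v[\Gamma],M')$ arises from $M \cbn M'$, hence from $(\Gamma,M) \ts{\tau} (\Gamma,M')$; using $\RR \leadsto \SS$ we obtain $(\Delta,N) \ws{} (\Delta,N')$ with $(\Gamma,M') \SS (\Delta,N')$. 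This $\tau$-chain lifts unchanged to the enclosing configuration, yielding $(\Delta,C_v[\Delta],N) \ws{} (\Delta,C_v[\Delta],N')$, and the resulting pair sits in $\env(\SS)$; the symmetric case is analogous. Hence $\env \circ \nonvaluespairs \leadsto \env \subseteq T(\env \cup \term \cup \eval)$.

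Condition \eqref{e:fr}, which I expect to be the main subtlety, asks for $\env \circ \reductions \subseteq t \circ \env$. A pair in $\env(\reductions(\RR))$ has the form $((\Gamma,C_v[\Gamma],M),(\Delta,C_v[\Delta],N))$ with $M \wcbn M_0$, $N \wcbn N_0$ and $(\Gamma,M_0) \RR (\Delta,N_0)$. Both weak reductions lift to $\tau$-chains of the enclosing configurations, so the pair lies in $\Rightarrow \env(\RR) \Leftarrow$. In the weak setting, $\LaNu$ is quasi-deterministic, so by Lemma~\ref{l:red:expansion} we have $\Rightarrow \,\subseteq\, {\geq}$, giving a pair in ${\geq}\,\env(\RR)\,{\leq}$; up-to-expansion is below $t_\weakb$, so we conclude. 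In the strong setting, the unfolding function $\strongb$ is itself below $t_\strongb$, so the $\tau$-prefixes and suffixes are absorbed directly. Composing with distributivity, compatibility of $t$, and the properties $T(h) = T(h)^\omega$ and $t \subseteq T(h)$ of the companion (as displayed in the proof of Lemma~\ref{l:val:nonval}) closes the argument.
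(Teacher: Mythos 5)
Your proposal matches the paper's proof essentially step for step: it reduces the statement to Lemma~\ref{l:val:nonval} with $f=\env$ and $g=\term\cup\eval$, discharges the value case via Lemmas~\ref{l:env:eval:values} and~\ref{l:progression:term} with exactly the same chain of inclusions, handles the non-value case by the same direct $\tau$-diagram chase establishing the stronger $\env\circ\nonvaluespairs\leadsto\env$, and verifies the reduction condition via $\env\circ\reductions\subseteq\reductions\circ\env$ together with $\reductions\subseteq t$ obtained from quasi-determinism and up-to-expansion. The only imprecision is your aside on the strong case---$\reductions$ is not below $t_{\strongb}$ and the unfolding function $\strongb$ does not absorb asymmetric $\tau$-prefixes---but this is immaterial because in the strong instance of Lemma~\ref{l:val:nonval} the function $\reductions$ is only ever applied with zero reduction steps, a point the paper itself elides by noting the strong case is ``similar but easier''.
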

\begin{proof}
  This is the conclusion of Lemma~\ref{l:val:nonval} with $f=\env$ and
  $g=\term ∪ \eval$, so it is sufficient to establish the premises of
  the lemma:
  \begin{enumerate}
  \item $\env ∘ \valuespairs \leadsto T(\env ∪ \term ∪ \eval)$:
    \[\begin{array}{r@{\,}ll}
        \env ∘ \valuespairs ∘ \bb
        &⊆ t ∘ \term ∘ \bb & \mbox{by Lemma~\ref{l:env:eval:values}} \\
        &⊆ t ∘ \bb ∘ T(\env ∪ \term ∪ \eval) & \mbox{by Lemma~\ref{l:progression:term} and monotonicity of $t$} \\
        &⊆ \bb ∘ t ∘ T(\env ∪ \term ∪ \eval) & \mbox{by compatibility of $t$} \\
        &⊆ \bb ∘ T(\env ∪ \term ∪ \eval) & \mbox{since $t⊆ T(h)$ and $T(h)^2=T(h)$ for all $h$.} \\
      \end{array}\]
  \item $\env ∘ \nonvaluespairs \leadsto T(\env ∪ \term ∪ \eval) $
    follows from the stronger inclusion
    $\env ∘ \nonvaluespairs \leadsto \env $:
    
    Let ${\R}\relprogress{}{\S}$ and
    $((Γ,C_v[Γ],M),(Δ,C_v[Δ],N)) \in \str(\nonvaluespairs(\R))$ i.e.~with
    $M$ and $N$ non-values. Challenges from the left-hand side are of
    the form $(Γ,C_v[Γ],M) \ts{\tau} (Γ,C_v[Γ],M')$, which is
    equivalent to $(Γ,M) \ts{\tau} (Γ,M')$. The progression
    ${\R}\relprogress{}{\S}$ tells us that there exists $N'$ such that
    $(Δ,N) \ws{} (Δ,N')$ with $(Γ,M) \S (Δ,N')$, and so
    $(Δ,C_v[Δ],N) \ws{} (Δ,C_v[Δ],N')$ with
    $((Γ,C_v[Γ],M'),(Δ,C_v[Δ],N')) \in \str(\S)$. Challenge from the
    right-hand side are handled symmetrically.
  \item $\env ∘ \reductions ⊆ t ∘ \env$: since
    $\reductions ⊆ \uptoexp ⊆ t$, we only need to prove
    $\env ∘ \reductions ⊆ \reductions ∘ \env $. This can be derived
    more algebraically: it is trivial to check that $\env$ respects
    relation mirroring, relational composition, and silent transitions
    ($\env(\R^{-1}) ⊆ \env(\R)^{-1}$, $\env(\R\S) ⊆ \env(\R)\env(\S)$,
    and $\env(\ts{τ})⊆{\ts{τ}}$), from which
    $\env(⇒\R⇐) \subseteq {⇒}\env(\R){⇐}$ is a direct consequence.
    \qedhere
  \end{enumerate}
\end{proof}

The stronger result $\env \funprogress{} T(\env ∪ \term)$
holds as well, but it requires a  longer proof (also  redundant with
the progression for $\term$) and it is not necessary.
The progression for $\eval$ follows the same pattern.

\begin{lemma}
  \label{l:progression:eval}
  $\eval \funprogress{} T(\env ∪ \term ∪ \eval)$.
\end{lemma}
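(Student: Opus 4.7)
The plan is to apply Lemma~\ref{l:val:nonval} with $f = \eval$ and $g = \env \cup \term$, following exactly the same structure as the proof of Lemma~\ref{l:progression:env}. The distributivity hypothesis $\eval(\RR \cup \SS) \subseteq \eval(\RR) \cup \eval(\SS)$ is immediate from the definition, since any witness pair $((\Gamma, E[M, \Gamma]), (\Delta, E[N, \Delta]))$ in $\eval(\RR \cup \SS)$ carries a witness $(\Gamma, M)\,(\RR \cup \SS)\,(\Delta, N)$ that lies on one particular side of the union. For the value-pair case $\eval \circ \valuespairs \leadsto T(\env \cup \term \cup \eval)$, Lemma~\ref{l:env:eval:values} supplies the inclusion $\eval \circ \valuespairs \subseteq t \circ \term$; composing with $\bb$, invoking Lemma~\ref{l:progression:term}, and using the compatibility of $t$ together with the idempotence $T(h)^2 = T(h)$ closes this part by the same four-line calculation as in step~1 of the proof of Lemma~\ref{l:progression:env}.

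The non-value case $\eval \circ \nonvaluespairs \leadsto \eval$ is the only part requiring actual inspection of the LTS. Take $((\Gamma, E[M, \Gamma]), (\Delta, E[N, \Delta])) \in \eval(\nonvaluespairs(\RR))$, so that $(\Gamma, M)\,\RR\,(\Delta, N)$ with both $M$ and $N$ non-values. Because the first hole of $E$ sits in redex position and $M$ is a non-value, $E[M, \Gamma]$ is itself a non-value, so the only transitions from $(\Gamma, E[M, \Gamma])$ are $\tau$-transitions of the form $E[M, \Gamma] \cbn E[M', \Gamma]$ induced by $M \cbn M'$. The progression $\RR \leadsto \SS$ then provides $(\Delta, N) \ws{} (\Delta, N')$ with $(\Gamma, M') \SS (\Delta, N')$, and this lifts to $(\Delta, E[N, \Delta]) \ws{} (\Delta, E[N', \Delta])$, yielding a pair in $\eval(\SS)$.

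For the reduction case $\eval \circ \reductions \subseteq t \circ \eval$, the same algebraic argument as for $\env$ applies: $\eval$ respects mirroring, relational composition, and silent transitions, the last point following from the lifting of $\cbn$-steps through evaluation contexts. This yields $\eval({\Rightarrow}\,\RR\,{\Leftarrow}) \subseteq {\Rightarrow}\,\eval(\RR)\,{\Leftarrow} = \reductions(\eval(\RR))$, and one concludes via $\reductions \subseteq \uptoexp \subseteq t$.

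The only subtle point throughout the argument — shared with the proof of Lemma~\ref{l:progression:env} — is that the $\cbn$-chain from $N$ to $N'$ might a priori pass through a value, at which point a new head-redex could appear inside $E$ and apparently break the clean lifting $(\Delta, E[N, \Delta]) \ws{} (\Delta, E[N', \Delta])$. This scenario is ruled out by the elementary observation that values admit no $\tau$-transitions in $\LaNu$, so every intermediate term of a $\ws{}$ chain starting from a non-value remains a non-value until its final term.
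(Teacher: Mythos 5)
Your proposal is correct and follows essentially the same route as the paper: the paper's proof likewise instantiates Lemma~\ref{l:val:nonval} with $f=\eval$ and $g=\term\cup\env$, discharges the value case via Lemmas~\ref{l:env:eval:values} and~\ref{l:progression:term}, proves the stronger $\eval\circ\nonvaluespairs\leadsto\eval$ using exactly the two lifting facts you identify, and handles $\eval\circ\reductions\subseteq t\circ\eval$ by the same algebraic argument as for $\env$. Your closing remark about chains passing through values is a correct (if unneeded, since values have no $\cbn$-reduct) justification of the lifting the paper takes for granted.
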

\begin{proof}
  Similarly we apply Lemma~\ref{l:val:nonval} with $f=\eval$ and
  $g=\term ∪ \env$, and prove the hypotheses:
  \begin{enumerate}
  \item $\eval ∘ \valuespairs \leadsto T(\env ∪ \term ∪ \eval)$ is a
    consequence of Lemmas~\ref{l:env:eval:values}
    and~\ref{l:progression:term}.
  \item $\eval ∘ \nonvaluespairs \leadsto T(\env ∪ \term ∪ \eval) $
    follows from the stronger inclusion
    $\eval ∘ \nonvaluespairs \leadsto \eval $. This can be proved the
    same way as in Lemma~\ref{l:progression:env} using the facts that
    $E[M,Γ] \cbn M_1$ implies that for some $M'$, $M_1 = E[M',Γ]$ with
    $M\cbn M'$ and that $N\wcbn N'$ implies $E[N,Δ] \wcbn E[N',Δ]$.
  \item $\eval ∘ \reductions ⊆ t ∘ \eval$ is similarly derived from
    $\eval(\R^{-1}) ⊆ \eval(\R)^{-1}$,
    $\eval(\R\S) ⊆ \eval(\R)\eval(\S)$, and $\eval(\ts{τ})⊆{\ts{τ}}$.
    \qedhere
  \end{enumerate}
\end{proof}

\begin{theorem}
  \label{t:cbn:context}
  The functions $\env, \term, \eval$ are below both companions
  $t_{\strongb}$ and $t_{\weakb}$.
\end{theorem}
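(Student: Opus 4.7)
The plan is to apply principle (3') stated earlier in Section~\ref{s:back}: for any set $F$ of functions such that each $f \in F$ satisfies $f \funprogress{\bb} T(\bigcup F)$, every function in $F$ lies below $t_\bb$. Taking $F \eqdef \{\env, \term, \eval\}$, the three required progressions are exactly the content of Lemmas~\ref{l:progression:term}, \ref{l:progression:env}, and~\ref{l:progression:eval}. So the theorem is essentially a direct assembly of those three lemmas.

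The one subtlety worth flagging is that the theorem is stated for \emph{both} companions $t_\strongb$ and $t_\weakb$, whereas the progression lemmas are written with the generic symbol $\leadsto$. I would therefore check briefly that each of the three progression proofs goes through uniformly for $\bb \in \{\strongb, \weakb\}$. For Lemma~\ref{l:progression:term} this is immediate: the diagram chases use weak transitions $\ws{}$, which specialise correctly to single-step transitions in the strong case (since $\ts\mu$ implies $\ws{\hat\mu}$). For Lemmas~\ref{l:progression:env} and~\ref{l:progression:eval}, the ingredient that matters is Lemma~\ref{l:val:nonval}, whose statement explicitly quantifies over $\bb$; the auxiliary inclusion $\reductions \subseteq t$ used there holds in both cases (trivially in the strong case, and via Lemma~\ref{l:red:expansion} and up-to-expansion in the weak case, noting that in the strong case $\reductions$ collapses to identity up to $\sim$-closure).

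Concretely, the proof I would write is a three-line application: let $F \eqdef \{\env,\term,\eval\}$; by Lemmas~\ref{l:progression:term}, \ref{l:progression:env}, and~\ref{l:progression:eval} we have $f \funprogress{\bb} T(\bigcup F)$ for every $f \in F$ and for $\bb \in \{\strongb, \weakb\}$; hence by principle~(3') each of $\env$, $\term$, $\eval$ is below $t_\bb$. I do not expect any real obstacle, because all the delicate work has been done in the preceding lemmas; the point of isolating this statement as a theorem is merely to package the three mutually coinductive progressions into the single fact that all three up-to techniques are valid. As an immediate consequence (worth noting in the paragraph after the proof, even if not in the proof itself), combining Lemma~\ref{l:compatible:closure} with the companion bound $\term \subseteq t_\weakb$ yields the congruence of environmental bisimilarity under arbitrary contexts, and $\env \subseteq t_\weakb$ yields the soundness of the up-to-environment technique used to replace an environment by a larger one.
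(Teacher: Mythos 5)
Your proof is correct and is essentially the paper's own argument: the paper likewise just combines Lemmas~\ref{l:progression:term}, \ref{l:progression:env}, and~\ref{l:progression:eval} to get $\env \cup \term \cup \eval \funprogress{} T(\env \cup \term \cup \eval)$, concludes via the companion (i.e.\ principle~(3')), and dismisses the strong case as ``similar but easier'' --- in particular the analogue of Lemma~\ref{l:trick} is not needed there, matching your observation that the quasi-determinism machinery degenerates in the strong setting.
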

\begin{proof}
  Combining Lemmas~\ref{l:progression:term}, \ref{l:progression:env},
  and \ref{l:progression:eval} provides us with the following
  progression for $\weakb$:
  \begin{align*}
    \env ∪ \term ∪ \eval &\funprogress{} T(\env ∪ \term ∪ \eval)
  \end{align*}
  and therefore $\env ∪ \term ∪ \eval$ is below the companion
  $t_\weakb$. The case for $\strongb$ is similar but easier; in
  particular the analogue of Lemma~\ref{l:trick} is not required.
\end{proof}

Once more, the fact that up-to-context functions are below $t$
entails the corresponding congruence properties of environmental
bisimilarity. In \cite{envbisim} the two aspects (congruence and
up-to-context) had to be proved separately, with similar proofs.
Moreover the two cases of contexts (arbitrary contexts and evaluation
contexts) had to be considered at the same time, within the same
proof. Here, in contrast, the machinery of compatible functions allows
us to split the effort into simpler proofs.

\begin{remark}\label{r:cancellation}
  A transition system ensuring full abstraction as in
  Theorem~\ref{t:cbn:fullabstraction} does not guarantee the
  compatibility of the up-to techniques specific to the language in
  consideration. For instance, a simpler and maybe more natural
  alternative to the second transition in \eqref{e:cbn:lts} is the
  following one:
  \begin{equation}
    \label{e:cbn:lts:alt}
    \inferrule{ }{Γ\ts{i,C} (Γ, Γ_i(C[Γ]))}
  \end{equation}
  With this rule, full abstraction holds, but up-to-context is
  unsound: for every $Γ$ and $Δ$, the singleton relation $\{(Γ,Δ)\}$ is
  a bisimulation up to $\term$: indeed, using
  rule~\eqref{e:cbn:lts:alt}, the derivatives of the pair $Γ,Δ$ are of
  the shape $Γ_i(C[Γ])$, $Δ_i(C[Δ])$, and they can be discarded
  immediately, up to the context $\holei i C$. If up-to-context were
  sound then we would deduce that any two terms are bisimilar. (The
  rule in~\eqref{e:cbn:lts} prevents such a behaviour since it ensures
  that the tested values are `consumed' immediately.)
\end{remark}


\section{Imperative call-by-value λ-calculus}
\label{s:icbv}

In this section we study the addition of imperative features
(higher-order references, that we call locations), to a call-by-value
$\lambda$-calculus. It is known that finding powerful reasoning
techniques for imperative higher-order languages is a hard problem.
The language, $\LaI$, is a simplified variant of that in
\cite{KoutavasW06,envbisim}.  The syntax of terms, values, and
evaluation contexts, as well as the reduction semantics are given in
Figure~\ref{f:LR}.  A $\lambda$-term $M$ is run in a \emph{store}: a
partial function from locations to closed values, whose domain
includes all free locations of both $M$ and its own co-domain.
We use letters $r,s,u,v$ to range over stores. New store locations may be
created using the operator $\newloc M$; the content of a store
location $\ell$ may be read using $\get{ℓ} V$, or rewritten using
$\set{ℓ} V$ (the argument of the former instruction is ignored, and
the latter instruction returns the identity value
$I \eqdef \lambda x. x$). We denote the reflexive and transitive
closure of $\icbv$ by $\wicbv$.

Note that in contrast with the languages in
\cite{KoutavasW06,envbisim}, locations are not directly first-class
values; the expressive power is however the same: a first-class
location ${ℓ}$ can always be encoded as the pair $(\get{ℓ},\set{ℓ})$.
Having locations as first-class values by themselves is possible but
would require two additional labels (for reading and writing), two
additional rules, two new cases in the corresponding case analyses,
and new ways to build contexts from environments; presentation and
proofs would then be substantially more involved. Hence, for
readability issues, we have preferred to forbid it.

\begin{figure}[t] 
  \begin{align*}
    M &::= x ∣ M M ∣ \newloc M ∣ V&&&
    V &::= λx.M ∣  \get{ℓ} ∣ \set{ℓ}&&&
    E &::= \holei{} ∣ E V ∣ M E
  \end{align*}
  \begin{mathpar}
    \inferrule{ }{(s\sep (λx.M)V) \icbv (s\sep M\{V/x\}) } \and
    \inferrule{ ℓ∉\dom{s} }{(s\sep \newloc M) \icbv (s[ℓ↦I]\sep M) } \and
    \inferrule{ ℓ∈\dom{s} }{(s\sep \get{ℓ}V) \icbv (s\sep s[ℓ]) } \and
    \inferrule{ ℓ∈\dom{s} }{(s\sep \set{ℓ}V) \icbv (s[ℓ↦V]\sep I) } \and
    \inferrule{ (s\sep M) \icbv (s'\sep M') }{(s\sep E[M]) \icbv (s'\sep E[M']) } \and
  \end{mathpar}
  \caption{The imperative $\lambda$-calculus} 
  \label{f:LR}
\end{figure}

We present the first-order LTS for $\LaI$, and then we relate the
resulting strong and weak bisimilarities directly with contextual
equivalence (the reference equivalence in $\lambda$-calculi).
Alternatively, we could have related the first-order bisimilarities to
the environmental bisimilarities of $\LaI$, and then inferred the
correspondence with contextual equivalence from known results about
environmental bisimilarity, as we did for $\LaN$.

We write $(s\sep M) \dwa$ when $M$ is a value; and $(s\sep M) \Dwa$ if
$(s\sep M) \wicbv \dwa$.  For the definition of contextual
equivalence, we distinguish the cases of values and of arbitrary
terms, because they have different congruence properties: values
can be tested in arbitrary contexts, while arbitrary terms must be
tested only in evaluation contexts.  As in~\cite{envbisim}, we
consider contexts that do not contain free locations (they can contain
bound locations). We refer to \cite{envbisim} for more details on
these aspects.

\begin{definition}\label{d:refeq:values}
  \begin{itemize}
  \item For values $V$, $W$, we write $(s\sep V) \refeq (r\sep W)$
    when $(s\sep C[V]){⇓}$ iff $(r\sep C[W]){⇓}$, for all
    location-free contexts $C$.
  \item For terms $M$ and $N$, we write $(s\sep M) \refeq (r\sep N)$
    when $(s\sep E[M]){⇓}$ iff $(r\sep E[N]){⇓}$, for all
    location-free evaluation contexts $E$.
  \end{itemize}
\end{definition}

We now define $\LaIu$, the first-order LTS for $\LaI$.  The states and
the transitions for $\LaIu$ are similar to those for the pure
$\lambda$-calculus of Section~\ref{s:cbn}, with the addition of a
component for the store.  The two transitions (\ref{e:cbn:lts}) of
call-by-name $\lambda$-calculus become:
\begin{align*}
  \inferrule{(s\sep M)\icbv (s'\sep M')}{(s\sep Γ,M)\ts{τ}(s'\sep Γ,M')} \quad
  \inferrule{Γ'=Γ,\getset{r} \quad
    \left(s⊎r[Γ']\sep  Γ_i(C[Γ'])\right) \icbv (s'\sep M')}
     { (s\sep Γ) \arrR{i,C,\cod(r)} (s'\sep Γ',M') }
\end{align*}
The first rule is the analogous of the first rule in
(\ref{e:cbn:lts}). The important differences are on the second rule.
First, since we are \emph{call-by-value}, $C$ now ranges over
$\vcontexts$, the set of \emph{value contexts} (i.e., holes or
contexts of the form $λx.C'$) without free locations. Moreover, since
we are now \emph{imperative}, in a transition we must permit the
creation of new locations, and a term supplied by the environment
should be allowed to use them. In the rule, the new store is
represented by $r$ (whose domain has to be disjoint from that of $s$).
Correspondingly, to allow manipulation of these locations from the
observer, for each new location $\ell$ we make $\get{\ell}$ and
$\set{\ell}$ available, as an extension of the environment; in the
rule, these are collectively written $\getset{r}$, and $\Gamma'$ is
the extended environment. Finally, we must initialise the new store,
using terms that are created out of the extended environment
$\Gamma'$; that is, each new location $\ell$ is initialised with a
term $D_\ell [\Gamma']$ (for $D_\ell \in \vcontexts$). Moreover, the
contexts $D_\ell$ chosen must be made visible in the label of the
transition. To take care of these aspects, we view $r$ as a
\emph{store context}, a tuple of assignments $ℓ↦ D_\ell$. Thus the
initialisation of the new locations is written $r[\Gamma']$; and,
denoting by $\cod(r)$ the tuple of the contexts $D_\ell$ in $r$, we
add $\cod(r)$ to the label of the transition. Note also that, although
$C$ and $D_ℓ$ are location-free, their holes may be instantiated with
terms involving the $\get{\ell}$ and $\set{\ell}$ operators,
so that these contexts may still manipulate the store.

\iflong
(while maintaining the well-formedness of the terms) so this argument
must be \emph{stateful}: it must be able to use locations.  To this
end, we enrich the transition with a piece of new store $r$ and the
argument $C[Γ]$ is replaced with $C[Γ,\getset{r}]$ where $\getset{r}$
is the collection of $\get{ℓ}$ and $\set{ℓ}$ for $ℓ∈\dom{r}$.

The values contained in the store must themselves be stateful and
contain parts of $Γ$.  That is why the $r$ in the label is a
\emph{store context}, a family of partial assignments $ℓ↦C_ℓ$ where
$C_ℓ∈\vcontexts$.  We write $r[Γ']$ is the store containing
assignments $ℓ↦C_ℓ[Γ']$ that we filled with the augmented environment
$Γ' = Γ,\getset{r}$.
\fi

Once more, on the (strong and weak) bisimilarities that are derived
from this first-order LTS, we can import the theory of compatible
functions and bisimulation enhancements.  Like in Section~\ref{s:pi}
for $\pi$, we establish the validity of a few up-to techniques before
proving full abstraction: these techniques give us important closure
properties of bisimilarities via Lemma~\ref{l:compatible:closure}.

Concerning additional up-to functions, specific to $\LaIu$, the
functions $\weak$, $\env$, $\term$ and $\eval$ are adapted from
Section~\ref{s:cbn} in the expected manner---contexts $C_v$, $C$ and
$E$ must be location-free.  A further function for $\LaIu$ is
$\alloc$, which manipulates the store by removing locations that do
not appear elsewhere (akin to garbage collection); thus, $\alloc(𝓡)$
is the set of all pairs
\[ ((s ⊎ r[Γ']\sep Γ', M) ,\, (u ⊎ r[Δ']\sep Δ', N)) \] such that
$ (s\sep Γ, M) \R (u\sep Δ, N)$, and with $Γ'=Γ,\getset{r}$ and
$ \Delta'= \Delta ,\getset{r}$. Note that we must have
$\dom{r}∩\dom{s}=∅=\dom{r}∩\dom{u}$. This may seem unnecessarily
restrictive, but since renaming locations on either side using an injective
substitution is a strongly bisimilar operation, using
$(𝓡↦{\sim\R\sim}) ∘ \alloc$ allows to choose $r_1$ on the left and
$r_2$ on the right, as long as $\cod(r_1)=\cod(r_2)$.

\begin{lemma}\label{l:icbv:context}
  The functions $\weak, \env, \eval, \alloc, \term$ are below both
  companions  $t_\strongb$ and  $t_\weakb$.
\end{lemma}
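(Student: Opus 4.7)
The plan is to mirror the strategy of Theorem~\ref{t:cbn:context}, treating $\alloc$ as an additional up-to function and threading the store through every case. Concretely, I would prove a single mutually coinductive progression $f \funprogress{\bb} T(\weak \cup \env \cup \eval \cup \alloc \cup \term)$ for each $f$ in that union, so that instance (3') of compatibility up to $T$ and Lemmas~\ref{l:compatible:sound}--\ref{l:compatible:closure} yield the conclusion for both $\bb=\strongb$ and $\bb=\weakb$. For the weak case, the LTS $\LaIu$ is quasi-deterministic (reduction is deterministic, and the label $i,C,\cod(r)$ together with the pre-state determines the post-state up to renaming of freshly allocated locations, itself a strong bisimilarity), so Lemma~\ref{l:trick} lifts to $\LaIu$ and Lemma~\ref{l:val:nonval} applies just as in Section~\ref{s:cbn}.

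I would treat $\weak$ directly by compatibility, as in Lemma~\ref{l:cbn:weakening}: a label challenge $i,C,\cod(r)$ on the smaller configuration corresponds to $i{+}n, C_{+n}, \cod(r_{+n})$ on the weakened one, where the $+n$ shift reindexes holes to account for the $n$ prepended values. For $\env$, $\eval$, $\term$, each proof mimics Lemmas~\ref{l:progression:term}--\ref{l:progression:eval}: the analogue of Lemma~\ref{l:env:eval:values} still holds (using $\weak$ to drop the redundant value and $\term$ to rearrange holes), and the case analysis on $C$ or $E$ proceeds unchanged, with the only new twist that each label now carries the extra piece $\cod(r)$, which is replayed verbatim on the right-hand side so that the freshly allocated locations are threaded symmetrically into both derivatives.

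For $\alloc$, the plan is to chase pairs $((s \uplus r[\Gamma']\sep \Gamma', M), (u \uplus r[\Delta']\sep \Delta', N))$ arising from $(s\sep\Gamma,M)\R (u\sep\Delta,N)$. Since $M$ and the values in $\Gamma$ cannot mention the fresh locations in $\dom{r}$, any $\tau$-challenge is equally a $\tau$-transition of $(s\sep\Gamma,M)$, answered via $\R$ and re-wrapped by $\alloc$; a label challenge $i,C,\cod(r')$ extends $r$ to $r\uplus r'$, placing both derivatives again in $\alloc$ after a hole reindexing absorbable in $\term$. The main obstacle is the requirement that $\alloc$ uses the \emph{same} store context $r$ on both sides: I would accommodate this by composing with the $\weakb$-compatible function $(\R \mapsto {\sim}\R{\sim})$, which permits independently chosen disjoint $r_1, r_2$ with matching co-domains since renaming locations is a strong bisimilarity. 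The same observation is what justifies the quasi-determinism claim used in the first paragraph.
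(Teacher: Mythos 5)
Your overall strategy---a single mutually coinductive progression of $\weak\cup\env\cup\eval\cup\alloc\cup\term$ to $T$ of that union, with $\weak$ proved compatible outright, $\env$ and $\eval$ reduced to $\term$ via the analogues of Lemmas~\ref{l:env:eval:values} and~\ref{l:val:nonval}, quasi-determinism of $\LaIu$ justified by the harmlessness of location renaming, and $(\R\mapsto{\sim}\R{\sim})$ used to decouple the two store contexts---is exactly the paper's. However, the two places where the imperative features actually bite are absent from your case analyses, and they are precisely where the paper spends its effort.

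First, in the progression for $\alloc$, you only cover visible challenges that can be replayed on the underlying pair $(s\sep\Gamma,M)\R(u\sep\Delta,N)$. That works when the tested index $i$ selects a value of $\Gamma$ (or the final value): the paper then folds the $\alloc$-introduced store context $r$ into the store-context component of a \emph{new} label on the inner pair, choosing $v'$ and $C'$ so that $r$ and the challenge's store context $v$ are jointly expressed over the smaller environment. But when $i$ selects one of the $\get{\ell}$ or $\set{\ell}$ with $\ell\in\dom{r}$---entries that exist only in the extended environment $\Gamma'=\Gamma,\getset{r}$---there is no transition of the underlying pair to replay at all. This branch needs a direct computation: $\get{\ell}$ returns $r[\Gamma']_\ell$, itself a context instance of the environment, and $\set{\ell}$ rewrites only the $r$-part of the store and returns $I$; the derivatives are then recovered through $\term\circ\alloc\circ t$, using Lemma~\ref{l:trick} in the weak case to drive the right-hand side to a value configuration. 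Second, the case analysis for $\term$ does \emph{not} proceed unchanged from Lemma~\ref{l:progression:term}: a non-value $C[\Gamma]$ may now be of the form $E[\newloc C_1[-,\get{\ell},\set{\ell}],-]$, so a silent step can allocate a fresh location whose initial content and access capabilities must be threaded into both sides. This is where $\alloc$ is actually consumed inside the $\term$ progression (the paper closes that diagram with $\weak(\term(\alloc(\R)))$), and likewise the value case where the last environment entry is applied now augments the store and needs $\term(\alloc(\term(\R)))$. Without these two arguments the progressions for $\alloc$ and $\term$ do not close; the rest of your plan matches the paper.
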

\begin{proof}
  We apply the same proof schema as in
  Theorem~\ref{t:cbn:context} with more technical details to be
  handled, as the store is to be accounted for. We provide
  details mainly for the progression starting from $\alloc$ itself,
  which is the most interesting new aspect. We
  explain how the other parts
  are handled, with reference to  the proof of
  Theorem~\ref{t:cbn:context}. \medskip
  
  We handle $\alloc$ first. To avoid introducing and remembering many
  new names such as $Γ'$, $Γ''$, etc., we write $Γ^V$ for $Γ,V$ and
  $Γ^r$ for $Γ,\getset{r}$. For example, the rule for visible
  transitions can be rewritten into
  \[\inferrule
    {(s\sep Γ_i(C[Γ^r])) \icbv (s'\sep M')}
    {(s,Γ)\xrightarrow{i,C,\cod(r)}(s'\uplus r[Γ^r],Γ^r,M')}
    \enspace.
  \]
  It also simplifies writing
  and reading when taking one index of a composed environment, for
  example $\Gamma^{r V}_i$ should be read as the $i$th element of
  $(\Gamma^r)^V$, which can be either $\Gamma_i$ (if $i\leq|Γ|$), or
  in $\getset{r}$, or $V$.
  Now $\alloc$ can be redefined as
  \[\inferrule
    {(s\sep Γ,M) \R (u\sep Δ,N)}
    {(s\uplus r[Γ^r]\sep Γ^r,M)~\alloc(\R)~(u\uplus r[Δ^r]\sep Δ^r,N)}
    \enspace.
  \]
  
  We assume ${\R}\relprogress{}{\S}$ and analyse the transitions
  starting from pairs in $\alloc(\R)$, i.e.\ the transitions of
  $(s\uplus r[Γ^r]\sep Γ^r,M)$. Silent transitions are, once again,
  easy to handle, as the locations of $M$ are contained in the domain
  of $s$, and the other part of the term, namely $r[Γ^r]$, is left
  unchanged (progressing to $\alloc(𝓢)$---in particular,
  $\alloc ∘ \nonvaluespairs \funprogress{} \alloc$).

  We now handle the visible transitions of
  $(s\uplus r[Γ^r]\sep Γ^r,M)$, (i.e.\ $M$ is a value $V$), labelled
  by $μ$ such that $μ=i,C,\cod(v)$ for some $v$. We choose $v$ such
  that the locations used by $v$, which do not appear in the label,
  are fresh. The transition is:
  \begin{align}
    \label{e:icbvstore:tr}
    \inferrule
    { (s\uplus r[Γ^r] \uplus v[Γ^{r V v}]\sep Γ^{r V}_i(C[Γ^{r V v}])) \icbv (s' \sep M')}
    {(s\uplus r[Γ^r]\sep Γ^{r V}) \xrightarrow{i,C,\cod(v)} (s'\sep Γ^{r V v}, M')}
  \end{align}
  There are two cases, depending if $Γ^{r V}_i$ is in $Γ^V$ or in $\getset{r}$.
  \begin{enumerate}
  \item Suppose $i \leq |Γ|$ or $i=|Γ^{r}|+1=|Γ|+2|r|+1$. Then
    $Γ^{r V}_i = Γ^V_{i'}$ for $i' = \min(i,|Γ^r|+1)$, and we can
    derive a similar $\icbv$ transition from $(s,Γ^V)$, using label
    $μ' = i',C',\cod({v'})$ for some ${v'}$ and $C'$ such that:
    \begin{enumerate}
    \item $r[Γ^r] \uplus v[Γ^{r V v}] = {v'}[Γ^{V {v'}}]$
    \item $C[Γ^{r V v}] = C'[Γ^{V {v'}}]$
    \end{enumerate}
    The premise of~\eqref{e:icbvstore:tr} is hence equal to the
    premise below, which has however a different conclusion:
    \[
    \inferrule
    { (s\uplus {v'}[Γ^{V {v'}}]\sep Γ^{V}_{i'}(C'[Γ^{V {v'}}])) \icbv (s' \sep M')}
    {(s \sep Γ^{V}) \xrightarrow{i',C',\cod({v'})} (s'\sep Γ^{V {v'}}, M')} \enspace.
    \]
    We can derive the corresponding 
    transition labelled $i',C',\cod({v'})$ from $(u \sep Δ, N)$ which
    will silently reduce to $(u' \sep Δ^W)$, then make a visible weak
    transition to $(u''\sep Δ^{V {v'}}, N')$ knowing that
    $(s'\sep Γ^{V {v'}}, M') \S (u''\sep Δ^{W {v'}}, N')$. We can then
    replace $Γ^{V {v'}}$ and $Δ^{W {v'}}$ with $Γ^{r V v}$ and $Δ^{r W v}$,
    to prove that
    $(s'\sep Γ^{r V v}, M') \S_1 (u''\sep Δ^{r W v}, N')$, where
    ${\S_1}$ is $\S$ where we applied the `up-to-permutation'
    technique to move $V$ and $W$ in the middle of ${v'}$. This technique
    is compatible, so ${\S_1} \subseteq t_{\weakb}(\S)$ (and
    ${\S_1} \subseteq t_{\strongb}(\S)$).
  \item Suppose $i ∈ \{|Γ|+1,…,|Γ|+2|r|\}$. Then $Γ^{r V}_i$ is either
    $\lget{ℓ}$ or $\lset{ℓ}$ with $ℓ∈\dom{r}$.
    \begin{enumerate}
    \item if $Γ^{r V}_i=\lget{ℓ}$ then $s'$ is not modified and
      $M'=r[Γ^r]_ℓ$ is a context of $Γ^r$ and hence is also a context of $Γ^{V {v'}}$ using
      the same ${v'}$ as above. The result of the transition is:
      $$(s \uplus {v'}[Γ^{V {v'}}]\sep Γ^{V {v'}}, C_1[Γ^{V {v'}}]) \eqdef x'.$$
      In the weak case, using Lemma~\ref{l:trick} we get
      $(u\sep Δ, N) \ws{} (u'\sep Δ^W)$, and this term is  related to $(s,Γ^V)$ through
      $\weakb(≳\S≲)⊆t_{\weakb}(𝓢)$.
      
      
      Finally we can relate $x'$ to
      $(u' \uplus {v'}[Δ^{W {v'}}]\sep Δ^{W {v'}}, C_1[Δ^{W {v'}}])$ through
      $\term(\alloc(t(𝓢)))$.
    \item if $Γ^{r V}_i=\lset{ℓ}$ then $s'$ is modified at $ℓ∈\dom{r}$
      (so we only have to change  $r$) and $M'=I=C_1[Γ^{V {v'}}]$ for
      $C_1 = I$ (a context with no holes)
      so the pair progresses again, using the same notations as
      before, to $\term(\alloc(t(𝓢)))$.
    \end{enumerate}
  \end{enumerate}
  In summary, we have
  $\alloc(𝓡) \relprogress{} \alloc(𝓢) ∪ t(𝓢) ∪ \term(\alloc(t(𝓢)))$, and so
  \begin{align}
    \label{e:icbv:alloc}
    \alloc \funprogress{} T(\alloc ∪ \term)
  \end{align}
  We now establish the progressions for the remaining functions.
  First, $\weak\funprogress{}\weak$ with the same argument as in the
  proof of Lemma~\ref{l:cbn:weakening}, so $\weak \subseteq t$. The
  most important proof is for $\term$. We assume
  $(s\sep Γ)\R(u\sep Δ)$, and we analyse the transitions from
  $(s\sep Γ,C[Γ])$.
  \begin{enumerate}
  \item if $C[Γ]$ is a value, then $C$ is a value context $C_v$, and
    the same structure as for the corresponding case in
    Lemma~\ref{l:progression:term} applies here, with the only
    significant difference being in the third case. The transition of
    interest is labelled with $i,C_1,\cod(r)$ such that $i\leq|Γ|+1$.
    \begin{enumerate}
    \item If $i\leq |Γ|$, this means the
      value that is given an argument is one of the
      $Γ_i$s. Let $i,C_1',\cod(r')$ be the label $i,C_1,\cod(r)$ where we
      composed the contexts with $C_v$, so that $C_v$ replaces $\holei{|Γ|+1}$.
      Using this label on the progression $𝓡\relprogress{}𝓢$ we obtain
      a pair in $𝓢$. We apply first $\env$ and then `up-to-permutation' to add
      $C_v$ on each side, which puts the desired
      pair in $t(\env(𝓢))$.
    \item If $i=|Γ|+1$, and $C_v=\holei{j}$, we proceed the same way
      as above, with the label $j,C_1',\cod(r')$ where $C_1'$ (resp. $r'$)
      is $C_1$ (resp. $r$) where $\holei{j}$ replaces all occurrences of
      $\holei{|Γ|+1}$.
    \item If $i=|Γ|+1$ and $C_v$ is not a hole ($C_v=λx.C_2[x,-]$), then the derivative
      is $(s\uplus r[Γ^{\prime r}]\sep Γ^{\prime r},C_3[Γ])$ with $Γ' = Γ,C_v[Γ]$
      and $C_3=C_2[C_v,-]$
      and with an augmented store,
      which results
      in a relation built on $𝓡$, as follows 
      (we use $\term$ and  $\alloc$, and set
      $Δ' = Δ,C_v[Δ]$):
      \[
      \inferrule{
        \inferrule{
          \inferrule
          {(s\sep Γ)~𝓡~(u\sep Δ)}
          {(s\sep Γ')~\term(𝓡)~(u\sep Δ')}}
        {(s\uplus r[Γ^{\prime r}]\sep Γ^{\prime r})~\alloc(\term(𝓡))~(u \uplus r[Δ^{\prime r}]\sep Δ^{\prime r})}}
      {(s\uplus r[Γ^{\prime r}]\sep Γ^{\prime r},C_2[Γ])~\term(\alloc(\term(𝓡)))~(u \uplus r[Δ^{\prime r}]\sep Δ^{\prime r},C_2[Γ])}
      \enspace.
      \]
    \end{enumerate}
  \item If $C[Γ]$ is not a value, then either
    $C=E[C_{v 1} C_{v 2}, -]$ or 
    $C=E[\newloc C_1[-, \get{ℓ}, \set{ℓ}],-]$.
    \begin{enumerate}
    \item If $C=E[C_{v 1} C_{v 2}, -]$ and $C_{v 1}$ is not a hole,
      then $C_{v 1} = λx.C_1[x,-]$ for some $C_1$. The transition is
      of the form $(s;Γ,C[Γ])\ts{τ} (s;Γ,C'[Γ])$ with
      $C'=C_1[C_{v 2},-]$, and similarly for $Δ$:
      $(u;Δ,C[Δ])\ts{τ} (u;Δ,C'[Δ])$. This pair of derivatives is in
      $\term(𝓡)$.
    \item If if $C=E[C_{v 1} C_{v 2}, -]$ and $C_{v 1}=\holei{i}$,
      then some $Γ_i$ is run, so we also run it starting from the
      original configuration, with the label $i,C_{v 2},∅$ using the
      evaluation context function, and therefore progressing to
      $\eval(𝓢)$.
    \item The most interesting case is when
      $C=E[\newloc C_1[-, \get{ℓ}, \set{ℓ}],-]$. Then, $C[Γ]$ creates
      a private location, i.e.,
      $$C[Γ]=E[\newloc C_1[Γ, \get{ℓ}, \set{ℓ}],Γ]$$ and
      $(s\sep Γ,C[Γ])\ts{τ} (s⊎[ℓ↦I] \sep Γ, C_2[Γ, \get{ℓ},
      \set{ℓ},Γ])$ with $C_2=E[C_1,-]$. We prove a stronger result, namely that
     the resulting configurations with the
      context $C_2$ 
are still related if  the $\lget{ℓ}$
      and $\lset{ℓ}$ operators  are available.
The derivation is as follows; 
we use       weakening $\weak$,  exploit $\term$ and $\alloc$, and  
 write
      $Λ$ for $\get{ℓ},\set{ℓ}$:
      \[
      \inferrule{
        \inferrule{
          \inferrule
          {(s\sep Γ) ~~ 𝓡 ~~ (u\sep Δ)}
          {(s⊎[ℓ↦I]\sep Λ,Γ) ~~ \alloc(𝓡) ~~ (u⊎[ℓ↦I]\sep Λ,Δ)}}
        {(s⊎[ℓ↦I]\sep Γ,Λ,C_2[Γ,Λ]) ~~ \term(\alloc(𝓡)) ~~ (u⊎[ℓ↦I]\sep Δ,Λ,C_2[Δ,Λ])}}
      {(s⊎[ℓ↦I]\sep Γ,C_2[Γ,Λ]) ~~ \weak(\term(\alloc(𝓡))) ~~ (u⊎[ℓ↦I]\sep Δ,C_2[Δ,Λ])}\enspace.
      \]
    \end{enumerate}
  \end{enumerate}
  To summarise,
  \[ \term(𝓡) \relprogress{} t(\env(𝓢)) ∪ \term(\alloc(\term(𝓡))) ∪
    \term(𝓡) ∪ \eval(𝓢) ∪ \weak(\term(\alloc(𝓡))) \] and the
  right-hand side is included in
  $T(\weak ∪ \env ∪ \term ∪ \alloc ∪ \eval)(𝓡 ∪ 𝓢)$. Remark that $\weak ⊆ t$, $𝓡 ⊆ \bb(𝓢)$,
  and $\bb⊆t$, so we obtain:
  \begin{align}
    \label{e:icbv:term}
    \term \funprogress{} T(\env ∪ \term ∪ \alloc ∪ \eval)
    \enspace.
  \end{align}
  We now move on to $\env$ and $\eval$. As in $\LaNu$,
  $\valuespairs(𝓡)$ denotes the pairs of value configurations of $𝓡$,
  and $\nonvaluespairs(𝓡)$ the pairs of non-value configurations of
  $𝓡$. It is trivial to check that
  $\env ∘ \valuespairs \subseteq t ∘ \term$ and
  $\eval ∘ \valuespairs \subseteq t ∘ \term$, and so by combining
  with~\eqref{e:icbv:term}, both $\env∘ \valuespairs$ and
  $\eval∘ \valuespairs$ progress to
  $T(\env ∪ \term ∪ \alloc ∪ \eval)$.
  
  It is also straightforward to derive
  $\env ∘ \reductions \subseteq \reductions ∘ \env$ and
  $\eval ∘ \reductions \subseteq \reductions ∘ \eval$, and that
  $\env ∘ \nonvaluespairs \funprogress{} \env$ and
  $\eval ∘ \nonvaluespairs \funprogress{} \eval$. Note that $\LaIu$ is
  quasi-deterministic; indeed, new locations, both for $\newloc$ and
  in the choice of the domain of $r$ in visible transitions, are
  chosen non-deterministically, but their choice does not matter up to
  strong bisimilarity. We can now apply Lemma~\ref{l:val:nonval} with
  $f=\env$ and $g=\term ∪ \alloc ∪ \eval$ to obtain:
  \begin{align}
    \label{e:icbv:env}
    \env \funprogress{} T(\env ∪ \term ∪ \alloc ∪ \eval)
  \end{align}
  and with $f=\eval$ and
  $g=\env ∪ \term ∪ \alloc$ to obtain:
  \begin{align}
    \label{e:icbv:eval}
    \eval \funprogress{} T(\env ∪ \term ∪ \alloc ∪ \eval)
  \end{align}
  Combining ~\eqref{e:icbv:alloc}, %
  ~\eqref{e:icbv:term}, %
  ~\eqref{e:icbv:env}, and %
  ~\eqref{e:icbv:eval}, %
  yields that $h≜(\env ∪ \term ∪ \alloc ∪ \eval)$ progresses to $T(h)$,
  and hence $h\subseteq t$.
\end{proof}

Having established that $\term$ and $\eval$ are below the companion
gives as a consequence that our first-order bisimilarity is a
congruence under location-free contexts, from which we can derive the
soundness implication in Theorem~\ref{t:icbv:fullabstraction} below.

\iflong
 Lemma~\ref{l:icbv:context}  is useful in $\LaI$.
 Intuitively, it replaces Lemma~\ref{l:beta:expansion}
 (showing that the full β-reduction is an
 expansion in call-by-name), which fails in 
 a  call-by-value strategy. As Lemma~\ref{l:beta:expansion}, 
 so  Lemma~\ref{l:icbv:context} can take care of 
 simple  manipulations on terms, to set the ground for applications of
 further up-to techniques like up-to-context.
 In the hypothesis  of the lemma, $\Gamma $ is needed, as the terms in
 $\Gamma $ may be  able to modify the store.
\fi



\begin{theorem}\label{t:icbv:fullabstraction}
  $(s\sep M)≡(u\sep N)$ iff $(s\sep M) ≈ (u\sep N)$.
\end{theorem}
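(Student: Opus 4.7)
The plan is to prove the two directions separately. Soundness ($\approx$ implies $\equiv$) follows rather directly from the compatibility results collected in Lemma~\ref{l:icbv:context}. Completeness ($\equiv$ implies $\approx$) is the delicate part, requiring a candidate bisimulation (up to the companion) built from contextually equivalent pairs.

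For soundness, assume $(s\sep M) \approx (u\sep N)$. By Lemma~\ref{l:icbv:context}, the functions $\term$, $\eval$, and $\weak$ are all below $t_{\weakb}$; by Lemma~\ref{l:compatible:closure}, weak bisimilarity is therefore closed under each of them. Given a location-free evaluation context $E$ (or an arbitrary location-free context $C$, in the value case) I would first apply $\eval$ (respectively $\term$) to plug $E$ around the tested terms, yielding a bisimilar pair of configurations whose environments still contain $M$ and $N$; applying $\weak$ then discards those extra components, giving $(s\sep E[M]) \approx (u\sep E[N])$ at the empty environment. Since silent transitions at the empty environment correspond exactly to $\icbv$-steps, and $\tau$-reduction to a value configuration is preserved by $\approx$, we obtain $(s\sep E[M])\!\Dwa$ iff $(u\sep E[N])\!\Dwa$, i.e., $(s\sep M) \equiv (u\sep N)$.

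For completeness, the idea is to lift contextual equivalence to configurations with environments. A natural candidate is
\[
  \mathcal{R} \;\eqdef\; \bigl\{((s\sep \Gamma,M),(u\sep \Delta,N)) \mid |\Gamma|=|\Delta|,\ (s\sep E[M,\Gamma]) \equiv (u\sep E[N,\Delta])\ \text{for all location-free eval.\ contexts } E\bigr\}
\]
together with the analogous clause on value configurations, where $E$ is replaced by an arbitrary location-free context $C$. Silent transitions are easy: if $(s\sep \Gamma,M)\ts{\tau}(s'\sep \Gamma,M')$ is an $\icbv$-step then for any closing $E$, $(s\sep E[M,\Gamma])\icbv(s'\sep E[M',\Gamma])$ so the equivalence with $(u\sep E[N,\Delta])$ is preserved; the right-hand side can then match using the empty weak transition. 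The delicate case is the visible transition $(s\sep \Gamma)\ts{i,C,\cod(r)}(s'\sep \Gamma',M')$: I would match it by the homonymous transition from $(u\sep \Delta)$ (which exists with freshly chosen locations in $\dom r$) and then argue that the residual pair lies again in $\mathcal{R}$. To this end, for any tester $E'$ of the right arity one encodes the whole sequence ``allocate $r[\Gamma']$, apply $\Gamma_i$ to $C[\Gamma']$, then run $E'$ on the result and the extended environment'' as a single location-free context $E^\star$ on the original value configurations, so that $(s\sep E^\star[\Gamma]) \equiv (u\sep E^\star[\Delta])$---which we have by hypothesis---entails the required equivalence on the residuals.

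The hard part is precisely this context-encoding for visible transitions: one must check that $E^\star$ can be chosen to be a location-free (evaluation) context of the prescribed shape, and that the choice of matching silent moves on the right-hand side is well-founded. I would handle the latter using the quasi-determinism of $\LaIu$ together with Lemma~\ref{l:trick}. To avoid reproducing proof steps already captured by the up-to-context machinery, it is convenient to show that $\mathcal{R}$ is a bisimulation \emph{up to} the companion: the compatible functions $\env$, $\alloc$, and $\weak$ then absorb routine bureaucracy (permutations and garbage-collection of the environment and of private store cells), leaving only the essential context-encoding step to verify. Lemma~\ref{l:icbv:context} ensures that the resulting up-to proof suffices to conclude $\mathcal{R} \subseteq {\approx}$, and in particular $(s\sep M) \approx (u\sep N)$ for every pair with $(s\sep M)\equiv(u\sep N)$.
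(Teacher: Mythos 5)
Your proposal follows essentially the same route as the paper: soundness via $\eval \subseteq t_{\weakb}$ and Lemma~\ref{l:compatible:closure}, and completeness by showing that the relation of pairs whose location-free evaluation-context closures co-terminate is a weak bisimulation, with the visible-transition case closed by encoding allocation, initialisation and application as a single location-free evaluation context applied to the original configurations. The only substantive difference is that you defer exactly the step where the paper does its real work---the explicit construction of that context (the paper's $F$, built from a let-binding, the $\nu\ell_i$ allocations, the assignments $\ell_i := C_i^{\bullet}$, and a hole-renumbering that routes $\get{\ell_i}$ and $\set{\ell_i}$ into the extended environment)---and that you present the candidate as a bisimulation up to the companion where the paper exhibits a plain weak bisimulation.
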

\begin{proof}
  \textbf{(⇐)} The function $\eval$ is below $t$ by
  Lemma~\ref{l:icbv:context}. By Lemma~\ref{l:compatible:closure} we
  know that $\eval(≈) \subseteq {≈}$. In other words $≈$ is a
  $\eval$-congruence, and in particular $(s\sep M) ≈ (u\sep N)$
  implies $(s\sep E[M]) ≈ (u\sep E[N])$ for every location-free
  evaluation context $E$. This in turn implies that $(s\sep E[M])$ and
  $(u\sep E[N])$ have the same weak visible transitions, which in turn
  implies that $(s\sep E[M]){⇓}$ iff $(u\sep E[N]){⇓}$.
  
  \medskip

  \textbf{$(⇒)$ } For completeness, we prove
  that the following relation $\R$  is a weak bisimulation, where $E$ ranges over
  location-free evaluation contexts:
  \begin{align}
    \label{e:completeness}
    𝓡 \eqdef \big\{ ((s\sep Γ,M),(u\sep Δ,N)) ~\st~ ∀E~
      (s\sep E[M,Γ]){⇓} \mbox{ iff } (u\sep E[N,Δ]){⇓}
    \big\}
    \enspace.
  \end{align}
  Suppose $(s\sep Γ,M)\R (u\sep Δ,N)$.
  Since $𝓡$ is symmetric, we only look at the transitions labelled with $\mu$
 emanating from   $(s\sep Γ,M)$.
  
  When $μ=τ$, it holds that $(s\sep M)\icbv(s'\sep M')$.  We then have
  $(s\sep E[M,Γ]){⇓}$ iff $(s'\sep E[M',Γ]){⇓}$ by (quasi-) determinism of
  $\icbv$, so we can conclude $(s'\sep Γ,M') \R (u\sep Δ,N)$ to close
  the bisimulation diagram.
  
  We now suppose that $μ≠τ$, i.e.\ $M$ is a value $V$, $μ=i,C,\cod(r)$, and
  $(s\sep Γ,M) \ts{μ}(s'\sep Γ'',M')$ for some $s'$, $Γ''$, $M'$
  satisfying
    \[\def\arraystretch{1.3}
    \begin{array}{cccc}
      Γ'=Γ,V ~~~& Γ''=Γ',\getset{r} ~~~& (r[Γ'']⊎s \sep Γ'_i(C[Γ'']))\icbv(s' \sep M')~~~~&(*)\
    \end{array}\]
  Since $M$ is a value,
  $(s\sep M)⇓$. By choosing $E=\holei 1$ in
  \eqref{e:completeness} we know that $(u\sep N){⇓}$ and thus
  $(u\sep N)\wicbv(u'\sep W)$ for some value $W$ and store $u'$. We  then obtain the
  weak transition $(u\sep Δ,N)\ws{μ}(u''\sep Δ'',N')$ through
  $(u'\sep Δ,W)$, for some $Δ'$, $Δ''$, $N'$ such that:
    \[\def\arraystretch{1.3}
    \begin{array}{cccc}
      Δ'=Δ,W ~~~& Δ''=Δ',\getset{r} ~~~& (r[Δ'']⊎u'\sep Δ'_i(C[Δ'']))\wicbv(u''\sep N')~~~~&(**)
    \end{array}\]
  To close the bisimulation diagram, we will now prove that
  $(s'\sep Γ'',M') \R (u''\sep Δ'',N')$. Let $E$ be a location-free
  evaluation context, we show that
    \begin{align}
      \label{e:completeness:final}
      (s' \sep E[M',Γ'']) ⇓ ~~\mbox{ iff }~~
      (u''\sep E[N',Δ'']) ⇓ \enspace.
    \end{align}
    Observe that if $(s_1, M_1) \icbv (s_1', M_1')$ then
    $(s_1, E[M_1,Γ_1]) \icbv (s_1', E[M_1',Γ_1])$, which implies
    $(s_1, E[M_1,Γ_1]){⇓} ⇔ (s_1', E[M_1',Γ_1]){⇓}$ by determinism of
    $\icbv$. Using this observation and each reductions in $(*)$ and
    $(**)$, \eqref{e:completeness:final} becomes equivalent to:
    \begin{align}
      \label{e:completeness:final2}
      (r[Γ''] ⊎ s \sep E[Γ'_i(C[Γ'']),Γ'']) ⇓ ~~\mbox{ iff }~~
      (r[Δ''] ⊎ u'\sep E[Δ'_i(C[Δ'']),Δ'']) ⇓
    \end{align}
    We recall that a context of \emph{arity} $n$ is a context with
    holes $\holei{1},\dots,\holei{n}$ each occurring any number of
    times and that in an evaluation context the first hole
    $\holei{1}$ is the one that occurs exactly once and in evaluation
    position. Let $F$ be an evaluation context of arity $|Γ|+1$.
    Instantiating the definition of $𝓡$ with $F$, we have the
    following equivalence:
    \begin{align}
      \label{e:completeness:final3}
      (s \sep F[M,Γ]) ⇓ ~~\mbox{ iff }~~
      (u \sep F[N,Δ]) ⇓
    \end{align}
    We choose $F$ carefully so that \eqref{e:completeness:final3} is
    equivalent to~\eqref{e:completeness:final2}. Let $ℓ_i↦C_i$,
    $i= 1,\dots,n$, be the collection of location-context pairs of the
    store context $r$. Let $C'≜E[\holei i(C),-]$ i.e. $E$ where the
    evaluation hole is replaced with the context $\holei i(C)$. The
    contexts $C_1,\dots, C_n$, $C$, and $C'$, are all of arity
    $|Γ''| = |Γ|+1+2n$.
    \newcommand{\bah}{\ensuremath{\bullet}}%
    For every context $D$, let $D^\bah$ be $D$ with the following
    replacements:
    \begin{enumerate}
    \item the holes $\holei{|Γ|+1}$ are replaced with $x$,
    \item the holes $\holei{|Γ|+2 i}$ are replaced with $\get{ℓ_i}$,
    \item the holes $\holei{|Γ|+2 i+1}$ are replaced with $\set{ℓ_i}$,
    \item all holes $\holei{i}$, $i\geq 1$ are simultaneously replaced
      with $\holei{i+1}$. This shift leaves $\holei{1}$ unused.
    \end{enumerate}
    Then $C_1^\bah,\dots, C_n^\bah$, $C^\bah$, ${C'}^\bah$ are of arity
    $|Γ|+1$, with no occurrence of $\holei{1}$. 
    \newcommand*{\cmd}[1]{\ensuremath{~\mathtt{#1}~}}%
    We now define the evaluation context $F$ of arity $|Γ|+1$ as follows:
    \[F \eqdef \cmd{let} x = \holei{1} \cmd{in}
    νℓ_1…νℓ_n~ℓ_1:=C_1^\bah;~…;ℓ_n:=C_n^\bah;~{C'}^\bah
    \]
    
    After $1+2n$ steps of reductions (one for the substitution of $x$
    with $M=V$, one for each $νℓ_i$, one for each assignment)
    $(s;F[M,Γ])$ reduces to $(r[Γ''] ⊎ s \sep E[Γ'_i(C[Γ'']),Γ''])$.
    Similarly $(u;F[N,Δ])$ first reduces to $(u';F[W,Δ])$, and then
    reduces to $(r[Δ''] ⊎ u' \sep E[Δ'_i(C[Δ'']),Δ''])$. By
    determinism of reductions, since each side
    of~\eqref{e:completeness:final3} reduces to the corresponding side
    of~\eqref{e:completeness:final2}, we know that
    \eqref{e:completeness:final3}
    is equivalent to~\eqref{e:completeness:final2}.
\end{proof}

Congruence of bisimilarity is restricted either to
values ($\term$), or to evaluation contexts ($\eval$). It does not
hold for arbitrary contexts, but Lemma~\ref{l:safe:any} provides a
sufficient condition for some relations between arbitrary terms to be
preserved by arbitrary contexts. First we establish weaker results:
for evaluation contexts (Lemma~\ref{l:safe:closedbyeval}), then for
\emph{non}-evaluation contexts (Lemma~\ref{l:safe:noteval}). Finally
Lemma~\ref{l:safe:any} combines the two.

In the following, we use $\asymp$ to denote any of the relations
$\sim, \approx$, and $≳$. (In Lemma~\ref{l:safe:closedbyeval} $F$ may
contain free locations, unlike occurrences in earlier definitions of
transitions and of up-to-context functions.)

\begin{lemma}
  \label{l:safe:closedbyeval}
  Suppose that for all $s$ and $Γ$, we have
  $(s\sep Γ,L) \asymp (s\sep Γ,R)$. Then for all $s$, $Γ$ and
  evaluation contexts $F$ that may contain free locations,
we have   $(s\sep Γ,F[L]) \asymp (s\sep Γ,F[R])$.
\end{lemma}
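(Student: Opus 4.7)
The obstacle is that the up-to-context technique $\eval$ (Lemma~\ref{l:icbv:context}) only handles location-free evaluation contexts, whereas the $F$ given in the statement may mention free locations $ℓ_1,\dots,ℓ_n$ (which necessarily lie in $\dom{s}$). The plan is to absorb those locations into the environment so that $F$ is reconstructed as $F^\bullet$ applied to a suitable extension of $\Gamma$, where $F^\bullet$ is genuinely location-free.

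Concretely, set $\Lambda \eqdef \get{ℓ_1},\set{ℓ_1},\dots,\get{ℓ_n},\set{ℓ_n}$ and define the location-free evaluation context $F^\bullet$ of arity $2n+|\Gamma|+1$ by taking $F$ and replacing every occurrence of $\get{ℓ_i}$ by $\holei{2i}$ and every occurrence of $\set{ℓ_i}$ by $\holei{2i+1}$ (the hole corresponding to the argument of the outer statement is renamed to $\holei 1$, and the remaining $|\Gamma|$ holes $\holei{2n+2},\dots,\holei{2n+|\Gamma|+1}$ are simply unused). By construction, $F^\bullet[M,\Lambda,\Gamma] = F[M]$ for every closed term $M$, and $\holei 1$ still occurs exactly once in redex position.

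Instantiating the hypothesis with the extended environment $\Lambda,\Gamma$ gives $(s\sep \Lambda,\Gamma,L) \asymp (s\sep \Lambda,\Gamma,R)$. Since $\eval$ is below the companion (Lemma~\ref{l:icbv:context}), Lemma~\ref{l:compatible:closure} yields $\eval(\asymp)\subseteq{\asymp}$, hence
\[
  (s\sep \Lambda,\Gamma,F^\bullet[L,\Lambda,\Gamma]) \asymp
  (s\sep \Lambda,\Gamma,F^\bullet[R,\Lambda,\Gamma])\enspace,
\]
i.e.\ $(s\sep \Lambda,\Gamma,F[L]) \asymp (s\sep \Lambda,\Gamma,F[R])$. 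Finally, since $\weak$ is likewise below the companion, $2n$ applications of $\weak(\asymp)\subseteq{\asymp}$ strip off the $2n$ leading values of $\Lambda$ from both sides, giving the desired $(s\sep \Gamma,F[L]) \asymp (s\sep \Gamma,F[R])$. The argument is uniform in the three relations $\sim$, $\approx$, $≳$ since all of $\eval$ and $\weak$ lie below the companions for both $\strongb$ and $\weakb$, and $≳$ is handled by the same machinery.
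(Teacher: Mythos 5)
Your proof is correct and follows essentially the same route as the paper: express $F$ as a location-free evaluation context applied to the extra environment entries $\get{ℓ_i},\set{ℓ_i}$, instantiate the hypothesis with the extended environment, apply the $\eval$ congruence (Lemmas~\ref{l:icbv:context} and~\ref{l:compatible:closure}), and strip the added values with $\weak$. The only (harmless) difference is that you prepend the $\get{ℓ_i},\set{ℓ_i}$ values to $\Gamma$ rather than appending them, which if anything matches the literal definition of $\weak$ more directly.
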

\begin{proof}
  Let $A$ be the list of $\lset{ℓ}$ and $\lget{ℓ}$ for all locations
  $ℓ$ in $F$. Then   we can obtain some 
location-free $F'$
from $F$ such that
  $F=F'[-,A]$ . By hypothesis we know
  $(s\sep Γ,A,L) \asymp (s\sep Γ,A,R)$ on which we apply
  congruence for evaluation contexts $\eval$ to derive
  $$(s\sep Γ,A,F'[L,A]) \asymp (s\sep Γ,A,F'[R,A])$$
  (Lemmas~\ref{l:icbv:context} and~\ref{l:compatible:closure}). By
  weakening 
we finally obtain
  $(s\sep Γ,F'[L,A]) \asymp (s\sep Γ,F'[R,A])$.
\end{proof}

\begin{lemma}\label{l:safe:noteval}
  Let $L$, $R$ be $\LaI$ terms with $(s\sep Γ,L) \asymp (s\sep Γ,R)$
  for all environments $Γ$ and stores $s$. Suppose $C$ is a multi hole
  context with no hole  in evaluation position. Then
  for all $Γ$ and $s$ we have
  $(s\sep Γ, C[L]) \asymp (s\sep Γ, C[R])$.
\end{lemma}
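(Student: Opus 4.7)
The plan is to exhibit a bisimulation up-to candidate. Define
\[\R \eqdef \{((s\sep Γ, C[L]), (s\sep Γ, C[R])) \mid s,\, Γ,\, C \text{ multi-hole with no hole in evaluation position}\}\]
(under the standing hypothesis on $L,R$), and show that $\R$ is a bisimulation up to functions below the companion of Lemma~\ref{l:icbv:context}. By Lemma~\ref{l:compatible:sound} this yields $\R \subseteq \asymp$.

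Consider any $\tau$-transition $(s\sep Γ, C[L]) \ts{\tau} (s'\sep Γ, M')$. Since no hole of $C$ lies in evaluation position, the redex fired by this step is in the ``frame'' of $C$ and does not directly reduce any occurrence of $L$. Hence $M' = C'[L,\ldots,L]$ where $C'$ is obtained by performing the same reduction on the frame, and the mirror transition $(s\sep Γ, C[R]) \ts{\tau} (s'\sep Γ, C'[R,\ldots,R])$ is available on the right. If every hole of $C'$ is still at a non-evaluation position, the derivative pair stays in $\R$. Otherwise some holes of $C'$ have been exposed to evaluation position, for instance via $(\lambda y.\holei{1})V \icbv \holei{1}$ when $y\notin\mathrm{fv}(L)$; then $C'$ decomposes as $E[\holei{j_1},\ldots,\holei{j_k},\lambda z_1.D_1,\ldots,\lambda z_m.D_m]$, where $E$ is a multi-hole evaluation context and each $\lambda z_i.D_i$ is a value whose inner sub-holes remain guarded by the outer $\lambda z_i$.

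For this exposed decomposition, I handle the evaluation skeleton $E$ together with the bare occurrences of $L$ at positions $\holei{j_1},\ldots,\holei{j_k}$ via a multi-hole extension of Lemma~\ref{l:safe:closedbyeval} (a direct corollary of the hypothesis on $L,R$ combined with the $\eval$-closure of Lemma~\ref{l:icbv:context}). The guarded values $\lambda z_i.D_i[L]$ and $\lambda z_i.D_i[R]$ themselves form pairs in $\R$: each $\lambda z_i.D_i$ has all its holes under an outer $\lambda$-binder, so the side condition of the lemma holds. These value pairs are placed into the environment and composed back into the full configuration using the $\term$-closure of Lemma~\ref{l:icbv:context}, while $\alloc$, $\env$, and $\weak$ handle store extensions and environment rearrangements. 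Visible transitions, which can only arise from value configurations $(s\sep Γ, \lambda x.D[L])$, are treated similarly: after the observer's application, the $\beta$-step produces $D[L]\{V/x\} = D\{V/x\}[L]$ (by $\alpha$-renaming to ensure $x\notin\mathrm{fv}(L)\cup\mathrm{fv}(R)$), to which the same exposed-decomposition analysis applies.

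The main obstacle will be rigorously formalising the invariant that every configuration reachable from the left admits such a decomposition as ``an evaluation skeleton wrapping bare occurrences of $L$ and $\lambda$-guarded residual sub-contexts'' (with the mirror invariant on the right). Stating this invariant precisely and gluing the pieces back together requires a careful orchestration of Lemma~\ref{l:safe:closedbyeval} (for the evaluation skeleton and exposed occurrences) with the $\term$, $\eval$, $\env$, and $\alloc$ closures made available by Lemma~\ref{l:icbv:context}.
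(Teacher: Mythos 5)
There is a genuine gap, and it lies in your candidate relation. You take $\R$ to relate $(s\sep Γ, C[L])$ to $(s\sep Γ, C[R])$ with the \emph{same} store $s$ and the \emph{same} environment $Γ$ on both sides, but this shape is not preserved by the transitions of $\LaIu$. First, a visible transition appends the tested value to the environment, so after a single such step the left environment contains $\lambda x.D[L]$ where the right one contains $\lambda x.D[R]$, and the derivative pair has already left $\R$. Second, a silent step firing $\set{ℓ}$ in the frame of $C$ writes a value containing $L$ into the left store and the corresponding value containing $R$ into the right store (and a later $\get{ℓ}$ reads these back into the term); again the stores diverge. None of $\term$, $\env$, $\alloc$, $\weak$ can repair this: each of them builds both sides of a new pair by applying a \emph{common} context to the two components of an existing pair, so starting from identical stores and environments they only ever produce identical stores and environments. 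You flag ``formalising the invariant'' as the main obstacle, but the issue is not one of formalisation: the invariant itself must be strengthened. The paper's candidate relates $((ℓ↦C_v^ℓ[L])_ℓ\sep \tilde{C_v}[L],\, C[L])$ to the same configuration with $R$ in place of $L$, i.e., it closes the store and the environment --- not just the tested term --- under value contexts filled with $L$ versus $R$; with that strengthening every transition preserves the shape.

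Apart from this, your strategy matches the paper's: the only residual difficulty is an occurrence of $L$ or $R$ landing in evaluation position after a step, and you discharge it, as the paper does, via Lemma~\ref{l:safe:closedbyeval}. (The paper applies it one occurrence at a time and uses transitivity of $≳$, progressing to $≳\R$; your ``multi-hole extension'' amounts to the same iteration, since only one hole can be in redex position at a time.) One further point you gloss over by appealing generically to Lemma~\ref{l:compatible:sound}: in the weak case one may only rewrite up to $≈$ on the side opposite to the challenge, which is why the paper establishes a progression to $(≈\R)\cap(\R≈)$ rather than simply to $≈\R≈$.
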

\begin{proof}
  We do the proof for the most interesting case, ${\asymp} = {≳}$, and
  we discuss the other cases at the end of the proof.
  
  Let $𝓡$ relate each configuration
  $((ℓ↦C_v^ℓ[L])_ℓ\sep \tilde{C_v}[L], C[L])$ to the one where $R$
  replaces $L$, namely $((ℓ↦C_v^ℓ[R])_ℓ\sep \tilde{C_v}[R],C[R])$,  for all
  $(C_v^ℓ)_ℓ$ and $\tilde{C_v}$ families of value contexts (i.e., of the
  form $λx.C'$), and where $C$ ranges over contexts with no hole
   in evaluation position. For simplicity we write $s_L$, $s_R$,
  $Γ_L$, and $Γ_R$ for the corresponding stores and environments. 
  The transitions from both sides,
  $(s_L\sep Γ_L,C[L])$ and $(s_R\sep Γ_R,C[R])$, have
  the same shape.  We thus show
  that $𝓡$ is an expansion up to expansion. We also rely on
  the fact that $L$ and $R$   are never  run.
  
  \begin{enumerate}
  \item \textbf{} (Case of silent action.)  Since $L$ in $C[L]$ and $R$ in $C[R]$
   are not in evaluation position, both sides  perform the same kind of
    transition. The resulting configurations
    are $(s'_L\sep Γ_L,C_1[L])$ and $(s'_R\sep Γ_R,C_1[R])$ for some $C_1$. (Even
    if a $\lset{ℓ}$ or a $\lget{ℓ}$ is involved, and  some terms containing $L$ or $R$ are
    moved to or from the store, the configurations maintain the same shape.)
  
    The only part of the invariant of the relation $𝓡$ that is not
    preserved is that $L$ or $R$ may appear in evaluation position, if
    $C_1[L] = E_1[L,L]$ (where $\holei 1$ is in evaluation position and
    $\holei 2$ may appear everywhere).
    In this case, we remark that $F_1\eqdef E_1[-,L]$ is an evaluation
    context, on which we can apply Lemma~\ref{l:safe:closedbyeval} to
    yield $(s'_L\sep Γ_L,E_1[L,L]) ≳ (s'_L\sep Γ_L,E_1[R,L])$. Let
    $C_2 ≜ E_1[R,-]$. If $C_2$ is a context with no hole in evaluation
    position, we have $(s'_L\sep Γ_L,E[R,L]) \R (s'_R\sep Γ_R,E[R,R])$
    and we have closed the diagram. If not, then let $E_2$ be such
    that $C_2 = E_2[-,L]$. Applying
    Lemma~\ref{l:safe:closedbyeval} as many times as necessary, we can replace occurrences of $L$ with
    $R$, one at at time, as long as there remain holes in evaluation
    position. The progression to $≳\R$ still holds, since $≳$ is
    transitive.
  
  \item \textbf{} (Case of visible action.) First, since no hole is in
    evaluation position, $C[L]$ is a value iff $C[R]$
    is a value, so they have the same visible actions of the form
    $i,D,\cod(r)$. We end up with the same shape of configurations we
    had for the $τ$ transition above, and we therefore proceed similarly.
  \end{enumerate}  
  We have thus proved that $𝓡$ progresses to $≳\R$ (expansion up to
  expansion). In the strong case, we prove that $𝓡$ progresses to
  $\sim\R$, and in the weak case we prove that $𝓡$ weakly progresses
  to $(≈\R) \cap (\R≈)$ (which corresponds to two possible ways of
  using Lemma~\ref{l:safe:closedbyeval} in the above proof). Such a refinement is
  necessary because in the weak case, one can use ``up to $≈$'' only
  when $≈$ is not on the same side as the challenge. 
\end{proof}

\begin{lemma}
  \label{l:safe:any}
  Let $\asymp$ be any of the relations $\sim, \approx$, and $≳$.
  Suppose $L$, $R$ are $\LaI$ terms with
  $(s\sep Γ,L) \asymp (s\sep Γ,R)$ for all environments $Γ$ and store
  $s$. Then also $(s\sep Γ,C[L]) \asymp (s\sep Γ,C[R])$, for every store
  $s$, environment $Γ$ and context $C$.
\end{lemma}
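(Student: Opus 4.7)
The plan is to reduce Lemma~\ref{l:safe:any} to Lemmas~\ref{l:safe:closedbyeval} and~\ref{l:safe:noteval} by replacing occurrences of $L$ with $R$ one at a time and chaining the intermediate equivalences by transitivity of $\asymp$ (each of $\sim$, $\approx$, and $\gtrsim$ is transitive). Concretely, let $n$ be the total number of hole occurrences in $C$; number them $1, \dots, n$ and, for $0 \leq k \leq n$, let $M_k$ denote the term obtained from $C$ by substituting $R$ at the first $k$ occurrences and $L$ at the remaining $n-k$ ones, so that $M_0 = C[L]$ and $M_n = C[R]$. For each $i \in \{1, \dots, n\}$, the terms $M_{i-1}$ and $M_i$ differ at exactly one position, and one can exhibit a one-hole context $D_i$ (namely $C$ with $R$ substituted at occurrences $j < i$, $L$ substituted at occurrences $j > i$, and a hole at occurrence $i$) such that $M_{i-1} = D_i[L]$ and $M_i = D_i[R]$.

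It then suffices to prove $(s \sep \Gamma, D_i[L]) \asymp (s \sep \Gamma, D_i[R])$ for each $i$, $s$, and $\Gamma$. I would perform a case split on whether the unique hole of $D_i$ is in evaluation position. If it is, then $D_i$ is itself an evaluation context (possibly containing free locations inherited from $C$, $L$, or $R$), and Lemma~\ref{l:safe:closedbyeval} applies directly. Otherwise $D_i$, seen as a multi-hole context in which no hole is in evaluation position, fits the hypothesis of Lemma~\ref{l:safe:noteval}. Both lemmas rely only on the hypothesis $(s \sep \Gamma, L) \asymp (s \sep \Gamma, R)$ for all $s$ and $\Gamma$, which is precisely the assumption of Lemma~\ref{l:safe:any} and is available uniformly at every step. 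Chaining the $n$ resulting equivalences by transitivity yields $(s \sep \Gamma, C[L]) \asymp (s \sep \Gamma, C[R])$.

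There is no real obstacle: the dichotomy between evaluation-position and non-evaluation-position holes exhausts all possibilities for a single-hole context, so the two preceding lemmas together cover every step of the replacement chain. The only mildly delicate point is that the intermediate contexts $D_i$ may carry free locations coming from $C$, $L$, or $R$, but this is compatible with both lemmas: Lemma~\ref{l:safe:closedbyeval} is stated explicitly for evaluation contexts containing free locations, and the proof of Lemma~\ref{l:safe:noteval} places no restriction on the locations appearing in the context.
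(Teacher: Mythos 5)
Your proof is correct, and it uses exactly the same two ingredients as the paper's proof (Lemmas~\ref{l:safe:closedbyeval} and~\ref{l:safe:noteval}, chained by transitivity of $\asymp$), but it organises them differently. The paper proceeds in two phases: it first repeatedly rewrites the (unique) occurrence of $L$ currently in evaluation position via Lemma~\ref{l:safe:closedbyeval} --- iterating because such a rewrite can expose a \emph{new} evaluation-position occurrence, as in the example $L=II$, $R=I$, where $LL$ becomes $LR$ and only then is the remaining $L$ in evaluation position --- and, once no occurrence of $L$ is in evaluation position, applies Lemma~\ref{l:safe:noteval} a single time to the residual multi-hole context. You instead fix an arbitrary order on all $n$ occurrences and replace them one at a time, deciding for each one-hole context $D_i$ which of the two lemmas applies. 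Your per-occurrence case split absorbs the ``repeat until no evaluation-position occurrence remains'' subtlety automatically (the status of the hole of $D_i$ is determined by the concrete mix of $L$'s and $R$'s already placed at the other positions), and it makes termination trivial, at the price of invoking Lemma~\ref{l:safe:noteval} once per non-evaluation-position occurrence rather than once for the whole residual context; this is harmless since that lemma is stated for arbitrary multi-hole contexts with no hole in evaluation position, of which a single-hole non-evaluation context is a special case. Both routes are valid; yours is arguably a cleaner induction on the number of occurrences, while the paper's keeps the applications of the harder Lemma~\ref{l:safe:noteval} to a single final step.
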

\begin{proof}
  Using Lemma~\ref{l:safe:closedbyeval} and transitivity of $\asymp$,
  we rewrite the occurrence of $L$ that is in evaluation position into $R$, and repeat
  this
  until there is no such $L$ (such a rewriting may have to be performed more than once
  if $L$ is not a value but $R$ is so; for
  example if $L=I I$ and $R = I$, then $L$ is in evaluation position
  in $L L$ on the right and in $L R$ on the left). We finally apply
  Lemma~\ref{l:safe:noteval}.
\end{proof}

The separation between evaluation contexts and non-evaluation contexts
is critical, as handling all contexts together would yield a much
larger bisimulation candidate.



\begin{lemma}\label{l:safe:eval}
  Suppose that $E$ and $E'$ are evaluation contexts and that for all
  values $V$ and stores $s$, we have $(s\sep E[V])\icbv^+ (s\sep E'[V])$.
  Then for all environments $Γ$ and stores $s$, we have $(s\sep Γ,E[M]) ≳
  (s\sep Γ,E'[M])$.
\end{lemma}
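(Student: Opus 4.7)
The plan is to split the proof according to whether $M$ is a value or not, and to give a small expansion candidate in each case.

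When $M$ is a value $V$, the hypothesis $(s\sep E[V])\icbv^+ (s\sep E'[V])$ lifts to the LTS as $(s\sep Γ,E[V])\ts{\tau}^+ (s\sep Γ,E'[V])$, since $\tau$-transitions leave $Γ$ untouched. Because $\LaIu$ is quasi-deterministic (as observed in the proof of Lemma~\ref{l:icbv:context}), Lemma~\ref{l:red:expansion} yields $\ts{\tau}\subseteq{≳}$ and, by transitivity of expansion, $\ts{\tau}^+\subseteq{≳}$. Hence $(s\sep Γ,E[V]) ≳ (s\sep Γ,E'[V])$.

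For non-value $M$, I would consider the candidate relation
\[
\R \eqdef \{((s\sep Γ, E[M]), (s\sep Γ, E'[M])) \mid M \text{ not a value}\} \cup {≳}
\]
and check $\R \subseteq \expb(\R)$. A key structural observation is that $E \neq \holei{}$: otherwise $E[V] = V$ would admit no reduction, contradicting $E[V] \icbv^+ E'[V]$. Combined with $M$ being non-value, a case analysis on the grammar $E ::= \holei{} \mid E V \mid M E$ shows that the sole redex of $E[M]$ must lie within $M$—the surrounding structure cannot reduce because any such reduction would require the hole's contents to be a value (either a root $\beta$-redex demanding $E = \holei{}$ with $M$ a value, or an application $E_1 V$ whose function part $E_1[M]$ must itself be a value, which again forces $M$ to be a value). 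The same analysis applies to $E'[M]$; the boundary case $E' = \holei{}$ is even simpler, as $E'[M] = M$ reduces precisely when $M$ does.

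Consequently neither configuration is a value, so no visible transitions are possible; any $\tau$-transition amounts to one step of $M$'s reduction and is matched by exactly one $\tau$-transition on the other side, yielding the pair $((s'\sep Γ, E[M']),(s'\sep Γ, E'[M']))$, which lies in $\R$ if $M'$ is non-value and in $≳$ (by the value case above) if $M'$ is a value. Both expansion conditions are thereby satisfied in one step on each side, so $\R \subseteq {≳}$, giving the conclusion for arbitrary $M$. The main technical obstacle is the redex-decomposition step justifying the lockstep behaviour; once this CBV-specific analysis is done, the coinductive argument is entirely routine.
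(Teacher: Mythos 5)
Your proof is correct and follows essentially the same route as the paper's: a small candidate relation pairing $(s\sep Γ,E[M])$ with $(s\sep Γ,E'[M])$, matched in lockstep by single $\tau$-steps when $M$ is a non-value (where indeed neither side is a value and the unique redex sits inside $M$), and absorbing the hypothesis $(s\sep E[V])\icbv^+(s\sep E'[V])$ via expansion when $M$ is a value. The only difference is presentational: the paper keeps one relation over all $M$ and shows it is an expansion \emph{up to expansion and reflexivity}, whereas you discharge the value case up front via Lemma~\ref{l:red:expansion} and transitivity of $≳$, and fold $≳$ into the candidate so that the non-value case needs no up-to technique.
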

\begin{proof}
  For a given $Γ$ we consider $𝓡 = \{(s\sep Γ,E[M]), (s\sep Γ,E'[M]) ∣ \mbox{for all $s$ and } M\}$
  and the transitions from both sides:
  \begin{enumerate}
  \item when $M$ is not a value, $(s\sep M)\icbv (s'\sep M')$ and the only
    transition from both sides is a silent transition, and the derivatives are still
    in the
    relation.
  \item  when $M=V$ and the challenge transition is from the term on the left-hand side, 
    by hypothesis we have  $(s\sep Γ,E[V])\,{\ts{τ}}{}^+
    (s\sep Γ,E'[V])$, so the first transition from the left-hand side is a
    $τ$.  We use up-to-expansion to reach $(s\sep Γ,E'[V])$,  which is equal
    to the right-hand side, and conclude up to reflexivity.
  \item  Suppose now $M=V$ and the challenge transition is from the term on the right-hand side.
  Then the right-hand side makes some
    transition $(s\sep Γ,E'[V])\ts{α} (s'\sep Γ',N')$. We know that
    $(s\sep Γ,E[V])\,{\ts{τ}}{}^+ (s\sep Γ,E'[V])$ so $(s\sep Γ,E'[V])\ws{}\ts{α} (s'\sep Γ',N')$
    and we conclude again up to reflexivity.
  \end{enumerate}
  We have thus  proved that $𝓡$ is an expansion relation up to expansion and
  reflexivity.
\end{proof}

\begin{corollary}\label{cor:redexp}
  Suppose that $E$ and $E'$ are evaluation contexts and that for every
  value $V$ and store $s$, we have $(s\sep E[V])\wicbv (s\sep E'[V])$.
  Then for every store $s$ and context $C$, we have $(s\sep C[E[M]]) ≳
  (s\sep C[E'[M]])$.
\end{corollary}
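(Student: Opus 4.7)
The plan is to chain Lemma~\ref{l:safe:eval} and Lemma~\ref{l:safe:any}. First, Lemma~\ref{l:safe:eval} will give the result for $E[M]$ versus $E'[M]$ with \emph{no} surrounding context, and then Lemma~\ref{l:safe:any} (applied with $\asymp = {≳}$) will propagate this through the arbitrary context $C$. Concretely, fix $M$ and set $L \eqdef E[M]$, $R \eqdef E'[M]$. Once we have $(s\sep Γ,L) ≳ (s\sep Γ,R)$ for all $Γ$ and $s$, Lemma~\ref{l:safe:any} delivers $(s\sep Γ,C[L]) ≳ (s\sep Γ,C[R])$ for every $C$, $Γ$, $s$; specialising to the empty environment gives the corollary's statement $(s\sep C[E[M]]) ≳ (s\sep C[E'[M]])$.

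The only obstacle is a small mismatch between hypotheses: Lemma~\ref{l:safe:eval} requires the strictly positive reduction $(s\sep E[V])\icbv^+(s\sep E'[V])$, while the corollary only assumes $(s\sep E[V])\wicbv(s\sep E'[V])$, which may take zero steps. This is handled by a trivial case split on whether $E$ and $E'$ coincide as contexts. If $E = E'$, then $C[E[M]]$ and $C[E'[M]]$ are syntactically identical and the claim follows by reflexivity of $≳$. Otherwise $E \neq E'$, and since two distinct evaluation contexts instantiated with a common value produce distinct terms, each $(s\sep E[V])\wicbv(s\sep E'[V])$ must use at least one reduction step. This yields exactly the $\icbv^+$ hypothesis required to invoke Lemma~\ref{l:safe:eval}.

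With this case split in place, the rest is routine: Lemma~\ref{l:safe:eval} produces $(s\sep Γ,E[M]) ≳ (s\sep Γ,E'[M])$ uniformly in $Γ$, $s$, $M$, and Lemma~\ref{l:safe:any} then promotes this into the surrounding context $C$, as outlined above. No new bisimulation candidate needs to be constructed, since all the coinductive work has been absorbed into the two preceding lemmas.
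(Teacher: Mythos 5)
Your overall route is exactly the paper's: the corollary is obtained by chaining Lemma~\ref{l:safe:eval} (to get $(s\sep Γ,E[M]) ≳ (s\sep Γ,E'[M])$ for all $Γ$, $s$, $M$) with Lemma~\ref{l:safe:any} instantiated at ${\asymp}={≳}$, and you are right to notice the mismatch that the paper elides, namely that the corollary assumes only $(s\sep E[V])\wicbv(s\sep E'[V])$ while Lemma~\ref{l:safe:eval} is stated with $\icbv^+$. However, your repair of that mismatch rests on a false claim. Two syntactically distinct evaluation contexts can produce the \emph{same} term when filled with a particular value: take $E=\holei{}\,(\lambda x.x)$ and $E'=(\lambda x.x)\,\holei{}$; both are evaluation contexts, $E\neq E'$, yet $E[\lambda x.x]=E'[\lambda x.x]=(\lambda x.x)(\lambda x.x)$. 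So in your second branch you cannot infer that every instance of the hypothesis takes at least one step: for some values $V$ the reduction $(s\sep E[V])\wicbv(s\sep E'[V])$ may be the empty one even though $E\neq E'$, and for such $V$ the $\icbv^+$ premise of Lemma~\ref{l:safe:eval} is simply unavailable. The global case split on whether $E=E'$ is at the wrong granularity, because the zero-step situation is a property of individual pairs $(V,s)$, not of the pair of contexts.

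The correct way to bridge the gap is to observe that the coinductive proof of Lemma~\ref{l:safe:eval} survives the weakening of $\icbv^+$ to $\wicbv$: the candidate relation is unchanged, and in the case $M=V$ with a challenge from the left-hand side, if the hypothesis happens to provide zero steps then $E[V]=E'[V]$, the two configurations are identical, and the challenge is answered by the very same transition, concluding up to reflexivity of $≳$; if it provides at least one step, the argument is the one in the paper. The case of a challenge from the right-hand side already only uses a weak transition $\ws{}$ and is unaffected by the weakening. With that observation, the chaining with Lemma~\ref{l:safe:any} goes through exactly as you describe, and no new bisimulation candidate is needed beyond the one in Lemma~\ref{l:safe:eval}.
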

\begin{proof}
  This is  a consequence of Lemma~\ref{l:safe:eval} and Lemma~\ref{l:safe:any}.
\end{proof}

We use Lemma~\ref{l:safe:any} at various places in the example we cover in
Section~\ref{ss:exa}. For instance we use it to replace a term $ N_1
\defi (λx.E[x])M$ (with $E$ an evaluation context) with $ N_2 \defi
E[M]$, under an arbitrary context. Such a property is delicate to
prove, even for closed terms, because the evaluation of $M$ could
involve reading from a location of the store that itself could contain
occurrences of $N_1$ and $N_2$.


\iflong
\begin{example}\label{e:davide}
  $\newloc\big((λ().+{!}ℓ),(λg.ℓ := +g 0)\big)
  ≈\newloc\big((λ().-{!}ℓ),(λg.ℓ := -g 0)\big)$.
\end{example}
\begin{proof}
  \newcommand{\getp}{\mathsf{get}^+_{ℓ}}%
  \newcommand{\getn}{\mathsf{get}^-_{ℓ}}%
  \newcommand{\setp}{\mathsf{set}^+_{ℓ}}%
  \newcommand{\setn}{\mathsf{set}^-_{ℓ}}%
  \newcommand{\gamp}{Γ^+}%
  \newcommand{\gamn}{Γ^-}%
  We first prove $[ℓ↦+n]\sep \gamp \sim [ℓ↦-n]\sep \gamn$, with:
  \begin{itemize}
  \item $\gamp=\getp,\setp$ (where $\getp=λ().+\get{ℓ}()$ and $\setp=λx.\set{ℓ}(+x)$),
  \item $\gamn=\getn,\setn$ (where $\getn=λ().-\get{ℓ}()$ and $\setn=λx.\set{ℓ}(-x)$).
  \end{itemize}
  by proving the corresponding relation, quantifying over $n$, is a
  bisimulation.
  By the $\term$ up-to technique, we have for every context $C$:
  \begin{align}
    \label{e:gampgamn}
    [ℓ↦+n],\gamp,C[\gamp] ~\sim~
    [ℓ↦-n],\gamn,C[\gamn] \enspace.
  \end{align}
  We are interested in the case where $C=λg.\holei 2(g 0)$ and $n=0$.
  We prove now:
  \begin{align}
    \label{e:mm'}
    [ℓ↦0],\getn,(λg.\setn(g 0)) ~≈~ [ℓ↦0],\getn,(λg.\set{ℓ}(-g 0))
  \end{align}
  which is a consequence of Corollary~\ref{c:thunk} with
  $E=\set{ℓ}(-\holei{})$, $M=g 0$ and $C=λg.\holei{}$.
  Combining \eqref{e:gampgamn}, \eqref{e:mm'}, the variant of
  \eqref{e:mm'} where we replaced $-$ with $+$, and the $\weak$
  technique to remove $\setp$ and $\setn$, we obtain
  \begin{align*}
    [ℓ↦0],\getp,(λg.\set{ℓ}(+g 0)) ~≈~ [ℓ↦0],\getn,(λg.\set{ℓ}(-g 0))
  \end{align*}
  which concludes the proof, since the terms of the statement reduce
  to these configurations.
\end{proof}


\fi

\section{An example}
\label{ss:exa}
             
We conclude by discussing an example from \cite{KoutavasW06}.  It
consists in proving a law between terms of $\LaI$ extended with
integers, operators for integer addition and subtraction, and a
conditional---those constructs are straightforward to accommodate in
the presented framework.  \iflong (We could also encode arithmetic
into the untyped calculus and adapt the example, but it would become
harder to read.)  \fi For readability, we also use the standard
notation for store assignment, dereferencing and sequence: $(ℓ:=M)
\eqdef \set{ℓ}M$, ${!}ℓ \eqdef \get{ℓ}I$, and $M;N \eqdef (λx.N)M$
where $x$ does not appear in $N$.  The two terms are the following
ones:
\begin{itemize}
\item $M\defi λg.\newloc \,ℓ:=0; g(\inc); \mathtt{if}~ {!}ℓ
  ~\mathtt{mod}~ 2 = 0 ~\mathtt{then}~ I ~\mathtt{else}~ Ω$
\item $N\defi λg. g(F);I$,
\end{itemize}
where $\inc \defi λz.ℓ:={!}ℓ+2$, and $F \defi λz.I$.  Intuitively,
those two terms are weakly bisimilar because the location bound by
$\ell$ in the first term will always contain an even number.


We consider two proofs of the example.  In comparison with the proof
in \cite{envbisim}: (i) we handle the original example from
\cite{KoutavasW06}, and (ii) the availability of a broader set of
up-to techniques and the possibility of freely combining them allows
us to work with smaller relations. In the first proof we work up to
the store (through the function $\alloc$) and up to expansion---two
techniques that are not available in \cite{envbisim}. In the second
proof we exploit the up-to-transitivity technique of
Section~\ref{s:back}, which is only sound for strong bisimilarity, to
further reduce the size of the relation we work with.

\paragraph{\bf First proof.}

We first employ Lemma~\ref{l:safe:any} to reach a variant similar to
that of \cite{envbisim}: we make a `thunk' out of the test in $M$, and
we make $N$ look similar. More precisely, let $\tesp \eqdef
λz. \mathtt{if}~ {!}ℓ ~\mathtt{mod}~ 2 = 0 ~\mathtt{then}~ I
~\mathtt{else}~ Ω$, we first prove that
\begin{itemize}
\item $M≈M'\eqdef λg.\newloc \,ℓ:=0; g(\inc); \tesp I$, and
\item $N≈N'\eqdef λg. g(F); F I$.
\end{itemize}
It then suffices to prove that $M'\approx N'$, which we do using the
following relation:
\[
𝓡 \defi \left\{
  \left(s\sep M', (\inc,\tesp)_{ℓ∈\tilde{ℓ}}\right),
  \left(∅\sep N', (F,F)_{ℓ∈\tilde{ℓ}}\right)
    ~\st~ ∀ℓ∈\tilde{ℓ},~~ s(ℓ) \mbox{ is even}
  \right\} \enspace.
\]
The initial pair of terms is generalised by adding any number of
private locations. Indeed~$M'$ creates a new location when applied,
and its argument can have occurrences of $M'$ that create locations of
their own. Relation $ 𝓡$ is a weak bisimulation up to $\alloc$,
$\term$ and expansion.
  We write $(s\sep Γ_{\tl})$ for the left-hand side of a pair in $𝓡$ and
  $(∅\sep Δ_{\tl})$ for the right-hand side.
  
  Consider a transition $1,C,\cod(r)$ from $M'$ and $N'$.  We write below
  $Γ'$ for $Γ_{\tl},\getset{r}$ and $Δ'$ for $Δ_{\tl},\getset{r}$.
  \begin{itemize}
  \item $(s\sep Γ_{\tl}) ~\arrR{1,C,\cod(r)}~ (s⊎r[Γ']\sep Γ',\newloc \,ℓ:=0; C[Γ'](\inc);\tesp I)$
  \item $(∅\sep Δ_{\tl}) ~\arrR{1,C,\cod(r)}~ (r[Δ']\sep Δ',C[Δ'](F);F I)$
  \end{itemize}
  In the first line, we make the configuration run two $τ$
  transitions, so that $νℓ$ and $ℓ:=0$ get executed.  Now we have a
  new store $s' = s⊎(ℓ↦0)$ (as $s'(ℓ)$ is even,  we remain within
  the bisimulation candidate).

Now the main term is $C[Γ'](\inc);\tesp I)$, which can
  be rewritten to $D[Γ_{\tl⋅ℓ},$ $\getset{r}]$ for some context $D$.  On
  the right-hand side $C[Δ'](F);F I$ can be rewritten
to   $D[Δ_{\tl⋅ℓ},\getset{r}]$ as well.  By construction $(s'\sep Γ_{\tl⋅ℓ})
  \R (∅\sep Δ_{\tl⋅ℓ})$ hence  
$$(s'⊎r[Γ']\sep Γ_{\tl⋅ℓ},  \getset{r}) \; \;
\mathrel{\alloc(𝓡)}
\; \;
  (r[Δ']\sep Δ_{\tl⋅ℓ},\getset{r}) \; .$$  Now we first apply $\term$ with context
  $D$, then weakening $\weak$ to remove $\inc$ and $\tesp$
  which do not appear outside of $D$,
  and we thus obtain
 the required pair.

 Having handled $M'$ and $N'$, we look at a transition $i,C,\cod(r)$
 coming from some $\inc$ (and $F$ on the other side). On the
 right-hand side, $(\emptyset ⊎ r[Δ']\sep F(C[Δ']))$ immediately
 $\icbv$-reduces to $(\emptyset ⊎ r[Δ']\sep I)$, discarding its
 argument. On the left-hand side, a few more steps are necessary to
 reach $I$: $(s ⊎ r[Γ']\sep \inc(C[Γ']))$ also discards its argument
 to become first $(s ⊎ r[Γ']\sep \set{ℓ}(\get{ℓ}+2))$ and then, after
 three steps, $(s' ⊎ r[Γ']\sep I)$, with $s'≜s[ℓ\mapsto s(ℓ)+2]$.
 Since $(s';Γ_{\tl}) \R (\emptyset;Δ_{\tl})$, it is enough to apply up
 to $\alloc$ (adding $r[Γ']$), $\term$ (adding $I$), and expansion
 (accounting for the extra $\tau$ steps of the left-hand side), to
 conclude. Adapting the terms accordingly, the reasoning for $\tesp$
 is the same: $(s ⊎ r[Γ']\sep \tesp(C[Γ']))$ reduces to
 $(s ⊎ r[Γ']\sep I)$ since $s(ℓ)$ is even.

  %
  %

\paragraph{\bf Second proof.}

We first preprocess the terms using Lemma~\ref{l:safe:any}, to add
a few artificial internal steps to $N$, so that we can carry out the
reminder of the proof using strong bisimilarity, which enjoys more
up-to techniques than weak bisimilarity:
\begin{itemize}
\item $M≈M'\eqdef λg.\newloc \,ℓ:=0; g(\inc); \tesp I$,
\item $N≈N''\eqdef λg.I;I;g(\incz); \tesz I$.
\end{itemize}
where $\incz$ and $\tesz$ are pure functions that return $I$ on any input, taking the
same number of internal steps as $\inc$ and $\tes$.  We show that
$M'\sim N''$ by proving that the following relation $𝓢$ is a strong
bisimulation \emph{up to unfolding, store, weakening, strengthening,
  transitivity and context} (a technique unsound in the weak case):
\[
𝓢
\defi \{((∅;M'),(∅;N''))\} ~∪~ \left\{
  \left(
  \left(ℓ↦2 n; \inc,\tes \right),
  \left(∅;\incz, \tesz \right)
  \right)
  ~\forall~ n ∈ \mathbb N
\right\}
\]
This relation uses only one location; it is the union of the singleton
relation $\{((∅;M'),(∅;N''))\}$ and the relation relating
$\left(ℓ↦2 n; \inc,\tes \right)$ to $\left(∅; \incz, \tesz \right)$ for
every integer that can be stored at that location. In the diagram-chasing
arguments for $\S$, essentially a pair of derivatives is proved to be
related under the function
\[ \strongb∘\strongb∘\uptotrans ∘ (\env∪\alloc∪\term∪\weak)^ω \] where
$\uptotrans : {𝓡} ↦ {\starred{𝓡}}$ is the reflexive-transitive closure
function.

  This up-to technique, unsound in the weak case (transitivity is unsound), is powerful enough
  to make the bisimulation considerably  smaller.  Proving that
  the second member of~$𝓢$ progresses to itself (up to $\alloc$) is
  straightforward.  We focus on the following transitions from $M'$ and $N''$:
  \[\def\arraystretch{1.3}
  \begin{array}{c@{}l@{}c@{}l@{}c@{}c@{}c@{}l}   
    (∅,M')  \arrR{1,C,\cod(r)} \,\, &(~r[Γ]\sep ~&Γ,\,&\newloc \,ℓ:=0;\, &C[Γ]&(\inc); & \tes I &~) \eqdef H_1 \\
    (∅,N'') \arrR{1,C,\cod(r)} \,\, &(~r[Δ]\sep ~&Δ,\,& I;I; &C[Δ]&(\incz);& \tesz I &~) \eqdef H_2 
  \end{array}\]
  where $Γ=M',\getset{r}$ and $Δ=N'',\getset{r}$.  We use $\strongb$
  as an up-to technique\footnote{If $\ts{τ}$ is deterministic then $(\ts{τ}𝓡\rts{τ}) ⊆ \strongb(𝓡)$.} twice
so   to run two steps of reduction on both sides:
\[H_1\ts{τ}\ts{τ}H_1'  \hskip .5cm \mbox{  and } \hskip .5cm H_2\ts{τ}\ts{τ}H_2' \; . \]  This way we trigger $νℓ$ and $ℓ:=0$ and obtain two
  configurations $H_1'$ and $H_2'$ that  can be related using a few
  up-to functions:
  \begin{align}
    \label{e:a}
    &(\,r[Γ]⊎(ℓ↦0)\sep  ~Γ,\,C[Γ](\inc); \tes I\,) = H_1'\\
    \label{e:b}
    \weak(\term(\alloc(\env(𝓢))))\quad
    &(\,r[Γ]\sep ~Γ,\,C[Γ](\incz); \tesz I\,)\\
    \label{e:c}
    \term(\alloc(𝓢))\quad
    &(\,r[Δ]\sep ~Δ,\,C[Δ](\incz); \tesz I\,)  =H_2' \enspace.
  \end{align}
  We detail below how we go from \eqref{e:a} to \eqref{e:b}. We write
  $Γ_ℓ \eqdef \inc,\tes$ and $Γ_0 \eqdef \incz,\tesz$, and use
  $-$ as a shorthand for the relation mentioned in the line above it:
  \[\begin{array}{r@{}lcr@{}l}
    (ℓ↦0&\sep Γ_ℓ) &𝓢& (∅&\sep Γ_0)\\
    (ℓ↦0&\sep Γ_ℓ,M') &\env(-)& (∅&\sep Γ_0,M')\\
    (r[Γ]⊎ℓ↦0&\sep Γ_ℓ,Γ) &\alloc(-)& (r[Γ]&\sep Γ_0,Γ)\\
    (r[Γ]⊎ℓ↦0&\sep Γ_ℓ,Γ,C[Γ](\inc);\tes I)) &\term(-)& (r[Γ]&\sep Γ_0,Γ,C[Γ](\incz);\tesz I)\\
    (r[Γ]⊎ℓ↦0&\sep \phantom{Γ_ℓ,\,}Γ,C[Γ](\inc);\tes I)) &\weak(-)& (r[Γ]&\sep \phantom{Γ_0,\,}Γ,C[Γ](\incz);\tesz I)\\
  \end{array}\]
  Going from \eqref{e:b} to \eqref{e:c} is easier:
  \[\begin{array}{lr@{}ll}
    (∅\sep M') ~~&&𝓢&~(∅\sep N'')\\
    (r[Γ]\sep Γ) ~~&\alloc(&-)&~(r[Δ]\sep Δ)\\
    (r[Γ]\sep Γ, C[Γ](\incz);\tesz I) ~~&\term(&-)&~(r[Δ]\sep Δ, C[Δ](\incz);\tesz I)
  \end{array}\]
  We have thus proved that $H_1 \mathrel{f(𝓢)} H_2$ where $f= \strongb ∘
  \strongb ∘ \uptotrans ∘ (\env ∪ \alloc ∪ \term ∪ \weak)^ω$
  is below $t$, and hence $𝓢 \relprogress{} f(𝓢)∪\alloc(𝓢)$.
  To conclude, $𝓢$, as a strong bisimulation up to
  (unfolding, store, weakening, strengthening, transitivity and
  context), is included in $\sim$.

\medskip

The difference between the relation $\R$ in the first proof and the
proofs in \cite{KoutavasW06,envbisim} is that $\R$ only requires
locations that appear free in the tested terms; in contrast, the
relations in \cite{KoutavasW06,envbisim} need to be closed under all
possible extensions of the store, including extensions in which
related locations are mapped onto arbitrary context-closures of
related values. We avoid this thanks to the up-to-store function, by
discarding these extensions immediately after their introduction. The
reason why, both in \cite{KoutavasW06,envbisim} and in the first proof
above, several locations have to be considered is that, with
bisimulations akin to environmental bisimulation, the input for a
function is built using the values that occur in the candidate
relation. In our example, this means that the input for a function can
be a context-closure of $M$ and $N$; hence uses of the input may cause
several evaluations of $M$ and $N$, each of which generates a new
location. In this respect, it is surprising that our second proof
avoids multiple allocations (the candidate relation $𝓢$ only mentions
one location). This is due to the massive combination of up-to
techniques whereby, whenever a new location is created, a double
application of up-to-context (the `double' is obtained from
up-to-transitivity) together with some administrative work (given by
the other techniques) allows us to absorb the location.


\section{Conclusions}
\label{s:conc}
             
In this paper we have studied how to transport the rich theory of `up-to' techniques that
exists for plain (first-order) LTSs and bisimilarity, and rooted in fixed-point theory,
onto languages whose LTS and bisimilarity go beyond the first-order format.  For this
we have considered the $\pi$-calculus, the pure call-by-name $\lambda$-calculus, and a
call-by-value $\lambda$-calculus extended with imperative features.

The approach that we have proposed exhibits fully abstract translations of
the LTSs and bisimilarities of these languages onto first-order LTSs. In this way,
one can directly reuse the large corpus of up-to techniques that are available on
first-order LTSs.  The only exception to this regards basic up-to techniques that are
specific to the new languages, such as up-to-context.  Most important, the approach allows
one to take arbitrarily complex combinations of up-to techniques, whose soundness is
guaranteed by those of the corresponding first-order techniques. Direct proofs of such
combinations, on the source languages, can be long and delicate.  We have given examples
of uses of such combinations.  In particular, the second proof of the example dealt with in
Section~\ref{ss:exa} is, in our opinion, a striking example of the benefits of the up-to
techniques. It is hard to imagine how to the example could be handled without up-to
techniques; compared to similar proofs in the literature and discussed in that section~--- all
of which make use of some forms of up-to techniques~--- the relation employed and the
proof work needed have been significantly reduced due to the large set of up-to techniques
referred to.

The work in this paper can be a further motivation for the development of a comprehensive
formalised library of up-to techniques for first-order LTSs.  Using the approach proposed
in the paper the library could then be applied to a wide range of languages.
Other directions for future work include 
testing the  approach on other languages, for instance languages for mobility with explicit
notions of location (see \cite{Cas01} for a survey), or testing it on other forms of bisimulation 
(e.g., open bisimulation).   
By the time the revision of this paper has been completed, 
one such application of the approach has been made, 
for a calculus with delimited-control operators with dynamic prompt
       generation~\cite{Lenglet:environmentalDelimitedControl}.

We would also like to see if the approach proposed in this paper, based on translations to
first-order models, could be adapted to handle 
the theory of 
unique solutions of equations and contractions \cite{Sangiorgi17,DurierHS17}, which allows
one to implicitly use  up-to techniques, with the goal of avoiding the development of  theories of
equations or contractions that are specific to a particular language or bisimulation.


\section*{Acknowledgements}
We are delighted to be able to contribute to the Festschrift in  honour of Jos Baeten.
We
would like to take this opportunity for heartily thanking him  for having been such an
inspiring figure, both for all his many technical and scientific  contributions and for
his work in favour of the concurrency theory community.

We would like also to thanks the anonymous referees for many useful comments. 
Pous was supported by the European Research
Council (ERC) under the European Union’s Horizon 2020 programme
(CoVeCe, grant agreement No 678157).
Sangiorgi  acknowledges support from the
 MIUR-PRIN project `Analysis of
Program Analyses' (ASPRA, ID: \linebreak[4]\texttt{201784YSZ5\_004}), and from the 
European Research
Council (ERC) Grant DLV-818616 DIAPASoN.

\bibliographystyle{alpha}
\bibliography{references}


\end{document}
